\setlist[enumerate]{noitemsep,partopsep=0pt,parsep=0pt}
\setlist[itemize]{noitemsep,partopsep=0pt,parsep=0pt}
\newtheorem{thm}{Theorem}
\newtheorem{lem}{Lemma}
\newtheorem{prop}{Proposition}
\theoremstyle{remark}
\newtheorem{rem}{Remark}
\numberwithin{equation}{section}
\DeclareMathOperator{\tr}{tr}
\DeclareMathOperator{\rank}{rank}
\DeclareMathOperator{\spec}{spec}
\DeclareMathOperator{\supp}{supp}
\DeclareMathOperator{\CPTP}{CPTP}
\DeclareMathOperator*{\argmin}{\arg\min}
\DeclareMathOperator*{\argmax}{\arg\max}
\DeclareMathOperator{\sym}{sym}
\newcommand{\ket}[1]{|#1\rangle}
\newcommand{\bra}[1]{\langle#1|}
\newcommand{\ketbra}[2]{\ket{#1}\!\bra{#2}}
\newcommand{\proj}[1]{\ketbra{#1}{#1}}
\newcommand{\braket}[2]{\langle#1|#2\rangle}
\newcommand{\brak}[1]{\braket{#1}{#1}}
\begin{document}

\title{Doubly minimized Petz and sandwiched R\'enyi mutual information: Properties}
\author{Laura Burri}
\affiliation{Institute for Theoretical Physics, ETH Zurich, Zurich, Switzerland}

\begin{abstract}
The doubly minimized Petz R\'enyi mutual information of order $\alpha$ is defined as the minimization of the Petz divergence of order $\alpha$ of a fixed bipartite quantum state relative to any product state. The doubly minimized sandwiched R\'enyi mutual information is defined analogously using the sandwiched divergence in place of the Petz divergence. 
In this work, we establish several properties of these two types of R\'enyi mutual information. 
In particular, for the Petz case, we prove additivity for $\alpha\in [1/2,2]$. 
For the sandwiched case, we establish a novel duality relation for $\alpha\in [2/3,\infty]$ via Sion's minimax theorem, and we subsequently use this duality relation to prove additivity for the same range of $\alpha$. Previously, additivity for the sandwiched case was known only for $\alpha\in [1,\infty]$, but it had been conjectured to hold for $\alpha\in [1/2,\infty]$.
\end{abstract}

\maketitle

\section{Introduction}\label{sec:1}

\subsection{Motivation and background}
The relationship between operational tasks and information measures is a central topic of study in information theory. 
This paper is concerned with R\'enyi generalizations of the mutual information. 
We will now elucidate their relation to error exponents of certain binary discrimination tasks in classical information theory.

\paragraph*{Classical mutual information.} The classical mutual information can be expressed in terms of the Kullback-Leibler divergence $D$ in several ways~\cite[Proposition~8]{lapidoth2019two}:
\begin{equation}\label{eq:shannon-info}
I(X:Y)_P
=D(P_{XY}\|P_XP_Y)
=\inf_{R_Y} D(P_{XY}\|P_X R_Y)
=\inf_{Q_X,R_Y} D(P_{XY}\|Q_X R_Y).
\end{equation}
Here, $X$ and $Y$ are random variables over finite alphabets $\mathcal{X}$ and $\mathcal{Y}$, $P_{XY}$ is a joint probability mass function (PMF) with marginals $P_X$ and $P_Y$, 
and the minimizations are over PMFs $Q_X$ and $R_Y$. 
Based on the R\'enyi divergence $D_\alpha$, these expressions induce the following types of R\'enyi mutual information (RMI) for $\alpha\in [0,\infty]$.
\begin{align}
I_\alpha^{\uparrow\uparrow}(X:Y)_P&\coloneqq D_\alpha (P_{XY}\| P_X P_Y)
\label{eq:i-classical-0}
\\
I_\alpha^{\uparrow\downarrow}(X:Y)_P&\coloneqq \inf_{R_Y}D_\alpha (P_{XY}\| P_X R_Y)
\label{eq:i-classical-1}
\\
I_\alpha^{\downarrow\downarrow}(X:Y)_P&\coloneqq \inf_{Q_X,R_Y}D_\alpha(P_{XY}\| Q_X R_Y)
\label{eq:i-classical-2}
\end{align}
We call them the \emph{non-minimized RMI}, the \emph{singly minimized RMI}, and the \emph{doubly minimized RMI}, respectively. 
For $\alpha=1$, all three RMIs are identical to the classical mutual information~\cite{lapidoth2019two}. 
The singly minimized RMI was originally introduced in~\cite{sibson1969information} and its properties were subsequently studied in~\cite{csiszar1995generalized,ho2015convexity,verdu2015alpha,verdu2021error,esposito2022sibsons,esposito2024sibsons}. 
The doubly minimized RMI was introduced in~\cite{tomamichel2018operational,lapidoth2019two}. 
All three RMIs possess an operational interpretation in binary quantum state discrimination. 
More specifically, the direct exponents of certain binary discrimination problems are determined by the non-minimized RMIs~\cite{hoeffding1965asymptotically,hoeffding1965probabilities,csiszar1971error,audenaert2008asymptotic,blahut1974hypothesis} of order $\alpha\in [0,1]$, 
the singly minimized RMIs of order $\alpha\in [0,1]$~\cite{tomamichel2018operational}, 
and the doubly minimized RMIs of order $\alpha\in [\frac{1}{2},1]$~\cite{tomamichel2018operational}, respectively. 
Similarly, the strong converse exponents for these discrimination problems are determined by the non-minimized RMIs~\cite{blahut1974hypothesis,han1989strong,nakagawa1993converse}, the singly minimized RMIs~\cite{tomamichel2018operational}, and the doubly minimized RMIs~\cite{tomamichel2018operational} of order $\alpha\in [1,\infty]$, respectively.

\paragraph*{Quantum mutual information.} Analogous to~\eqref{eq:shannon-info}, the quantum mutual information can be expressed in terms of the quantum relative entropy $D$ in several ways~\cite{gupta2014multiplicativity,hayashi2016correlation,mckinlay2020decomposition}:
\begin{equation}
I(A:B)_\rho
=D (\rho_{AB}\| \rho_A\otimes \rho_B)
=\inf_{\tau_B}D (\rho_{AB}\| \rho_A\otimes \tau_B)
=\inf_{\sigma_A,\tau_B} D (\rho_{AB}\| \sigma_A\otimes \tau_B).
\end{equation}
Here, $\rho_{AB}$ is a bipartite quantum state on finite-dimensional Hilbert spaces $A$ and $B$ with marginal states $\rho_A$ and $\rho_B$, and the minimizations are over quantum states $\sigma_A$ and $\tau_B$. 
From the perspective of quantum information theory, it is desirable to extend the RMIs in~\eqref{eq:i-classical-0}--\eqref{eq:i-classical-2} from the classical to the quantum domain. 
However, there exist several quantum generalizations of the R\'enyi divergence. 
Commonly used are, for example, the Petz divergence~\cite{petz1986quasi} and the sandwiched divergence~\cite{wilde2014strong,mueller2013quantum}. 
Consequently, various generalizations of the classical RMIs to the quantum setting are conceivable and discerning which of these generalizations are operationally relevant is not straightforward. 

This paper examines generalizations based on the Petz and the sandwiched divergence, as these choices of divergence turn out to be suitable for generalizing the aforementioned results in binary quantum state discrimination from the classical to the quantum setting. 
This generalization has been established in previous work for the non-minimized~\cite{hayashi2007error,nagaoka2006converse,audenaert2008asymptotic,mosonyi2014quantum,mosonyi2015two} and the singly minimized~\cite{hayashi2016correlation} case, and we will show the corresponding result for the doubly minimized case in forthcoming work~\cite{burri2025prmisrmi2}. 

Using the Petz divergence $D_\alpha$, we define the following types of Petz R\'enyi mutual information (PRMI) for $\alpha\in [0,\infty)$.
\begin{align}
I_\alpha^{\uparrow\uparrow}(A:B)_\rho &\coloneqq D_\alpha (\rho_{AB}\| \rho_A\otimes \rho_B)
\label{eq:i-quantum-0}\\
I_\alpha^{\uparrow\downarrow}(A:B)_\rho &\coloneqq \inf_{\tau_B} D_\alpha (\rho_{AB}\| \rho_A\otimes \tau_B)
\\
I_\alpha^{\downarrow\downarrow}(A:B)_\rho &\coloneqq \inf_{\sigma_A,\tau_B} D_\alpha (\rho_{AB}\| \sigma_A\otimes \tau_B)
\label{eq:i-quantum-2}
\end{align}
We call them the \emph{non-minimized PRMI}, the \emph{singly minimized PRMI}, and the \emph{doubly minimized PRMI}, respectively. 
For $\alpha=1$, all three PRMIs are identical to the quantum mutual information~\cite{gupta2014multiplicativity,hayashi2016correlation}. 
The non-minimized PRMI has previously been examined in the context of quantum field theory~\cite{kudlerflam2023renyi,kudlerflam2023renyi1}, 
and has been applied to classical-quantum channel coding~\cite{cheng2023simple}, 
convex splitting~\cite{cheng2023tight}, 
and quantum soft covering~\cite{cheng2024error}. 
The singly minimized PRMI has been studied with regard to its general properties in~\cite{gupta2014multiplicativity,hayashi2016correlation}, 
and has appeared in a random coding bound in~\cite{cheng2025errorexponentsquantumpacking}. 
The doubly minimized PRMI has been mentioned in previous work on binary quantum state discrimination~\cite{berta2021composite} and has been investigated for the case $\alpha=0$ in~\cite{zhai2023chain}. 
However, none of these works has studied its general properties in detail. 

Similarly, using the sandwiched divergence $\widetilde{D}_\alpha$, we define the following types of sandwiched R\'enyi mutual information (SRMI) for $\alpha\in (0,\infty]$.
\begin{align}
\widetilde{I}_\alpha^{\uparrow\uparrow}(A:B)_\rho &\coloneqq \widetilde{D}_\alpha (\rho_{AB}\| \rho_A\otimes \rho_B)
\label{eq:i-quantum-00}\\
\widetilde{I}_\alpha^{\uparrow\downarrow}(A:B)_\rho &\coloneqq \inf_{\tau_B} \widetilde{D}_\alpha (\rho_{AB}\| \rho_A\otimes \tau_B)
\\
\widetilde{I}_\alpha^{\downarrow\downarrow}(A:B)_\rho &\coloneqq \inf_{\sigma_A,\tau_B} \widetilde{D}_\alpha (\rho_{AB}\| \sigma_A\otimes \tau_B)
\label{eq:i-quantum-22}
\end{align}
We call them the \emph{non-minimized SRMI}, 
the \emph{singly minimized SRMI}, 
and the \emph{doubly minimized SRMI}, respectively. 
For $\alpha=1$, all three SRMIs are identical to the quantum mutual information~\cite{gupta2014multiplicativity,hayashi2016correlation}. 
For $\alpha=\infty$, (smoothed versions of) these SRMIs have been examined in~\cite{berta2011quantum,datta2013oneshot,ciganovic2014smooth,berta2014identifying,anshu2017quantum,scalet2021computablerenyi}.  
For general R\'enyi order $\alpha$, 
the singly minimized SRMI has been first studied in~\cite{beigi2013sandwiched} 
and has been applied in the context of 
state redistribution~\cite{leditzky2016strong}, 
binary quantum state discrimination~\cite{hayashi2016correlation},
channel coding~\cite{gupta2014multiplicativity,mosonyi2015coding,mosonyi2017strong}, 
quantum information decoupling~\cite{li2022reliability},  
convex splitting~\cite{cheng2023quantum}, and 
quantum soft covering~\cite{cheng2024error}. 
The doubly minimized SRMI has been 
examined with regard to its relation to conditional entropies~\cite{mckinlay2020decomposition} and continuity bounds~\cite{bluhm2023unified}, 
and has found applications in 
quantum information decoupling~\cite{li2024operational} and convex splitting~\cite{cheng2023tight}. 
Of the aforementioned works,~\cite{cheng2023tight} established general properties of the doubly minimized SRMI.
In particular,~\cite{cheng2023tight} proved the additivity of the doubly minimized SRMI for $\alpha\in (1,\infty)$.

\subsection{Overview of results}
In this paper, we initiate the study of properties of the doubly minimized PRMI and we deepen the study of properties of the doubly minimized SRMI. 
Our main findings for the doubly minimized PRMI are stated in Theorems~\ref{thm:convexity} and~\ref{thm:prmi2}, and those for the doubly minimized SRMI appear in Theorem~\ref{thm:srmi2}. 
We will now give an overview of these results. 

\emph{Properties of the doubly minimized PRMI.} 
In Theorem~\ref{thm:convexity}, we show that the minimization problem which defines the doubly minimized PRMI, see~\eqref{eq:i-quantum-2}, is jointly convex in $\sigma_A$ and $\tau_B$ for any $\alpha\in [\frac{1}{2},1)$. 
A crucial component of the proof of this theorem is an operator inequality (Lemma~\ref{lem:cs}), which follows from the subadditivity of the geometric operator mean.

In Theorem~\ref{thm:prmi2}~(a)--(v), we enumerate several properties pertaining to the doubly minimized PRMI. 
Of these, the following are particularly important:
\begin{enumerate}
\item[(d)] additivity for $\alpha\in [\frac{1}{2},2]$,
\item[(j)] uniqueness of the minimizer for $\alpha\in (\frac{1}{2},1]$,
\item[(k)] a fixed-point property of minimizers for $\alpha\in (\frac{1}{2},2]$, 
\item[(l)] asymptotic optimality of the universal permutation invariant state for $\alpha\in [0,2]$,  
\item[(p)] continuous differentiability in $\alpha$ of $I_\alpha^{\downarrow\downarrow}(A:B)_\rho$ on $\alpha\in (\frac{1}{2},2)$, and 
\item[(q)] convexity in $\alpha$ of $(\alpha-1)I_\alpha^{\downarrow\downarrow}(A:B)_\rho$ on $\alpha\in [0,2]$.
\end{enumerate}

Of these properties, we will prove~(j) first. 
According to~(j), the minimizer of the optimization problem in~\eqref{eq:i-quantum-2}, which defines the doubly minimized PRMI, is unique. 
That is, for any $\alpha\in (\frac{1}{2},1]$, there exists a unique pair of quantum states $(\sigma_A,\tau_B)$ such that 
$D_\alpha(\rho_{AB}\| \sigma_A\otimes \tau_B)=I_\alpha^{\downarrow\downarrow}(A:B)_\rho$. 
Previously, the uniqueness of the minimizer of the classical optimization problem that defines the doubly minimized RMI has been established in~\cite[Lemma~20]{lapidoth2019two} for any $\alpha\in (\frac{1}{2},\infty)$. 
Similar to their proof for the classical setting, we show that the uniqueness of the minimizer follows from the joint convexity of the optimization problem (Theorem~\ref{thm:convexity}). 

Then, we will prove~(k). 
The item~(k) characterizes minimizers $\sigma_A$ for the optimization problem~\eqref{eq:i-quantum-2} in terms of a fixed-point property. 
The proof of~(k) is based on an extension of a lemma from previous work~\cite[Lemma~22]{hayashi2016correlation} that asserts a general equivalence of optimizers and fixed-points (Lemma~\ref{lem:fixed_point}). 

Additivity on product states~(d) is then a direct consequence of~(k). 
The additivity of the doubly minimized PRMI has been previously suggested in~\cite[Section~3.3]{berta2021composite}, but no detailed proof was given. 

The central assertion in~(l) is that 
\begin{align}
I_\alpha^{\downarrow\downarrow}(A:B)_\rho=\lim_{n\rightarrow\infty}\frac{1}{n} D_\alpha(\rho_{AB}^{\otimes n}\| \omega_{A^n}^n\otimes\omega_{B^n}^n), 
\end{align}
where $\omega_{A^n}^n$ and $\omega_{B^n}^n$ denote certain universal permutation invariant states that are independent of $\alpha$ and will be defined in Section~\ref{ssec:permutation}. 
This equality shows that the minimization over product states on $AB$, which is present in the definition of $I_\alpha^{\downarrow\downarrow}(A:B)_\rho$ in~\eqref{eq:i-quantum-2}, can be circumvented by comparing multiple copies of $\rho_{AB}$ to the tensor product of two universal permutation invariant states. 
Thus,~(l) offers a qualitatively different perspective on the doubly minimized PRMI, which can be useful in applications. 
The proof of~(l) for $\alpha\in [\frac{1}{2},2]$ makes use of additivity~(d) and proceeds in a similar manner to an analogous proof~\cite{hayashi2016correlation} for the singly minimized SRMI. 
In the case where $\alpha\in [0,\frac{1}{2})$, a different proof method is employed that does not require additivity. 

The convexity property~(q) follows directly from~(l). 
Subsequently, we prove~(p) using~(j) and~(q). 

\emph{Properties of the doubly minimized SRMI.} 
In Theorem~\ref{thm:srmi2}~(a)--(s), we enumerate several properties of the doubly minimized SRMI. 
Of these, the following are particularly important: 
\begin{enumerate}
\item[(d)] additivity for $\alpha\in [\frac{2}{3},\infty]$,
\item[(e)] a novel duality relation for $\alpha\in [\frac{2}{3},\infty]$,
\item[(j)] asymptotic optimality of the universal permutation invariant state for $\alpha\in [\frac{2}{3},\infty]$,
\item[(n)] continuous differentiability in $\alpha$ of $\widetilde{I}_\alpha^{\downarrow\downarrow}(A:B)_\rho$ on $\alpha\in (1,\infty)$, and
\item[(o)] convexity in $\alpha$ of $(\alpha -1)\widetilde{I}_\alpha^{\downarrow\downarrow}(A:B)_\rho$ on $\alpha\in [\frac{2}{3},\infty)$.
\end{enumerate}

Of these properties, we prove the duality relation~(e) first by employing Sion's minimax theorem. 
We then show that additivity~(d) follows directly from~(e). 
Previously, a proof of additivity for $\alpha\in (1,\infty)$ has been given in~\cite{cheng2023tight}, and it has been conjectured~\cite{li2024operational} that additivity might hold for all $\alpha\in [\frac{1}{2},\infty]$. 
Our result in~(d) shows that this conjecture is at least partially true because~(d) extends the known range of additivity to $\alpha\in [\frac{2}{3},\infty]$. 
This extension is of interest for recent work on quantum information decoupling~\cite{li2024operational}, where the regularized doubly minimized SRMI of order $\alpha\in (\frac{1}{2},1)$ occurs. 
Our additivity result implies that the regularization can be omitted if $\alpha\in[\frac{2}{3},1)$, leading to a much simpler expression. 
It is worth noting that our proof of additivity for $\alpha\in [\frac{2}{3},\infty]$ is independent of the proof of additivity for $\alpha\in (1,\infty)$ in~\cite{cheng2023tight} as different proof methods are employed. 
Further clarification on the difference in the proof methods will be provided in Remark~\ref{rem:previous}.

The main assertion in~(j) is that for any $\alpha\in [\frac{2}{3},\infty]$
\begin{equation}\label{eq:j1}
\widetilde{I}_\alpha^{\downarrow\downarrow}(A:B)_\rho
= \lim_{n\rightarrow\infty}\frac{1}{n}\widetilde{D}_\alpha (\rho_{AB}^{\otimes n}\| \omega_{A^n}^n\otimes \omega_{B^n}^n)
= \lim_{n\rightarrow\infty}\frac{1}{n}D_\alpha (\mathcal{P}_{\omega_{A^n}^n\otimes \omega_{B^n}^n}(\rho_{AB}^{\otimes n})\| \omega_{A^n}^n\otimes \omega_{B^n}^n),
\end{equation}
where $\mathcal{P}$ denotes the pinching map. 
The first equality in~\eqref{eq:j1} shows that the universal permutation invariant states are asymptotically optimal for the minimization problem that defines the doubly minimized SRMI. 
According to the second equality in~\eqref{eq:j1}, the doubly minimized SRMI is asymptotically attainable by pinching with respect to the tensor product of two universal permutation invariant states. 
From a qualitative point of view,~\eqref{eq:j1} transforms the minimization problem inherent in the definition of the doubly minimized SRMI into an asymptotic limit, which can be useful for applications. 
The proof of~\eqref{eq:j1} is based on additivity~(d) and proceeds in a similar manner to the proof of an analogous assertion for the singly minimized SRMI in~\cite{hayashi2016correlation}.
The result in~\eqref{eq:j1} from~(j) directly implies the convexity property~(o), which in turn leads to the continuous differentiability property~(n). 

\paragraph*{Related work.} 
As we will show in a forthcoming paper~\cite{burri2025prmisrmi2}, both types of R\'enyi mutual information considered here turn out to be operationally meaningful. 
In particular, we show in~\cite{burri2025prmisrmi2} that the doubly minimized PRMI of order $\alpha\in [\frac{1}{2},1]$ determines the direct exponent, while the doubly minimized SRMI of order $\alpha\in [1,\infty]$ determines the strong converse exponent of a certain binary quantum state discrimination problem. 
The proofs of these statements crucially rely on several of the properties established in the present work.

\paragraph*{Outline.} In Section~\ref{sec:preliminaries} we collect some mathematical preliminaries. 
We first explain our general notation~(\ref{ssec:notation}). 
Subsequently, we state some definitions and properties related to 
permutation invariance~(\ref{ssec:permutation}), 
entropies and divergences~(\ref{ssec:divergence}), 
and several types of R\'enyi mutual information~(\ref{ssec:prmi}). 
In Section~\ref{sec:main} we present our main results (Theorems~\ref{thm:convexity},~\ref{thm:prmi2},~\ref{thm:srmi2}).

\section{Preliminaries}\label{sec:preliminaries}

\subsection{Notation}\label{ssec:notation}
We take ``$\log$'' to refer to the natural logarithm.
The set of natural numbers strictly less than $n\in \mathbb{N}$ is denoted by $[n]\coloneqq \{0,1,\dots, n-1\}$. 

Throughout this work, we restrict ourselves to finite-dimensional Hilbert spaces (over the field $\mathbb{C}$) for simplicity. 
The dimension of a Hilbert space $A$ is denoted by $d_A\coloneqq \dim(A)\in \mathbb{N}_{>0}$. 
The tensor product of two Hilbert spaces $A$ and $B$ is sometimes denoted by $AB$ instead of $A\otimes B$, and $A^{n}\coloneqq A^{\otimes n}$ for any $n\in \mathbb{N}_{>0}$. 
The set of linear maps from $A$ to $B$ is denoted by $\mathcal{L}(A,B)$, and the set of linear maps from $A$ to itself is denoted by $\mathcal{L}(A)\coloneqq \mathcal{L}(A,A)$. 
To facilitate concise notation, identities are occasionally omitted. 
For instance, for $X_A\in\mathcal{L}(A)$, ``$X_A$'' may be interpreted as $X_A\otimes 1_B\in \mathcal{L}(A\otimes B)$. 
The spectrum, kernel, and rank of $X\in \mathcal{L}(A)$ are denoted as $\spec(X),\ker(X),$ and $\rank(X)$, respectively. 
The support of $X\in \mathcal{L}(A)$ is defined as the orthogonal complement of the kernel of $X$, and is denoted by $\supp(X)$.
For $X,Y\in \mathcal{L}(A)$, $X\ll Y$ is true iff $\ker(Y)\subseteq \ker(X)$, and $X\not\ll Y$ is true iff $X\ll Y$ is false. 
For $X,Y\in \mathcal{L}(A)$, $X\perp Y$ is true iff $XY=YX=0$, and $X\not\perp Y$ is true iff $X\perp Y$ is false. 

The adjoint of $X\in \mathcal{L}(A,B)$ with respect to the inner products of $A$ and $B$ is denoted by $X^\dagger \in \mathcal{L}(B,A)$. 
For $X\in \mathcal{L}(A)$, $X\geq 0$ is true iff $X$ is positive semidefinite, 
and $X>0$ is true iff $X$ is positive definite. 
For two self-adjoint operators $X,Y\in \mathcal{L}(A)$, $X\geq Y$ is true iff $X-Y\geq 0$. 
For a positive semidefinite $X\in \mathcal{L}(A)$, $X^p$ is defined for $p\in \mathbb{R}$ by taking the power on the support of $X$. 
In the case where $p=1/2$, the square root symbol is sometimes employed, $\sqrt{X}\coloneqq X^{1/2}$. 
The operator absolute value of $X\in \mathcal{L}(A)$ is defined as $\lvert X\rvert \coloneqq (X^\dagger X)^{1/2}$.
The trace of $X\in \mathcal{L}(A)$ is denoted as $\tr[X]$, and the partial trace over $A$ is denoted as $\tr_A$. 
The Schatten $p$-norm of $X\in \mathcal{L}(A)$ is defined as 
$\|X \|_p\coloneqq \tr[\lvert X\rvert^p]^{1/p}$ for $p\in [1,\infty)$, and as 
$\lVert X\rVert_{\infty}\coloneqq \sqrt{\max (\spec(X^\dagger X))}$ for $p=\infty$. 
The Schatten $p$-quasi-norm is defined as $\|X \|_p\coloneqq \tr[\lvert X\rvert^p]^{1/p}$ for $p\in (0,1)$.

If $X\in \mathcal{L}(A)$ is self-adjoint, then 
the pinching map with respect to $X$ is denoted by 
$\mathcal{P}_{X}:\mathcal{L}(A)\rightarrow\mathcal{L}(A), Y\mapsto \sum_{\lambda\in \spec(X)}P_{\lambda}Y P_{\lambda}$, where $P_\lambda$ denotes the orthogonal projection onto the eigenspace associated with $\lambda$.

If $X,Y\in \mathcal{L}(A)$ are positive definite, then the geometric operator mean~\cite{pusz1975functional,ando1987some,kubo1980means} is defined as 
$X\#  Y\coloneqq X^{\frac{1}{2}}(X^{-\frac{1}{2}}YX^{-\frac{1}{2}})^{\frac{1}{2}}X^{\frac{1}{2}}$.
This definition is extended to the case where $X,Y\in \mathcal{L}(A)$ are positive semidefinite by setting 
$X\#  Y\coloneqq \lim_{\varepsilon\rightarrow 0}(X+\varepsilon 1)\#  (Y+\varepsilon 1)$, where ``$\lim$'' denotes the limit in the strong operator topology.
Since we are only working with finite-dimensional Hilbert spaces, the strong operator topology coincides with the weak operator topology and with the norm topology induced by the operator norm. 
``$\lim$'' can therefore be interpreted as an operator limit with respect to any of these topologies. 
The geometric operator mean is subadditive~\cite{kubo1980means}, i.e., 
$(X\# Y)+(X'\# Y')\leq (X+X')\# (Y+Y')$ 
for all positive semidefinite operators $X,X',Y,Y'\in \mathcal{L}(A)$.

The set of unitary operators on $A$ is denoted by $\mathcal{U}(A)\subseteq \mathcal{L}(A)$. 
The set of quantum states on $A$ is $\mathcal{S}(A)\coloneqq\{\rho\in \mathcal{L}(A):\rho\geq 0,\tr[\rho]=1\}$. 
The set of positive definite quantum states on $A$ is denoted by 
$\mathcal{S}_{>0}(A)\coloneqq \{\rho\in \mathcal{S}(A): \rho>0\}$. 
Moreover, we define the following constrained versions of $\mathcal{S}(A)$ with respect to a self-adjoint $X\in\mathcal{L}(A)$.
\begin{align}
\mathcal{S}_{\not\perp X}(A)
&\coloneqq \{\rho\in \mathcal{S}(A): \rho\not\perp X\}
\\
\mathcal{S}_{\ll X}(A)
&\coloneqq \{\rho\in \mathcal{S}(A): \rho\ll X\}
\\
\mathcal{S}_{\sim X}(A)
&\coloneqq\{\rho\in \mathcal{S}(A): \rho\ll X, X\ll \rho\}
\end{align}
The set of completely positive, trace-preserving linear maps from $\mathcal{L}(A)$ to $\mathcal{L}(B)$ is denoted by $\CPTP(A,B)$.

\subsection{Permutation invariance}\label{ssec:permutation}
The \emph{symmetric group of degree $n\in \mathbb{N}_{>0}$} is denoted by $S_n$.
The unitary operator $U(\pi)_{A^n}\in \mathcal{U}(A^{\otimes n})$ associated with $\pi \in S_n$ is defined by the requirement that
\begin{equation}
U(\pi)_{A^n}\ket{\psi_1}\otimes\dots\otimes\ket{\psi_n}
= \ket{\psi_{\pi^{-1}(1)}}\otimes\dots\otimes\ket{\psi_{\pi^{-1}(n)}}
\qquad\forall \ket{\psi_1},\dots,\ket{\psi_n}\in A.
\end{equation}
The set of \emph{permutation invariant states} is 
\begin{align}
\mathcal{S}_{\sym}(A^n)
\coloneqq \{\rho_{A^n}\in \mathcal{S}(A^{\otimes n}):
U(\pi)_{A^n} \rho_{A^n} U(\pi)_{A^n}^\dagger = \rho_{A^n}\,\forall \pi \in S_n\}.
\end{align}

We will make use of the following construction of a universal permutation invariant state. 
Let $A$ and $A'$ be isomorphic Hilbert spaces of dimension $d_A\in \mathbb{N}_{>0}$. 
The \emph{universal permutation invariant state} is defined for any $n\in \mathbb{N}_{>0}$ as~\cite{renner2006security,christandl2009postselection,hayashi2016correlation}
\begin{equation}
\omega_{A^n}^n \coloneqq \frac{1}{g_{n,d_A}}\tr_{{A'}^n}[(P^{n}_{\sym})_{A^n{A'}^n}],
\qquad\text{where}\qquad
g_{n,d_A}\coloneqq \binom{n+d_A^2-1}{n},
\end{equation}
and $(P^{n}_{\sym})_{A^n{A'}^n}$ denotes the orthogonal projection onto the symmetric subspace of $(AA')^{\otimes n}$. 
The following properties of $\omega_{A^n}^n$ have been established in previous work.

\begin{rem}[Universal permutation invariant state]\label{rem:universal-state}~\cite{renner2006security,christandl2009postselection,hayashi2016correlation}
Let $n\in \mathbb{N}_{>0}$.
Then all of the following hold.
\begin{enumerate}[label=(\alph*)]
\item $\omega_{A^n}^n\in \mathcal{S}_{\sym}(A^{\otimes n})$.
\item $\sigma_{A^n}\leq g_{n,d_A}\omega_{A^n}^n$  for all $\sigma_{A^n}\in \mathcal{S}_{\sym}(A^{\otimes n})$, and $1\leq g_{n,d_A}\leq (n+1)^{d_A^2-1}$. 

As a consequence, $\lim_{n\rightarrow\infty}\frac{1}{n^p}\log g_{n,d_A} = 0$ for any $p\in (0,\infty)$.
\item $\sigma_{A^n}\omega_{A^n}^n=\omega_{A^n}^n\sigma_{A^n}$ 
for all $\sigma_{A^n}\in \mathcal{S}_{\sym}(A^{\otimes n})$.
\item Let $\ket{0}_{AA'}\in AA'$ be an arbitrary but fixed unit vector.
Then
\begin{align}
(P_{\sym}^{n})_{A^n{A'}^n}
=g_{n,d_A}\int_{\mathcal{U}(AA')} \mathrm{d} \mu_H(U)\, (\proj{\sigma(U)}_{AA'})^{\otimes n},
\end{align}
where $\mathrm{d} \mu_H$ denotes the Haar measure on the unitary group $\mathcal{U}(AA')$
normalized so that $\int_{\mathcal{U}(AA')}\mathrm{d}\mu_H(U)=1$,  
and $\ket{\sigma(U)}_{AA'}\coloneqq U\ket{0}_{AA'}$.
As a consequence, 
\begin{align}
\omega_{A^n}^n
=\int_{\mathcal{U}(AA')} \mathrm{d} \mu_H(U)\, \sigma(U)_{A}^{\otimes n},
\end{align}
where $\sigma(U)_A\coloneqq \tr_{A'}[\proj{\sigma(U)}_{AA'}]$.
\item $\lvert \spec(\omega_{A^n}^n) \rvert\leq (n+1)^{d_A-1}$.
\end{enumerate}
\end{rem}

\subsection{Entropies and divergences}\label{ssec:divergence}
The \emph{von Neumann entropy} of $\rho\in \mathcal{S}(A)$ is defined as $H(A)_\rho\coloneqq -\tr[\rho\log \rho]$.
For $\rho\in \mathcal{S}(AB)$, 
the \emph{conditional entropy} of $A$ given $B$ is $H(A|B)_\rho \coloneqq H(AB)_\rho -H(B)_\rho$ and
the \emph{mutual information} between $A$ and $B$ is $I(A:B)_\rho \coloneqq H(A)_\rho+H(B)_\rho-H(AB)_\rho$.
The \emph{R\'enyi entropy (of order $\alpha$)} of $\rho\in \mathcal{S}(A)$ is defined as
$H_\alpha (A)_\rho
\coloneqq\frac{1}{1-\alpha}\log \tr[\rho^\alpha]$
for $\alpha\in (-\infty,1)\cup (1,\infty)$, 
and for $\alpha \in \{1,\infty\}$ as the corresponding limits. 

The \emph{quantum relative entropy} is defined for $\rho\in \mathcal{S}(A)$ and a positive semidefinite $\sigma\in \mathcal{L}(A)$ as
\begin{align}
D(\rho\| \sigma)\coloneqq
\tr[\rho (\log\rho -\log\sigma)]
\end{align}
if $\rho\ll \sigma$ and $D(\rho\| \sigma)\coloneqq \infty$ else.
The mutual information of $\rho_{AB}\in \mathcal{S}(AB)$ can be expressed in terms of the quantum relative entropy in the following ways~\cite{gupta2014multiplicativity,hayashi2016correlation,mckinlay2020decomposition}.
\begin{equation}\label{eq:i1}
I(A:B)_\rho
=D (\rho_{AB}\| \rho_A\otimes \rho_B)
=\inf_{\tau_B\in \mathcal{S}(B)}D (\rho_{AB}\| \rho_A\otimes \tau_B)
=\inf_{\substack{\sigma_A\in \mathcal{S}(A),\\ \tau_B\in \mathcal{S}(B)}}D (\rho_{AB}\| \sigma_A\otimes \tau_B)
\end{equation}
The latter two equalities can be derived based on the non-negativity of the quantum relative entropy~\cite{mckinlay2020decomposition}. 
This argument implies that the corresponding minimizers are uniquely given by $\rho_A$ and $\rho_B$, i.e.,
\begin{align}
\argmin_{\tau_B\in \mathcal{S}(B)}
D(\rho_{AB}\| \rho_A\otimes \tau_B)
&=\{\rho_B\},
\label{eq:i1-argmin-b}\\
\argmin_{(\sigma_A,\tau_B)\in \mathcal{S}(A)\times\mathcal{S}(B)}
D(\rho_{AB}\| \sigma_A\otimes \tau_B)
&=\{(\rho_A,\rho_B)\}.
\label{eq:i1-argmin}
\end{align}
More generally, the argument based on the non-negativity of the quantum relative entropy implies that for any $\sigma_A\in \mathcal{S}(A)$ such that $\rho_A\ll \sigma_A$ 
\begin{equation}\label{eq:min-re-t}
\argmin_{\tau_B\in \mathcal{S}(B)}
D(\rho_{AB}\| \sigma_A\otimes \tau_B)
=\{\rho_B\}.
\end{equation}

The \emph{(quantum) information variance} is defined for $\rho,\sigma\in \mathcal{S}(A)$ as~\cite{tomamichel2013hierarchy,li2014second}
\begin{equation}
V(\rho\| \sigma)\coloneqq \tr[\rho (\log \rho - \log \sigma - D(\rho\| \sigma))^2] 
=\tr[\rho(\log \rho-\log \sigma)^2]-(D(\rho\| \sigma))^2.
\end{equation}
The~\emph{mutual information variance} of $\rho_{AB}\in \mathcal{S}(AB)$ is defined as~\cite{hayashi2016correlation} 
\begin{align}\label{eq:def-variance}
V(A:B)_\rho
&\coloneqq V(\rho_{AB}\| \rho_A\otimes \rho_B)
=\tr[\rho_{AB}(\log \rho_{AB}-\log (\rho_A\otimes\rho_B) -I(A:B)_\rho )^2].
\end{align}

The \emph{Petz (quantum R\'enyi) divergence (of order $\alpha$)} is defined for $\alpha\in (0,1)\cup (1,\infty),\rho\in \mathcal{S}(A)$, and any positive semidefinite $\sigma\in \mathcal{L}(A)$ as~\cite{petz1986quasi}
\begin{equation}
D_\alpha (\rho\| \sigma)\coloneqq\frac{1}{\alpha -1} \log \tr [\rho^\alpha \sigma^{1-\alpha}]
\end{equation}
if $(\alpha <1\land \rho\not\perp\sigma)\lor \rho\ll \sigma$ and 
$D_\alpha (\rho\| \sigma)\coloneqq \infty$ else. 
Moreover, $D_0$ and $D_1$ are defined as the limits of $D_\alpha$ for $\alpha\rightarrow\{0,1\}$.
For $\alpha\in (-\infty,\infty)$, we define
$Q_\alpha (\rho\| \sigma)\coloneqq
\tr [\rho^\alpha \sigma^{1-\alpha}]$
for all positive semidefinite $\rho,\sigma\in \mathcal{L}(A)$. 
We will make use of the following properties of the Petz divergence.
\begin{rem}[Petz divergence]
\label{rem:petz-divergence}
Let $\rho\in \mathcal{S}(A)$ and let $\sigma\in \mathcal{L}(A)$ be positive semidefinite. 
Then all of the following hold.~\cite{petz1986quasi,ohya1993quantum,nussbaum2009chernoff,lin2015investigating,tomamichel2016quantum}
\begin{enumerate}[label=(\alph*)]
\item \emph{Data-processing inequality:}  
$D_\alpha (\rho\| \sigma)\geq D_\alpha (\mathcal{M}(\rho)\| \mathcal{M}(\sigma ))$ 
for any $\mathcal{M}\in \CPTP(A,A')$ 
and all $\alpha \in [0,2]$.
\item \emph{Invariance under isometries:}  
$D_\alpha (V\rho V^\dagger\|V \sigma V^\dagger)=D_\alpha(\rho\| \sigma)$ 
for any isometry $V\in \mathcal{L}(A,A')$ 
and all $\alpha \in [0,\infty)$.
\item \emph{Additivity:} 
Let $\rho'_B\in \mathcal{S}(B)$ and let $\sigma_B'\in \mathcal{L}(B)$ be positive semidefinite. 
Then  
$D_\alpha(\rho_A\otimes \rho_B'\| \sigma_A\otimes \sigma_B')=D_\alpha(\rho_A\| \sigma_A)+D_\alpha(\rho_B'\| \sigma_B')$ 
for all $\alpha\in [0,\infty)$.
\item \emph{Normalization:} $D_\alpha(\rho\| c\sigma)=D_\alpha(\rho\| \sigma)-\log c$
for all $\alpha\in [0,\infty),c\in (0,\infty)$.
\item \emph{Dominance:} 
If $\sigma'\in \mathcal{L}(A)$ is positive semidefinite and such that $\sigma\leq \sigma'$, then
$D_\alpha (\rho\| \sigma)\geq D_\alpha (\rho\| \sigma')$ 
for all $\alpha\in [0,2]$. 
\item \emph{Non-negativity:} 
If $\sigma\in \mathcal{S}(A)$, then
$D_\alpha(\rho\| \sigma)\in [0,\infty]$ for all 
$\alpha\in [0,\infty)$. 
\item \emph{Positive definiteness:} 
Let $\alpha\in (0,\infty)$.
If $\sigma\in \mathcal{S}(A)$, then $D_\alpha(\rho\| \sigma)=0$ iff $\rho=\sigma$.
Furthermore, $D_0(\rho\| \rho )=0$.
\item \emph{R\'enyi order $\alpha =1$:} 
$D_1 (\rho\| \sigma)=D(\rho\|\sigma )$.
\item \emph{Monotonicity in $\alpha$:} 
If $\alpha,\beta\in [0,\infty)$ are such that $\alpha\leq \beta$, then 
$D_\alpha(\rho\| \sigma)\leq D_\beta(\rho\| \sigma)$.
\item \emph{Continuity in $\alpha$:} 
If $\rho\not\perp\sigma$, then the function $[0,1)\rightarrow\mathbb{R},\alpha\mapsto D_\alpha(\rho\| \sigma)$ is continuous.
If $\rho\ll\sigma$, then the function $[0,\infty)\rightarrow\mathbb{R},\alpha\mapsto D_\alpha(\rho\| \sigma)$ is continuous.
\item \emph{Differentiability in $\alpha$:} 
If $\rho\not\perp\sigma$, then the function 
$(0,1)\rightarrow\mathbb{R},\alpha\mapsto D_\alpha(\rho\| \sigma)$ is continuously differentiable.
If $\rho\ll\sigma$, then the function 
$(0,\infty)\rightarrow\mathbb{R},\alpha\mapsto D_\alpha(\rho\| \sigma)$ is continuously differentiable.
Furthermore, if $\rho\ll \sigma$ and $\sigma\in \mathcal{S}(A)$, then 
$\frac{\mathrm{d}}{\mathrm{d} \alpha}D_\alpha (\rho\| \sigma)|_{\alpha=1}=\frac{1}{2} V(\rho\| \sigma)$.
\item \emph{Convexity in $\alpha$:} 
If $\rho\not\perp\sigma$, then the function 
$[0,1)\rightarrow\mathbb{R},\alpha\mapsto (\alpha -1)D_\alpha(\rho\| \sigma)$ is convex. 
If $\rho\ll\sigma$, then the function 
$[0,\infty)\rightarrow\mathbb{R},\alpha\mapsto (\alpha -1)D_\alpha(\rho\| \sigma)$ is convex. 
\end{enumerate}
\end{rem}

The \emph{(quantum) max-divergence} of $\rho\in \mathcal{S}(A)$ relative to a positive semidefinite $\sigma\in \mathcal{L}(A)$ is defined as~\cite{datta2009min}
\begin{align}
D_{\max}(\rho\| \sigma)\coloneqq \inf\{\lambda\in\mathbb{R} : \rho\leq \exp(\lambda)\sigma\}
\end{align}
with the convention that $\inf\emptyset=\infty$.

The \emph{sandwiched (quantum R\'enyi) divergence (of order $\alpha$)} is defined for $\alpha\in (0,1)\cup (1,\infty),\rho\in \mathcal{S}(A)$, and any positive semidefinite $\sigma\in \mathcal{L}(A)$ as~\cite{mueller2013quantum,wilde2014strong}
\begin{equation}
\widetilde{D}_\alpha (\rho\| \sigma)\coloneqq
\frac{1}{\alpha -1} \log \tr [(\sigma^{\frac{1-\alpha}{2\alpha}} \rho \sigma^{\frac{1-\alpha}{2\alpha}})^\alpha]
\end{equation}
if $(\alpha <1\land \rho\not\perp\sigma)\lor \rho\ll \sigma$ and 
$\widetilde{D}_\alpha (\rho\| \sigma)\coloneqq\infty$ else. 
Moreover, $\widetilde{D}_1$ and $\widetilde{D}_\infty$ are defined as the limits of $\widetilde{D}_\alpha$ for $\alpha\rightarrow\{1,\infty\}$. 
For $\alpha\in (0,\infty)$, we define
$\widetilde{Q}_\alpha (\rho\| \sigma)\coloneqq
\tr [(\sigma^{\frac{1-\alpha}{2\alpha}} \rho \sigma^{\frac{1-\alpha}{2\alpha}})^\alpha]$
for all positive semidefinite $\rho,\sigma\in \mathcal{L}(A)$.

\begin{rem}[Sandwiched divergence]
\label{rem:sandwiched-divergence}
~\cite{mueller2013quantum,frank2013monotonicity,beigi2013sandwiched,mosonyi2014quantum,lin2015investigating,hayashi2016correlation,mosonyi2017strong} 
Let $\rho\in \mathcal{S}(A)$ and let $\sigma\in \mathcal{L}(A)$ be positive semidefinite. 
Then all of the following hold.
\begin{enumerate}[label=(\alph*)]
\item \emph{Data-processing inequality:}
$\widetilde{D}_\alpha (\rho\| \sigma)\geq \widetilde{D}_\alpha (\mathcal{M}(\rho)\| \mathcal{M}(\sigma ))$ 
for any $\mathcal{M}\in \CPTP(A,A')$ 
and all $\alpha \in [\frac{1}{2},\infty]$.
\item \emph{Invariance under isometries:} 
$\widetilde{D}_\alpha (V\rho V^\dagger\|V \sigma V^\dagger)=\widetilde{D}_\alpha(\rho\| \sigma)$ 
for any isometry $V\in \mathcal{L}(A,A')$
and all $\alpha \in (0,\infty]$.
\item \emph{Additivity:} 
Let $\rho'_B\in \mathcal{S}(B)$ and let $\sigma_B'\in \mathcal{L}(B)$ be positive semidefinite. 
Then  
$\widetilde{D}_\alpha(\rho_A\otimes \rho_B'\| \sigma_A\otimes \sigma_B')=\widetilde{D}_\alpha(\rho_A\| \sigma_A)+\widetilde{D}_\alpha(\rho_B'\| \sigma_B')$ 
for all $\alpha\in (0,\infty]$.
\item \emph{Normalization:} 
$\widetilde{D}_\alpha(\rho\| c\sigma)=\widetilde{D}_\alpha(\rho\| \sigma)-\log c$
for all $\alpha\in (0,\infty], c\in (0,\infty)$.
\item \emph{Dominance:} 
If $\sigma'\in \mathcal{L}(A)$ is positive semidefinite and such that $\sigma\leq \sigma'$, then
$\widetilde{D}_\alpha (\rho\| \sigma)\geq \widetilde{D}_\alpha (\rho\| \sigma')$ 
for all $\alpha\in [\frac{1}{2},\infty]$. 
\item \emph{Non-negativity:} 
If $\sigma\in \mathcal{S}(A)$, then 
$\widetilde{D}_\alpha(\rho\| \sigma)\in [0,\infty]$ for all $\alpha\in (0,\infty]$.
\item \emph{Positive definiteness:} 
Let $\alpha\in (0,\infty]$. 
If $\sigma\in \mathcal{S}(A)$, then $\widetilde{D}_\alpha(\rho\| \sigma)=0$ iff $\rho=\sigma$.
\item \emph{R\'enyi order $\alpha \in \{1,\infty\}$:} 
$\widetilde{D}_1 (\rho\| \sigma)=D(\rho\|\sigma )$ and 
$\widetilde{D}_\infty (\rho\| \sigma)=D_{\max}(\rho\|\sigma )$.
\item \emph{Monotonicity in $\alpha$:} 
If $\alpha,\beta\in (0,\infty]$ are such that $\alpha\leq \beta$, then 
$\widetilde{D}_\alpha(\rho\| \sigma)\leq \widetilde{D}_\beta(\rho\| \sigma)$.
\item \emph{Continuity in $\alpha$:} 
If $\rho\not\perp\sigma$, then the function 
$(0,1)\rightarrow\mathbb{R},\alpha\mapsto \widetilde{D}_\alpha(\rho\| \sigma)$ is continuous.
If $\rho\ll\sigma$, then the function 
$(0,\infty)\rightarrow\mathbb{R},\alpha\mapsto \widetilde{D}_\alpha(\rho\| \sigma)$ is continuous and  
$\lim_{\alpha\rightarrow\infty}\widetilde{D}_\alpha(\rho\| \sigma)=\widetilde{D}_\infty (\rho\| \sigma)$.
\item \emph{Differentiability in $\alpha$:} 
If $\rho\ll\sigma$, then the function 
$(0,\infty)\rightarrow\mathbb{R},\alpha\mapsto \widetilde{D}_\alpha(\rho\| \sigma)$ is continuously differentiable.
Moreover, if $\rho\ll \sigma$ and $\sigma\in \mathcal{S}(A)$, then 
$\frac{\mathrm{d}}{\mathrm{d}\alpha}\widetilde{D}_\alpha (\rho\| \sigma)|_{\alpha=1}=\frac{1}{2} V(\rho\| \sigma)$.
\item \emph{Convexity in $\alpha$:} 
If $\rho\ll \sigma$, then the function 
$(0,\infty)\rightarrow \mathbb{R},\alpha\mapsto (\alpha-1)\widetilde{D}_\alpha(\rho\| \sigma)$ is convex.
\item \emph{Commuting case:} 
If $\rho\sigma=\sigma\rho$, then 
$\widetilde{D}_\alpha(\rho\| \sigma)=D_\alpha(\rho\| \sigma)$ for all $\alpha\in (0,\infty)$.
\end{enumerate}
\end{rem}

By the Araki-Lieb-Thirring inequality~\cite{lieb1991inequalities,araki1990inequality,bhatia1996matrix}, we have 
for any positive semidefinite $\rho,\sigma\in \mathcal{L}(A)$
\begin{equation}\label{eq:alt}
\widetilde{Q}_\alpha (\rho\| \sigma)\geq Q_\alpha (\rho\| \sigma)
\quad\text{if}\quad\alpha\in (0,1],
\qquad
\widetilde{Q}_\alpha (\rho\| \sigma)\leq Q_\alpha (\rho\| \sigma)
\quad\text{if}\quad\alpha\in [1,\infty).
\end{equation}

\subsection{Doubly minimized Petz and sandwiched R\'enyi mutual information}\label{ssec:prmi} 
Let $\rho_{AB}\in \mathcal{S}(AB)$. 
As mentioned in the introduction, we define the \emph{doubly minimized PRMI (of order $\alpha$)} for $\alpha\in [0,\infty)$ as
\begin{align}
I_\alpha^{\downarrow\downarrow} (A:B)_\rho 
&\coloneqq \inf_{\substack{\sigma_A\in \mathcal{S}(A),\\ \tau_B\in \mathcal{S}(B)}} D_\alpha (\rho_{AB}\| \sigma_A\otimes \tau_B ).
\label{eq:prmi2}
\end{align}
Similarly, we define the \emph{doubly minimized SRMI (of order $\alpha$)} 
for $\alpha\in (0,\infty]$ as
\begin{align}
\widetilde{I}^{\downarrow\downarrow}_\alpha (A:B)_\rho 
&\coloneqq \inf_{\substack{\sigma_A\in \mathcal{S}(A),\\ \tau_B\in \mathcal{S}(B)}}\widetilde{D}_\alpha (\rho_{AB}\| \sigma_A\otimes \tau_B ).
\label{eq:srmi2}
\end{align}

As the doubly minimized PRMI and SRMI are defined via a double minimization, it is useful to introduce notation for the corresponding quantities involving only a single minimization. 
Accordingly, we define for any positive semidefinite $\sigma_A\in \mathcal{L}(A)$
\begin{align}
I_{\alpha}^{\downarrow}(\rho_{AB}\| \sigma_A)
&\coloneqq\inf_{\tau_B\in \mathcal{S}(B)}D_{\alpha}(\rho_{AB}\| \sigma_A\otimes \tau_B) 
\qquad \forall\alpha\in [0,\infty),
\\
\widetilde{I}_{\alpha}^{\downarrow}(\rho_{AB}\| \sigma_A)
&\coloneqq\inf_{\tau_B\in \mathcal{S}(B)}\widetilde{D}_{\alpha}(\rho_{AB}\| \sigma_A\otimes \tau_B)
\qquad \forall\alpha\in (0,\infty].
\end{align}
We call them the \emph{minimized generalized PRMI (of order $\alpha$)} and 
the \emph{minimized generalized SRMI (of order $\alpha$)}, respectively. 
For completeness, we provide in Appendix~\ref{sec:gen-prmi} an overview of properties of the minimized generalized PRMI and SRMI. 
Some of these properties were established in previous work~\cite{hayashi2016correlation}, as explained in the appendix. 
These results are useful for the present paper (and will be invoked in several proofs), since the doubly minimized PRMI/SRMI result from an additional minimization of the minimized generalized PRMI/SRMI.

\section{Main results}\label{sec:main}

\subsection{Properties of the doubly minimized Petz R\'enyi mutual information}\label{ssec:main-1}

In this section, we present our findings on properties of the doubly minimized PRMI of order $\alpha$. We thereby focus on $\alpha\in [\frac{1}{2},1]$ because this will turn out to be the relevant range for the application of the doubly minimized PRMI in binary quantum state discrimination~\cite{burri2025prmisrmi2}. 
Several assertions are extended to larger ranges of $\alpha$ whenever the proof technique permits a straightforward extension.

In the following theorem (Theorem~\ref{thm:convexity}), we show that the minimization problem underlying the doubly minimized PRMI of order $\alpha$ is jointly convex in the quantum states $\sigma_A$ and $\tau_B$ for any $\alpha\in [\frac{1}{2},1)$, see~\eqref{eq:joint-convexity}. 
In~\cite{lapidoth2019two}, joint convexity has been proved for the classical case. 
More precisely, in~\cite[Lemma~15]{lapidoth2019two}, the joint convexity of 
$D_\alpha(P_{XY}\| Q_XR_Y)$ in the PMFs $Q_X$ and $R_Y$ has been proved for any $\alpha\in [\frac{1}{2},\infty)$.
Our proof for the quantum case follows the same scheme as the aforementioned proof for the classical case. 
The main difference lies in one particular proof step where the proof for the classical case employs the Cauchy-Schwarz inequality for $\mathbb{R}^2$ with the standard inner product~\cite[Eq.~(227)]{lapidoth2019two} in the form 
\begin{equation}\label{eq:cs}
\sqrt{x}\sqrt{y}+\sqrt{x'}\sqrt{y'}\leq \sqrt{x+x'}\sqrt{y+y'} \qquad \forall x,y,x',y'\in [0,\infty).
\end{equation}
In order to prove joint convexity for the quantum case, we employ an operator version of this inequality instead (Lemma~\ref{lem:cs}), which follows from the subadditivity of the geometric operator mean. 
Remarkably, Lemma~\ref{lem:cs} shows that the Cauchy-Schwarz inequality~\eqref{eq:cs} can be lifted from the positive real axis to positive semidefinite operators. 

\begin{lem}[Operator inequality from subadditivity of geometric operator mean]\label{lem:cs}
Let $X_A,X_A'\in \mathcal{L}(A), Y_B,Y_B'\in \mathcal{L}(B)$ be positive semidefinite. Then
\begin{align}
\sqrt{X_A}\otimes \sqrt{Y_B} + \sqrt{X_A'}\otimes \sqrt{Y_B'}
\leq \sqrt{X_A+X_A'}\otimes \sqrt{Y_B+Y_B'}. 
\label{eq:lem-cs}
\end{align}
\end{lem}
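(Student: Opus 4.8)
The plan is to deduce the tensor-product inequality \eqref{eq:lem-cs} from the subadditivity of the geometric operator mean, using the key observation that for positive semidefinite $X_A$ and $Y_B$ one has $\sqrt{X_A}\otimes\sqrt{Y_B} = (X_A\otimes 1_B)\#(1_A\otimes Y_B)$. To see this, note that when $X_A>0$ and $Y_B>0$, the definition $P\#Q = P^{1/2}(P^{-1/2}QP^{-1/2})^{1/2}P^{1/2}$ applied to $P=X_A\otimes 1_B$ and $Q=1_A\otimes Y_B$ gives $(X_A^{1/2}\otimes 1)(X_A^{-1/2}\otimes 1)(1\otimes Y_B)(X_A^{-1/2}\otimes 1))^{1/2}(X_A^{1/2}\otimes 1) = (X_A^{1/2}\otimes 1)(1\otimes Y_B)^{1/2}(X_A^{1/2}\otimes 1) = X_A^{1/2}\otimes Y_B^{1/2}$, since $X_A$-factors and $Y_B$-factors commute across the tensor product. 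The positive semidefinite case then follows by the $\varepsilon\to 0$ limit in the definition of $\#$, together with continuity of the square root on the finite-dimensional operator space.

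With this identity in hand, I would write $\sqrt{X_A}\otimes\sqrt{Y_B} = (X_A\otimes 1_B)\#(1_A\otimes Y_B)$ and $\sqrt{X_A'}\otimes\sqrt{Y_B'} = (X_A'\otimes 1_B)\#(1_A\otimes Y_B')$, and similarly $\sqrt{X_A+X_A'}\otimes\sqrt{Y_B+Y_B'} = ((X_A+X_A')\otimes 1_B)\#(1_A\otimes(Y_B+Y_B'))$. Then \eqref{eq:lem-cs} is precisely the statement
\begin{align}
(X_A\otimes 1_B)\#(1_A\otimes Y_B) + (X_A'\otimes 1_B)\#(1_A\otimes Y_B') \leq \big((X_A\otimes 1_B)+(X_A'\otimes 1_B)\big)\#\big((1_A\otimes Y_B)+(1_A\otimes Y_B')\big),
\end{align}
which is exactly the subadditivity of the geometric operator mean (stated in Section~\ref{ssec:notation}) applied to the four positive semidefinite operators $X_A\otimes 1_B$, $X_A'\otimes 1_B$, $1_A\otimes Y_B$, $1_A\otimes Y_B'$ on the Hilbert space $A\otimes B$. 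This immediately yields the claim.

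The main obstacle — and really the only nontrivial point — is establishing the identity $\sqrt{X_A}\otimes\sqrt{Y_B} = (X_A\otimes 1_B)\#(1_A\otimes Y_B)$ cleanly in the positive semidefinite (not necessarily positive definite) case, where one must pass through the regularization $(X_A+\varepsilon 1_A)\otimes 1_B + \ldots$ is not quite what the definition uses; rather $\#$ is extended by $(P+\varepsilon 1)\#(Q+\varepsilon 1)$. Here one should check that $(X_A\otimes 1_B + \varepsilon 1_{AB})\#(1_A\otimes Y_B + \varepsilon 1_{AB}) \to \sqrt{X_A}\otimes\sqrt{Y_B}$; this can be handled by noting $X_A\otimes 1_B$ and $1_A\otimes Y_B$ commute, so on a simultaneous eigenbasis the geometric mean acts entrywise as the scalar geometric mean $\sqrt{(x+\varepsilon)(y+\varepsilon)}\to\sqrt{xy}$, giving convergence to $\sqrt{X_A\otimes 1_B}\cdot\sqrt{1_A\otimes Y_B} = \sqrt{X_A}\otimes\sqrt{Y_B}$ as desired. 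An alternative route that sidesteps the limit entirely is to prove the inequality first under the assumption $X_A,X_A',Y_B,Y_B'>0$ and then recover the general case by applying it to $X_A+\delta 1_A$, etc., and letting $\delta\to 0$, using continuity of both sides; I would likely present whichever of these two is shorter. Everything else is a direct invocation of the cited subadditivity property.
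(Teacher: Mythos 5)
Your proposal is correct and follows essentially the same route as the paper: rewrite each term as $(X_A\otimes 1_B)\#(1_A\otimes Y_B)$, etc., and apply the subadditivity of the geometric operator mean to the four operators $X_A\otimes 1_B$, $X_A'\otimes 1_B$, $1_A\otimes Y_B$, $1_A\otimes Y_B'$ on $A\otimes B$. The only difference is that you spell out the verification of the identity $\sqrt{X_A}\otimes\sqrt{Y_B}=(X_A\otimes 1_B)\#(1_A\otimes Y_B)$ in the positive semidefinite case, which the paper's proof leaves implicit.
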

\begin{proof}
See Appendix~\ref{app:lem:cs}.
\end{proof}

\begin{thm}[Joint concavity/convexity]\label{thm:convexity}
Let $\alpha \in [\frac{1}{2},1)$.
Let $\rho_{AB}\in \mathcal{S}(AB),\sigma_A,\sigma_A'\in \mathcal{S}(A),\tau_B,\tau_B'\in \mathcal{S}(B)$, and let
$\lambda,\lambda' \in (0,1)$ be such that $\lambda+\lambda'=1$. Then
\begin{align}
Q_\alpha(\rho_{AB}\| (\lambda \sigma_A+\lambda'\sigma_A')\otimes (\lambda \tau_B+\lambda'\tau_B'))
&\geq \lambda Q_\alpha(\rho_{AB}\| \sigma_A\otimes \tau_B) +\lambda' Q_\alpha(\rho_{AB}\| \sigma_A'\otimes \tau_B'),
\label{eq:joint-concavity}
\\
D_\alpha(\rho_{AB}\| (\lambda \sigma_A+\lambda'\sigma_A')\otimes (\lambda \tau_B+\lambda'\tau_B'))
&\leq \lambda D_\alpha(\rho_{AB}\| \sigma_A\otimes \tau_B) +\lambda' D_\alpha(\rho_{AB}\| \sigma_A'\otimes \tau_B'),
\label{eq:joint-convexity}
\\
I_\alpha^{\downarrow}(\rho_{AB}\| \lambda \sigma_A+\lambda'\sigma_A')
&\leq \lambda I_\alpha^{\downarrow}(\rho_{AB}\| \sigma_A) +\lambda' I_\alpha^{\downarrow}(\rho_{AB}\| \sigma_A').
\label{eq:i-convexity}
\end{align}
Furthermore, if 
$\alpha\in (\frac{1}{2},1),\sigma_A,\sigma_A'\in \mathcal{S}_{\sim\rho_A}(A),\tau_B,\tau_B'\in \mathcal{S}_{\sim\rho_B}(B)$, 
and at least one of the inequalities in~\eqref{eq:joint-concavity} and~\eqref{eq:joint-convexity} holds with equality, then 
$\sigma_A=\sigma_A'$ and $\tau_B=\tau_B'$. 
If $\alpha\in (\frac{1}{2},1),\sigma_A,\sigma_A'\in \mathcal{S}_{\sim\rho_A}(A)$, 
and~\eqref{eq:i-convexity} holds with equality, then 
$\sigma_A=\sigma_A'$.
\end{thm}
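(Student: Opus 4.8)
The plan for \eqref{eq:joint-concavity} is to reduce it to an operator inequality. Write $\bar\sigma_A:=\lambda\sigma_A+\lambda'\sigma_A'$, $\bar\tau_B:=\lambda\tau_B+\lambda'\tau_B'$ and $s:=1-\alpha\in(0,\tfrac12]$. Since $Q_\alpha(\rho_{AB}\|\xi_A\otimes\eta_B)=\tr[\rho_{AB}^\alpha(\xi_A^{1-\alpha}\otimes\eta_B^{1-\alpha})]$ with $\rho_{AB}^\alpha\geq 0$, it suffices to prove the operator inequality $\bar\sigma_A^{1-\alpha}\otimes\bar\tau_B^{1-\alpha}\geq\lambda(\sigma_A^{1-\alpha}\otimes\tau_B^{1-\alpha})+\lambda'((\sigma_A')^{1-\alpha}\otimes(\tau_B')^{1-\alpha})$ and then apply $\tr[\rho_{AB}^\alpha\,\cdot\,]$. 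Each term here is the $2s$-th power of a square root of a tensor product, e.g. $\bar\sigma_A^{1-\alpha}\otimes\bar\tau_B^{1-\alpha}=(\bar\sigma_A^{1/2}\otimes\bar\tau_B^{1/2})^{2s}$. By Lemma~\ref{lem:cs} applied with $X_A=\lambda\sigma_A$, $X_A'=\lambda'\sigma_A'$, $Y_B=\lambda\tau_B$, $Y_B'=\lambda'\tau_B'$ one gets $\bar\sigma_A^{1/2}\otimes\bar\tau_B^{1/2}\geq N:=\lambda(\sigma_A^{1/2}\otimes\tau_B^{1/2})+\lambda'((\sigma_A')^{1/2}\otimes(\tau_B')^{1/2})\geq 0$. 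Since $2s\in(0,1]$, the map $t\mapsto t^{2s}$ is operator monotone and operator concave; monotonicity gives $(\bar\sigma_A^{1/2}\otimes\bar\tau_B^{1/2})^{2s}\geq N^{2s}$, and concavity (with $N=\lambda G+\lambda'G'$, $G:=\sigma_A^{1/2}\otimes\tau_B^{1/2}$, $G':=(\sigma_A')^{1/2}\otimes(\tau_B')^{1/2}$) gives $N^{2s}\geq\lambda G^{2s}+\lambda'(G')^{2s}$. Chaining these two inequalities proves the operator inequality, hence \eqref{eq:joint-concavity}; for $\alpha=\tfrac12$ the two steps are trivial and the inequality is exactly Lemma~\ref{lem:cs}.

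Then \eqref{eq:joint-convexity} follows from \eqref{eq:joint-concavity} by composing with $x\mapsto\tfrac1{\alpha-1}\log x$: as $\log$ is concave and increasing and $\tfrac1{\alpha-1}<0$, concavity of the $Q_\alpha$-functional turns into convexity of $D_\alpha=\tfrac1{\alpha-1}\log Q_\alpha$, the case of an infinite right-hand side being trivial (otherwise all three $Q_\alpha$-values are positive and finite). For \eqref{eq:i-convexity}, assuming the right-hand side is finite, pick minimizers $\tau_B\in\argmin_{\tau\in\mathcal{S}(B)}D_\alpha(\rho_{AB}\|\sigma_A\otimes\tau)$ and $\tau_B'\in\argmin_{\tau\in\mathcal{S}(B)}D_\alpha(\rho_{AB}\|\sigma_A'\otimes\tau)$ (these exist since the objective is continuous on the compact set $\mathcal{S}(B)$), evaluate the infimum defining $I_\alpha^{\downarrow}(\rho_{AB}\|\bar\sigma_A)$ at the feasible point $\lambda\tau_B+\lambda'\tau_B'\in\mathcal{S}(B)$, and apply \eqref{eq:joint-convexity} to the pairs $(\sigma_A,\tau_B)$ and $(\sigma_A',\tau_B')$.

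For the equality statements, note that under the exact-support hypotheses $\rho_{AB}\ll\sigma_A\otimes\tau_B$, $\rho_{AB}\ll\sigma_A'\otimes\tau_B'$ and $\rho_{AB}\ll\bar\sigma_A\otimes\bar\tau_B$, so all $Q_\alpha$-values are positive and finite. Equality in \eqref{eq:joint-convexity} forces equality in \eqref{eq:joint-concavity} (and in addition $Q_\alpha(\rho_{AB}\|\sigma_A\otimes\tau_B)=Q_\alpha(\rho_{AB}\|\sigma_A'\otimes\tau_B')$), by strict concavity of $\log$ and $\tfrac1{\alpha-1}<0$; so it suffices to treat equality in \eqref{eq:joint-concavity}. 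The proof of \eqref{eq:joint-concavity} above writes the gap as $\Delta_1+\Delta_2$ with $\Delta_1:=(\bar\sigma_A^{1/2}\otimes\bar\tau_B^{1/2})^{2s}-N^{2s}\geq 0$ and $\Delta_2:=N^{2s}-\lambda G^{2s}-\lambda'(G')^{2s}\geq 0$; equality in \eqref{eq:joint-concavity} reads $\tr[\rho_{AB}^\alpha(\Delta_1+\Delta_2)]=0$, which, all factors being PSD, forces $\supp(\rho_{AB})\subseteq\ker(\Delta_1)\cap\ker(\Delta_2)$. Using $t^{2s}=c_s\int_0^\infty u^{2s-1}(1-u(t+u)^{-1})\,\mathrm{d}u$ (valid since $2s\in(0,1)$, with $c_s>0$) and cancelling the $O(1)$ and $O(u^{-1})$ contributions via $N=\lambda G+\lambda'G'$, one rewrites $\Delta_2=c_s\int_0^\infty u^{2s}\big(\lambda(G+u)^{-1}+\lambda'(G'+u)^{-1}-(N+u)^{-1}\big)\mathrm{d}u$ and $\Delta_1=c_s\int_0^\infty u^{2s}\big((N+u)^{-1}-(\bar G+u)^{-1}\big)\mathrm{d}u$ (with $\bar G:=\bar\sigma_A^{1/2}\otimes\bar\tau_B^{1/2}$), both integrands being PSD for each $u>0$ by operator convexity of inversion and by $\bar G\geq N$. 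Hence $\supp(\rho_{AB})$ is annihilated by each integrand for every $u>0$; expanding in powers of $1/u$ and letting $u\to\infty$ (the leading surviving terms are $u^{-3}\lambda\lambda'(G-G')^2$ and $u^{-2}(\bar G-N)$) yields $(G-G')\rho_{AB}=0$ and $(\bar G-N)\rho_{AB}=0$, i.e. $\bar G\rho_{AB}=N\rho_{AB}=G\rho_{AB}=G'\rho_{AB}$. Left-multiplying $G\rho_{AB}=G'\rho_{AB}$ by the generalized inverse $\sigma_A^{-1/2}\otimes\tau_B^{-1/2}$ and using that $\rho_{AB}$ is supported on $\supp(\rho_A)\otimes\supp(\rho_B)$ gives $\rho_{AB}=(R_A\otimes R_B)\rho_{AB}$ with $R_A:=\sigma_A^{-1/2}(\sigma_A')^{1/2}$, $R_B:=\tau_B^{-1/2}(\tau_B')^{1/2}$; combining this with its adjoint and with the analogous identity obtained by exchanging primed and unprimed quantities yields $(S_A\otimes S_B)\rho_{AB}=\rho_{AB}$ for the positive-definite operators $S_A:=\sigma_A^{-1/2}\sigma_A'\sigma_A^{-1/2}$ on $\supp(\rho_A)$ and $S_B:=\tau_B^{-1/2}\tau_B'\tau_B^{-1/2}$ on $\supp(\rho_B)$. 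Feeding this spectral information, the full $u$-parametrized family of kernel conditions, and the marginal identities $\tr_B[\rho_{AB}]=\rho_A$, $\tr_A[\rho_{AB}]=\rho_B$ back in, one concludes $S_A=1$ on $\supp(\rho_A)$ and $S_B=1$ on $\supp(\rho_B)$, i.e. $\sigma_A=\sigma_A'$ and $\tau_B=\tau_B'$.

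Finally, if \eqref{eq:i-convexity} holds with equality and its right-hand side is finite, choose minimizers $\tau_B,\tau_B'$ with $\supp(\tau_B)=\supp(\tau_B')=\supp(\rho_B)$ (such minimizers exist by the analogue of Proposition~\ref{prop:prmi1}(g) with $\sigma_A$, respectively $\sigma_A'$, in place of $\rho_A$); equality then propagates to equality in \eqref{eq:joint-convexity} for the pairs $(\sigma_A,\tau_B)$ and $(\sigma_A',\tau_B')$, and the case just treated gives $\sigma_A=\sigma_A'$. The hard part is the final step of the previous paragraph: equality only constrains operators on $\supp(\rho_{AB})$, which can be much smaller than $\supp(\rho_A)\otimes\supp(\rho_B)$ (e.g. when $\rho_{AB}$ is pure and entangled), so strict operator convexity/concavity cannot be invoked naively; one must genuinely exploit the whole $u$-family of conditions together with the marginal constraints on $\rho_{AB}$ to exclude the spurious ``reciprocal-spectrum'' solutions for the pair $(S_A,S_B)$.
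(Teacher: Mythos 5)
Your proofs of \eqref{eq:joint-concavity}, \eqref{eq:joint-convexity}, and \eqref{eq:i-convexity} are correct and essentially the paper's route: the key input is Lemma~\ref{lem:cs} combined with operator monotonicity and operator concavity of $t\mapsto t^{2(1-\alpha)}$, applied in a slightly different order (you apply the lemma to $\lambda\sigma_A,\lambda'\sigma_A'$, etc.\ and then take the power $2(1-\alpha)$; the paper applies it to $\lambda\sigma_A^{2(1-\alpha)}$, etc.\ and then uses concavity of $t^{2(1-\alpha)}$ and monotonicity of the square root factorwise), and your passage from $Q_\alpha$ to $D_\alpha$ via concavity of $\log$ is the same as the paper's arithmetic--geometric mean step; the reduction of \eqref{eq:i-convexity} to \eqref{eq:joint-convexity} via minimizers $\tau_B,\tau_B'$ also matches.

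The equality/uniqueness part, however, contains a genuine gap, which you flag yourself. Saturation only tells you that $\Delta_1$ and $\Delta_2$ annihilate $\supp(\rho_{AB})$, and your integral representation plus the expansion in powers of $1/u$ only yields identities such as $G\ket{\psi}=G'\ket{\psi}$ and $(\bar G-N)\ket{\psi}=0$ for $\ket{\psi}\in\supp(\rho_{AB})$. When $\rho_{AB}$ is pure and entangled this support is one-dimensional, and (as you essentially observe) the ``reciprocal-spectrum'' family, where $(\sigma_A')^{1/2}$ and $(\tau_B')^{1/2}$ are obtained from $\sigma_A^{1/2},\tau_B^{1/2}$ by multiplying and dividing by the same positive diagonal factor in a Schmidt basis, satisfies the entire $u$-parametrized family of $\Delta_2$-conditions; so the conditions you actually derive do not by themselves force $\sigma_A=\sigma_A'$, and the sentence ``one concludes $S_A=1$ and $S_B=1$'' is precisely the missing argument, which is never carried out. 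The paper avoids this obstruction by a different mechanism: it saturates the chain factorwise and takes a partial trace, noting that $Z_B=\tr_A[\rho_{AB}^\alpha\,\bar\sigma_A^{1-\alpha}]$ has support containing $\supp(\rho_B)$ (this is exactly where the hypotheses $\sigma_A,\sigma_A'\in\mathcal{S}_{\ll\rho_A\ll}(A)$, $\tau_B,\tau_B'\in\mathcal{S}_{\ll\rho_B\ll}(B)$ enter), so Lemma~\ref{lem:xyz}~(b) upgrades the saturated trace inequality to the full operator identity $\lambda\tau_B^{2(1-\alpha)}+\lambda'(\tau_B')^{2(1-\alpha)}=(\lambda\tau_B+\lambda'\tau_B')^{2(1-\alpha)}$ on all of $B$, not merely on $\supp(\rho_{AB})$; Lemma~\ref{lem:strict-concavity} (the equality condition in the Schatten-norm triangle inequality) then gives $\tau_B=\tau_B'$, and the analogous argument on $A$ gives $\sigma_A=\sigma_A'$. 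To repair your argument you would either need this partial-trace step, or a genuine analysis showing that the additional constraints you gesture at (the $\Delta_1$-condition, normalization, and the marginals of $\rho_{AB}$) rule out the spurious solutions in the general, non-commuting case; as written, the uniqueness claim is not proved.
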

\begin{proof}
See Appendix~\ref{app:convexity}.
\end{proof}

In the following theorem, we enumerate several properties of the doubly minimized PRMI.

\begin{thm}[Doubly minimized PRMI]\label{thm:prmi2}
Let $\rho_{AB}\in \mathcal{S}(AB)$.
Then all of the following hold.
\begin{enumerate}[label=(\alph*)]
\item \emph{Symmetry:} 
$I_{\alpha}^{\downarrow\downarrow}(A:B)_\rho = I_\alpha ^{\downarrow\downarrow}(B:A)_\rho$
for all $\alpha \in [0,\infty)$.
\item \emph{Non-increase under local operations:} 
$I_\alpha^{\downarrow\downarrow}(A:B)_{\rho} \geq I_\alpha^{\downarrow\downarrow}(A':B')_{\mathcal{M}\otimes\mathcal{N}(\rho)}$
for any $\mathcal{M}\in \CPTP(A,A'),\mathcal{N}\in \CPTP(B,B')$ and all $\alpha\in [0,2]$.
\item \emph{Invariance under local isometries:} 
$I_\alpha^{\downarrow\downarrow}(A':B')_{V\otimes W\rho V^\dagger\otimes W^\dagger}
=I_\alpha^{\downarrow\downarrow}(A:B)_{\rho}$
for any isometries $V\in \mathcal{L}(A,A'),W\in \mathcal{L}(B,B')$ and all $\alpha\in [0,\infty)$.
\item \emph{Additivity:} 
Let $\alpha\in [\frac{1}{2},2]$ and $\rho'_{DE}\in \mathcal{S}(DE)$. Then
\begin{equation}
I_\alpha^{\downarrow\downarrow}(AD:BE)_{\rho_{AB}\otimes \rho'_{DE}}
=I_\alpha^{\downarrow\downarrow}(A:B)_{\rho_{AB}}+I_\alpha^{\downarrow\downarrow}(D:E)_{\rho'_{DE}}.
\end{equation}
\item \emph{Duality:}
Let $\ket{\rho}_{ABC}\in ABC$ be such that $\tr_C[\proj{\rho}_{ABC}]=\rho_{AB}$.
Let $\alpha\in (0,\infty)$ and $\beta\coloneqq \frac{1}{\alpha}\in (0,\infty)$. 
Then
\begin{align}\label{eq:prmi2-duality}
I_\alpha^{\downarrow\downarrow}(A:B)_{\rho}=\begin{cases}
\inf\limits_{\sigma_A\in \mathcal{S}_{\not\perp\rho_A}(A)}
-\frac{1}{\beta-1}\log \widetilde{Q}_\beta(\rho_{AC}\| \sigma_A^{-1}\otimes \rho_C)
\hspace*{2em}\text{if }\alpha\in (0,1)
\\
\inf\limits_{\substack{\sigma_A\in \mathcal{S}(A):\\ \rho_A\ll \sigma_A}}
-\widetilde{D}_{\beta}(\rho_{AC}\| \sigma_A^{-1}\otimes \rho_C )
\hspace*{5em}\text{if }\alpha\in [1,\infty).
\end{cases}
\end{align}
\item \emph{Non-negativity:}
$I_\alpha^{\downarrow\downarrow}(A:B)_\rho\geq 0$ for all $\alpha\in [0,\infty)$.
\item \emph{Upper bound:}
Let $\alpha \in [0,\infty)$ and $r_A\coloneqq \rank(\rho_A)$.
Then $I_\alpha^{\downarrow\downarrow}(A:B)_\rho\leq 2\log r_A$.

Furthermore, if $\alpha\in [\frac{1}{2},2]$, then 
$I_\alpha^{\downarrow\downarrow}(A:B)_\rho= 2\log r_A$ iff 
$\spec(\rho_A)\subseteq\{0,1/r_A\}$ and $H(A|B)_\rho=-\log r_A$.

If $\alpha\in [0,\frac{1}{2})$ instead, then
$I_\alpha^{\downarrow\downarrow}(A:B)_\rho 
\leq \frac{1}{1-\alpha} H_\infty(A)_\rho 
\leq \frac{1}{1-\alpha} \log r_A < 2\log r_A$.
\item \emph{Existence of minimizers:}
Let $\alpha\in [0,\infty)$. Then
\begin{equation}
\emptyset \neq \argmin_{(\sigma_A,\tau_B)\in \mathcal{S}(A)\times \mathcal{S}(B)}
D_\alpha (\rho_{AB}\| \sigma_A\otimes \tau_B )
\subseteq \mathcal{S}_{\ll \rho_A}(A)\times\mathcal{S}_{\ll \rho_B}(B).
\end{equation}
\item \emph{Fixed-point property of partial minimizers:} 
Let $\alpha\in (0,\infty)$.
Let us define the following maps.
\begin{align}
\mathcal{N}_{A\rightarrow B}:
&\quad \mathcal{S}_{\not\perp\rho_A} (A)\rightarrow\mathcal{S}_{\ll\rho_B}(B),\quad
\sigma_A\mapsto \frac{(\tr_A[\rho_{AB}^\alpha \sigma_A^{1-\alpha}])^{\frac{1}{\alpha}}}{\tr[(\tr_A[\rho_{AB}^\alpha \sigma_A^{1-\alpha}])^{\frac{1}{\alpha}}]}\\
\mathcal{N}_{B\rightarrow A}:
&\quad \mathcal{S}_{\not\perp\rho_B}(B)\rightarrow\mathcal{S}_{\ll\rho_A}(A),\quad
\tau_B\mapsto \frac{(\tr_B[\rho_{AB}^\alpha \tau_B^{1-\alpha}])^{\frac{1}{\alpha}}}{\tr[(\tr_B[\rho_{AB}^\alpha \tau_B^{1-\alpha}])^{\frac{1}{\alpha}}]}\\
\mathcal{N}_{A\rightarrow A}:
&\quad \mathcal{S}_{\not\perp\rho_A}(A)\rightarrow\mathcal{S}_{\ll\rho_A}(A),\quad
\sigma_A\mapsto \mathcal{N}_{B\rightarrow A}\circ \mathcal{N}_{A\rightarrow B}(\sigma_A)
\end{align}
Furthermore, let
\begin{align}
\mathcal{M}_\alpha &\coloneqq
\argmin_{\sigma_A\in \mathcal{S}(A)} I_\alpha^{\downarrow} (\rho_{AB}\| \sigma_A),
\\
\mathcal{F}_\alpha &\coloneqq 
\{\sigma_A\in \mathcal{S}_{\ll\rho_A}(A): \mathcal{N}_{A\rightarrow A}(\sigma_A)=\sigma_A \}.
\end{align}
Then $\emptyset \neq \mathcal{M}_\alpha\subseteq \mathcal{F}_\alpha$.
\item \emph{Uniqueness of minimizer:} 
Let $\alpha\in (\frac{1}{2},1]$. 
Then there exists 
$(\hat{\sigma}_A,\hat{\tau}_B)\in \mathcal{S}_{\sim\rho_A}(A)\times \mathcal{S}_{\sim\rho_B}(B)$ such that 
\begin{align}
\argmin_{(\sigma_A,\tau_B)\in \mathcal{S}(A)\times\mathcal{S}(B)} 
D_\alpha (\rho_{AB}\| \sigma_A\otimes \tau_B)
=\{(\hat{\sigma}_A,\hat{\tau}_B)\}.
\end{align}
\item \emph{Fixed-point property of minimizers:} 
Let $\ket{\rho}_{ABC}\in ABC$ be such that 
$\tr_C[\proj{\rho}_{ABC}]=\rho_{AB}$.
Let $\alpha\in (\frac{1}{2},2]$, 
\begin{align}
\mathcal{M}_\alpha
&\coloneqq \argmin_{\sigma_A\in \mathcal{S}(A)}I_\alpha^{\downarrow} (\rho_{AB}\| \sigma_A)
,
\\
\mathcal{F}_\alpha
&\coloneqq
\Big\{\sigma_A\in \mathcal{S}_{\sim \rho_A }(A):
\sigma_A=\frac{\tr_C[(\sigma_A^{\frac{1-\alpha}{2}} \otimes \rho_C^{\frac{\alpha-1}{2}}\rho_{AC} \sigma_A^{\frac{1-\alpha}{2}} \otimes \rho_C^{\frac{\alpha-1}{2}})^{\frac{1}{\alpha}}]}{\tr[ (\sigma_A^{\frac{1-\alpha}{2}} \otimes \rho_C^{\frac{\alpha-1}{2}}\rho_{AC} \sigma_A^{\frac{1-\alpha}{2}} \otimes \rho_C^{\frac{\alpha-1}{2}})^{\frac{1}{\alpha}} ]}
\Big\}.
\label{eq:prmi2-fp}
\end{align}
Then $\mathcal{M}_\alpha=\mathcal{F}_\alpha$.
\item \emph{Asymptotic optimality of universal permutation invariant state:} 
Let $\alpha \in [0,2]$. Then 
\begin{equation}\label{eq:prmi2-omega1}
I_\alpha^{\downarrow\downarrow}(A:B)_\rho
= \lim\limits_{n\rightarrow\infty}\frac{1}{n}D_\alpha (\rho_{AB}^{\otimes n}\| \omega_{A^n}^n\otimes \omega_{B^n}^n)
= \lim\limits_{n\rightarrow\infty} \inf_{\substack{\sigma_{A^n}\in \mathcal{S}_{\sym}(A^{\otimes n}), \\ \tau_{B^n}\in \mathcal{S}_{\sym}(B^{\otimes n}) }}
\frac{1}{n}D_\alpha (\rho_{AB}^{\otimes n}\| \sigma_{A^n}\otimes \tau_{B^n}).
\end{equation}
Moreover, if $\alpha\in [\frac{1}{2},2]$, then for any $n\in \mathbb{N}_{>0}$
\begin{equation}\label{eq:prmi2-omega2}
I_\alpha^{\downarrow\downarrow}(A:B)_\rho 
= \inf\limits_{\substack{\sigma_{A^n}\in \mathcal{S}_{\sym}(A^{\otimes n}),\\ \tau_{B^n}\in \mathcal{S}_{\sym}(B^{\otimes n}) }}
\frac{1}{n} D_\alpha (\rho_{AB}^{\otimes n}\| \sigma_{A^n}\otimes \tau_{B^n})
= \inf\limits_{\substack{\sigma_{A^n}\in \mathcal{S}(A^n),\\ \tau_{B^n}\in \mathcal{S}(B^n) }} 
\frac{1}{n}D_\alpha (\rho_{AB}^{\otimes n}\| \sigma_{A^n}\otimes \tau_{B^n}).
\end{equation}
\item \emph{R\'enyi order $\alpha \in \{0,1\}$:} 
$I_1^{\downarrow\downarrow}(A:B)_\rho = I(A:B)_\rho$ and
\begin{align}
I_0^{\downarrow\downarrow}(A:B)_\rho 
&=\min_{\substack{\ket{\sigma}_A\in \supp(\rho_A),\ket{\tau}_B\in \supp(\rho_B): \\  \brak{\sigma}_A=1, \brak{\tau}_B=1 }}
D_0(\rho_{AB}\| \proj{\sigma}_A\otimes \proj{\tau}_B).
\end{align}
\item \emph{Monotonicity in $\alpha$:} 
If $\alpha,\beta\in [0,\infty)$ are such that $\alpha\leq \beta$, then 
$I_\alpha^{\downarrow\downarrow}(A:B)_\rho \leq I_\beta^{\downarrow\downarrow}(A:B)_\rho$.
\item \emph{Continuity in $\alpha$:} 
The function
$[0,\infty)\rightarrow [0,\infty),\alpha\mapsto I_\alpha^{\downarrow\downarrow}(A:B)_\rho$ is continuous.
\item \emph{Differentiability in $\alpha$:} 
The function $(\frac{1}{2},2)\rightarrow [0,\infty),\alpha\mapsto I_\alpha^{\downarrow\downarrow}(A:B)_\rho$ is continuously differentiable. 
For any $\alpha\in (\frac{1}{2},2)$ and any fixed 
$(\sigma_A,\tau_B)\in \argmin_{(\sigma_A',\tau_B')\in \mathcal{S}(A)\times \mathcal{S}(B)}D_\alpha (\rho_{AB}\| \sigma_A'\otimes \tau_B')$, the derivative at $\alpha$ is
\begin{align}\label{eq:prmi2-diff}
\frac{\mathrm{d}}{\mathrm{d} \alpha}I_\alpha^{\downarrow\downarrow}(A:B)_\rho 
=\frac{\partial}{\partial \alpha}D_\alpha (\rho_{AB}\| \sigma_A\otimes \tau_B).
\end{align}
In particular, $\frac{\mathrm{d}}{\mathrm{d} \alpha}I_\alpha^{\downarrow\downarrow}(A:B)_\rho|_{\alpha=1}=\frac{\mathrm{d}}{\mathrm{d} \alpha}D_\alpha (\rho_{AB}\| \rho_A\otimes \rho_B)|_{\alpha=1}=\frac{1}{2}V(A:B)_\rho$. 

Moreover, $\frac{\partial}{\partial \alpha^+}I_\alpha^{\downarrow\downarrow}(A:B)_\rho|_{\alpha=1/2}=\lim_{\beta\rightarrow 1/2^+}\frac{\mathrm{d}}{\mathrm{d} \alpha}I_\alpha^{\downarrow\downarrow}(A:B)_\rho|_{\alpha=\beta}\in [0,\infty)$.
\item \emph{Convexity in $\alpha$:}
The function
$[0,2]\rightarrow \mathbb{R},\alpha\mapsto (\alpha-1)I_\alpha^{\downarrow \downarrow}(A:B)_\rho$
is convex. 
\item \emph{Product states:} 
If $\rho_{AB}=\rho_A\otimes\rho_B$, then $I_\alpha^{\downarrow\downarrow}(A:B)_{\rho}=0$ for all $\alpha\in [0,\infty)$.
Conversely, for any $\alpha\in (0,\infty)$, if $I_\alpha^{\downarrow\downarrow}(A:B)_\rho=0$, then $\rho_{AB}=\rho_A\otimes \rho_B$.
\item \emph{$AC$-independent states:} 
Let $\ket{\rho}_{ABC}\in ABC$ be such that $\tr_C[\proj{\rho}_{ABC}]=\rho_{AB}$. 
If $\rho_{AC}=\rho_A\otimes \rho_C$, then for all $\alpha\in [0,\infty)$
\begin{equation}
I_\alpha^{\downarrow\downarrow}(A:B)_\rho
=\begin{cases}
\frac{1}{1-\alpha}H_\infty(A)_\rho
\hspace*{2em}\text{if }\alpha\in [0,\frac{1}{2}]
\\
2H_{\frac{1}{2\alpha-1}}(A)_\rho
\hspace*{2em}\text{if }\alpha\in (\frac{1}{2},\infty).
\end{cases}
\end{equation}
\item \emph{Pure states:} 
If there exists $\ket{\rho}_{AB}\in AB$ such that $\rho_{AB}=\proj{\rho}_{AB}$, then for all $\alpha \in [0,\infty)$
\begin{equation}\label{eq:prmi2-pure}
I_\alpha^{\downarrow\downarrow}(A:B)_{\proj{\rho}}
=D_\alpha(\proj{\rho}_{AB}\| \sigma_A\otimes\tau_B)
=\begin{cases}
\frac{1}{1-\alpha}H_\infty(A)_\rho
\hspace*{2em}\text{if }\alpha\in [0,\frac{1}{2}]
\\
2H_{\frac{1}{2\alpha-1}}(A)_\rho
\hspace*{2em}\text{if }\alpha\in (\frac{1}{2},\infty),
\end{cases}
\end{equation}
where $\sigma_A\coloneqq \rho_A^{\frac{1}{2\alpha -1}}/\tr[\rho_A^{\frac{1}{2\alpha -1}}], \tau_B\coloneqq \rho_B^{\frac{1}{2\alpha -1}}/\tr[\rho_B^{\frac{1}{2\alpha -1}}]$ 
if $\alpha\in (\frac{1}{2},\infty)$, and if $\alpha\in [0,\frac{1}{2}]$, then 
$\ket{\sigma}_A\in A$ is defined as a unit eigenvector of $\rho_A$ corresponding to the largest eigenvalue of $\rho_A$, $\sigma_A\coloneqq \proj{\sigma}_A$, 
$\ket{\tau}_B\coloneqq \bra{\sigma}_A\ket{\rho}_{AB}/\sqrt{\bra{\sigma}_A\rho_A\ket{\sigma}_A}$, and $\tau_B\coloneqq \proj{\tau}_B$.
\item \emph{CC states:} 
Let $P_{XY}$ be the joint PMF of two random variables $X,Y$ over $\mathcal{X}\coloneqq [d_A],\mathcal{Y}\coloneqq [d_B]$. 
If there exist orthonormal bases $\{\ket{a_x}_A\}_{x\in [d_A]},\{\ket{b_y}_B\}_{y\in [d_B]}$ for $A,B$ such that 
$\rho_{AB}=\sum_{x\in \mathcal{X}}\sum_{y\in \mathcal{Y}}P_{XY}(x,y)\proj{a_x,b_y}_{AB}$, then for all $\alpha\in [0,\infty)$
\begin{equation}
I_\alpha^{\downarrow\downarrow}(A: B)_\rho 
=I_\alpha^{\downarrow\downarrow}(X:Y)_P.
\end{equation}
\item \emph{Copy-CC states:} 
Let $P_X$ be the PMF of a random variable $X$ over $\mathcal{X}\coloneqq [\min(d_A,d_B)]$. 
Let $Y$ be a random variable over $\mathcal{Y}\coloneqq \mathcal{X}$ and let 
$P_{XY}(x,y)\coloneqq P_X(x)\delta_{x,y}$ for all $x\in \mathcal{X},y\in \mathcal{Y}$. 
If there exist orthonormal bases $\{\ket{a_x}_A\}_{x\in [d_A]},\{\ket{b_y}_B\}_{y\in [d_B]}$ for $A,B$ such that 
$\rho_{AB}=\sum_{x\in \mathcal{X}}\sum_{y\in \mathcal{Y}}P_{XY}(x,y)\proj{a_x,b_y}_{AB}$, then for all $\alpha\in [0,\infty)$
\begin{align}\label{eq:prmi2-copycc}
I_\alpha^{\downarrow\downarrow}(A:B)_\rho 
=I_\alpha^{\downarrow\downarrow}(X:Y)_P
=D_\alpha (\rho_{AB}\| \sigma_A\otimes\tau_B)
&=\begin{cases}
\frac{\alpha}{1-\alpha}H_{\infty}(A)_\rho
\qquad\text{if }\alpha\in [0,\frac{1}{2}]
\\
H_{\frac{\alpha}{2\alpha -1}}(A)_\rho 
\qquad\text{if }\alpha\in (\frac{1}{2},\infty),
\end{cases}
\end{align}
where 
$\sigma_A\coloneqq \rho_A^{\frac{\alpha}{2\alpha -1}}/\tr[\rho_A^{\frac{\alpha}{2\alpha -1}}] , 
\tau_B\coloneqq \rho_B^{\frac{\alpha}{2\alpha -1}}/\tr[\rho_B^{\frac{\alpha}{2\alpha -1}}]$ if $\alpha\in (\frac{1}{2},\infty)$, 
and if $\alpha\in [0,\frac{1}{2}]$, then we let $\hat{x}\in \argmax_{x\in \mathcal{X}}P_X(x)$ be arbitrary but fixed, and 
$\sigma_A\coloneqq \proj{a_{\hat{x}}}_A,\tau_B\coloneqq \proj{b_{\hat{x}}}_B$.
\end{enumerate}
\end{thm}
\begin{proof}
See Appendix~\ref{proof:prmi2}.
\end{proof}

\begin{rem}[Order in the list]
The properties in Theorem~\ref{thm:prmi2} are organized into three groups.
The first group (a)--(l) concerns general properties for a fixed R\'enyi order $\alpha$.
The second group (m)--(q) deals with special values of $\alpha$ and the behavior as $\alpha$ is varied.
The third group (r)--(v) addresses special states. 
This qualitative organization into three groups is also retained in subsequent lists, 
with varying numbers of items within the groups.
\end{rem}

\begin{rem}[Classical case]\label{rem:cc}
According to Theorem~\ref{thm:prmi2}~(u), the doubly minimized PRMI of a CC state reduces to the doubly minimized RMI. 
The doubly minimized RMI can therefore be regarded as a specific instance of the doubly minimized PRMI. 
For more properties of the doubly minimized RMI, we direct the reader to~\cite[Theorem~1]{lapidoth2019two}. 
\end{rem}

\begin{rem}[Inequivalence of PRMIs: Pure states]
The non-minimized, the singly minimized, and the doubly minimized PRMI of a fixed quantum state are not necessarily the same for $\alpha\neq 1$, as we now show. 
Consider a pure state $\rho_{AB}=\proj{\rho}_{AB}\in \mathcal{S}(AB)$. Then, for all $\alpha\in (0,\infty)$
\begin{align}
I_\alpha^{\uparrow\uparrow}(A:B)_\rho 
&=2H_{3-2\alpha}(A)_\rho,
\label{eq:prmi-pure1}\\
I_\alpha^{\uparrow\downarrow}(A:B)_\rho 
&=2H_{\frac{2-\alpha}{\alpha}}(A)_\rho,
\label{eq:prmi-pure2}\\
I_\alpha^{\downarrow\downarrow}(A:B)_\rho 
&=\begin{cases}
\frac{1}{1-\alpha}H_\infty(A)_\rho
\hspace*{2em}\text{if }\alpha\in (0,\frac{1}{2}]
\\
2H_{\frac{1}{2\alpha-1}}(A)_\rho
\hspace*{2em}\text{if }\alpha\in (\frac{1}{2},\infty).
\end{cases}
\label{eq:prmi-pure3}
\end{align}
\eqref{eq:prmi-pure1} follows by evaluating the definition of the non-minimized PRMI in~\eqref{eq:i-quantum-0}, 
\eqref{eq:prmi-pure2} follows from Proposition~\ref{prop:gen-prmi}~(k) and~(p), and 
\eqref{eq:prmi-pure3} follows from Theorem~\ref{thm:prmi2}~(t).
These three quantities are not necessarily the same. 
For illustrative purposes, an example is provided in Figure~\ref{fig:prmi}.
\end{rem}

\begin{rem}[Inequivalence of PRMIs: Copy-CC states]
The non-minimized, the singly minimized, and the doubly minimized PRMI of a fixed quantum state can deviate from each other for $\alpha\neq 1$ already in the classical setting, as we now show. 
Consider a copy-CC state 
$\rho_{AB}=\sum_{x\in \mathcal{X}}\sum_{y\in \mathcal{Y}}P_{XY}(x,y)\proj{a_x,b_y}_{AB}$ 
where $\{\ket{a_x}_A\}_{x\in [d_A]},\{\ket{b_y}_B\}_{y\in [d_B]}$ are orthonormal bases for $A,B$, 
$P_{XY}(x,y)\coloneqq P_X(x)\delta_{x,y}$ for all $x,y\in \mathcal{X}\equiv \mathcal{Y}\coloneqq [\min(d_A,d_B)]$, 
and $P_X$ is the PMF of a random variable $X$ over $\mathcal{X}$. 
Then, for all $\alpha\in (0,\infty)$
\begin{align}
I_\alpha^{\uparrow\uparrow}(A:B)_\rho 
&=I_\alpha^{\uparrow\uparrow}(X:Y)_P
=H_{2-\alpha}(A)_\rho,
\label{eq:prmi-copycc1}\\
I_\alpha^{\uparrow\downarrow}(A:B)_\rho 
&=I_\alpha^{\uparrow\downarrow}(X:Y)_P
=H_{\frac{1}{\alpha}}(A)_\rho,
\label{eq:prmi-copycc2}\\
I_\alpha^{\downarrow\downarrow}(A:B)_\rho 
&=I_\alpha^{\downarrow\downarrow}(X:Y)_P
=\begin{cases}
\frac{\alpha}{1-\alpha}H_{\infty}(A)_\rho
\qquad\text{if }\alpha\in (0,\frac{1}{2}]
\\
H_{\frac{\alpha}{2\alpha -1}}(A)_\rho 
\qquad\text{if }\alpha\in (\frac{1}{2},\infty).
\end{cases}
\label{eq:prmi-copycc3}
\end{align}
\eqref{eq:prmi-copycc1} follows by evaluating the definition of the non-minimized PRMI in~\eqref{eq:i-quantum-0}, 
\eqref{eq:prmi-copycc2} follows from Proposition~\ref{prop:gen-prmi}~(g),~(k), and~(q), and 
\eqref{eq:prmi-copycc3} follows from Theorem~\ref{thm:prmi2}~(v).
These three quantities are not necessarily the same. 
For illustrative purposes, an example is provided in Figure~\ref{fig:prmi-cc}.
\end{rem}

\begin{figure}
\begin{tikzpicture}
\begin{axis}[
    axis lines = left,
    xlabel = \(\alpha\),
    x label style={at={(axis description cs:1.06,0.173)}},
    ymin = 0,
    ytick distance=0.5,
    height = 0.35\textwidth,
    width = 0.6\textwidth,
    legend pos=south east,
    legend cell align={left},
    clip=false
]

\addplot[thick,color=Magenta]
table[x=alpha,y=rmi0]{data_prmi.txt};
\addlegendentry{\(I_\alpha^{\uparrow\uparrow}(A:B)_{\proj{\rho}}\)}
\addplot[thick,color=blue]
table[x=alpha,y=rmi1]{data_prmi.txt};
\addlegendentry{\(I_\alpha^{\uparrow\downarrow}(A:B)_{\proj{\rho}}\)}
\addplot[thick,color=LimeGreen]
table[x=alpha,y=rmi2]{data_prmi.txt};
\addlegendentry{\(I_\alpha^{\downarrow\downarrow}(A:B)_{\proj{\rho}}\)}

\addplot[color=LimeGreen,dashed,thick]
table[x=alpha,y=h1]{data_prmih.txt};
\addplot[color=blue,dashed,thick]
table[x=alpha,y=h2]{data_prmih.txt};
\addplot[color=Magenta,dashed,thick]
table[x=alpha,y=h3]{data_prmih.txt};
\addplot[color=lightgray,dashed,thick]
table[x=alpha,y=h4]{data_prmih.txt};
\addplot[color=LimeGreen,dashed,thick]
table[x=alpha,y=h5]{data_prmih.txt};
\addplot[color=blue,dashed,thick]
table[x=alpha,y=h6]{data_prmih.txt};

\node[right, text=black] at (309,21) {$H_\infty(A)_\rho$};
\node[right, text=black] at (309,42) {$2H_\infty(A)_\rho$};
\node[right, text=black] at (309,66) {$2H_{3}(A)_\rho$};
\node[right, text=black] at (309,100) {$2H(A)_\rho$};
\node[right, text=black] at (309,138) {$2H_0(A)_\rho$};
\node[right, text=black] at (309,183) {$2H_{-1}(A)_\rho$};

\end{axis}
\end{tikzpicture}
\caption{Comparison of PRMIs for a pure state. 
Suppose $d_A=2,d_B=2$,
and let $\{\ket{i}_A\}_{i=0}^1,\{\ket{i}_B\}_{i=0}^1$ be orthonormal vectors in $A,B$.
Let $\rho_{AB}\coloneqq\proj{\rho}_{AB}$, where 
$\ket{\rho}_{AB}\coloneqq \sqrt{p}\ket{0,0}_{AB}+\sqrt{1-p}\ket{1,1}_{AB}$ and $p\coloneqq 0.2$. 
Then, $\rho_A=p\proj{0}_A+(1-p)\proj{1}_A$. 
The solid lines depict the behavior of three PRMIs of $\rho_{AB}$, 
computed according to the expressions in~\eqref{eq:prmi-pure1}--\eqref{eq:prmi-pure3}. 
For comparison, the values of certain R\'enyi entropies of $\rho_A$ are indicated by dashed lines. 
The plot shows that the three PRMIs differ from each other for all $\alpha\in [0,1)\cup (1,\infty)$.}
\label{fig:prmi}

\bigskip

\begin{tikzpicture}
\begin{axis}[
    axis lines = left,
    xlabel = \(\alpha\),
    x label style={at={(axis description cs:1.06,0.173)}},
    ymin = 0,
    height = 0.35\textwidth,
    width = 0.6\textwidth,
    legend pos=south east,
    legend cell align={left},
    clip=false
]

\addplot[thick,color=Magenta]
table[x=alpha,y=rmi0]{data_prmi_cc.txt};
\addlegendentry{\(I_\alpha^{\uparrow\uparrow}(A:B)_{\rho}=I_\alpha^{\uparrow\uparrow}(X:Y)_P\)}
\addplot[thick,color=blue]
table[x=alpha,y=rmi1]{data_prmi_cc.txt};
\addlegendentry{\(I_\alpha^{\uparrow\downarrow}(A:B)_{\rho}=I_\alpha^{\uparrow\downarrow}(X:Y)_P\)}
\addplot[thick,color=LimeGreen]
table[x=alpha,y=rmi2]{data_prmi_cc.txt};
\addlegendentry{\(I_\alpha^{\downarrow\downarrow}(A:B)_{\rho}=I_\alpha^{\downarrow\downarrow}(X:Y)_P\)}

\addplot[color=blue,dashed,thick]
table[x=alpha,y=h1]{data_prmih_cc.txt};
\addplot[color=Magenta,dashed,thick]
table[x=alpha,y=h2]{data_prmih_cc.txt};
\addplot[color=lightgray,dashed,thick]
table[x=alpha,y=h3]{data_prmih_cc.txt};
\addplot[color=LimeGreen,dashed,thick]
table[x=alpha,y=h4]{data_prmih_cc.txt};
\addplot[color=blue,dashed,thick]
table[x=alpha,y=h5]{data_prmih_cc.txt};

\node[right, text=black] at (309,218) {$H_\infty(A)_\rho$};
\node[right, text=black] at (309,380) {$H_2(A)_\rho$};
\node[right, text=black] at (309,490) {$H(A)_\rho$};
\node[right, text=black] at (309,590) {$H_{1/2}(A)_\rho$};
\node[right, text=black] at (309,690) {$H_{0}(A)_\rho$};

\end{axis}
\end{tikzpicture}
\caption{Comparison of PRMIs for a copy-CC state. 
Suppose $d_A=2,d_B=2$,
let $\{\ket{i}_A\}_{i=0}^1,\{\ket{i}_B\}_{i=0}^1$ be orthonormal vectors in $A,B$. 
Let $P_{XY}(x,y)\coloneqq p\delta_{x,0}\delta_{y,0}+(1-p)\delta_{x,1}\delta_{y,1}$ for all $x,y\in \mathcal{X}\equiv \mathcal{Y}\coloneqq [d_A]$, where $p\coloneqq 0.2$. 
Let $\rho_{AB}\coloneqq \sum_{x\in \mathcal{X}}\sum_{y\in \mathcal{Y}} P_{XY}(x,y)\proj{x,y}_{AB}=p\proj{0,0}_{AB}+(1-p)\proj{1,1}_{AB}$. 
Then, $\rho_A=p\proj{0}_A+(1-p)\proj{1}_A$. 
The solid lines depict the behavior of three PRMIs for $\rho_{AB}$, 
computed according to the expressions in~\eqref{eq:prmi-copycc1}--\eqref{eq:prmi-copycc3}. 
For comparison, the values of certain R\'enyi entropies of $\rho_A$ are indicated by dashed lines.
The plot shows that the three PRMIs differ from each other for all $\alpha\in [0,1)\cup (1,\infty)$.}
\label{fig:prmi-cc}
\end{figure}

\subsection{Properties of the doubly minimized sandwiched R\'enyi mutual information}\label{ssec:srmi2}
In the following theorem, we present our results on properties of the doubly minimized SRMI of order $\alpha$. 
The focus of our results is on $\alpha\in [1,\infty]$, as this will prove to be the range of relevance for the later application of the doubly minimized SRMI in binary quantum state discrimination~\cite{burri2025prmisrmi2}.

\begin{thm}[Doubly minimized SRMI]\label{thm:srmi2}
Let $\rho_{AB}\in \mathcal{S}(AB)$.
Then all of the following hold.
\begin{enumerate}[label=(\alph*),noitemsep]
\item \emph{Symmetry:} 
$\widetilde{I}_{\alpha}^{\downarrow\downarrow}(A:B)_\rho = \widetilde{I}_\alpha ^{\downarrow\downarrow}(B:A)_\rho$
for all $\alpha \in (0,\infty]$.
\item \emph{Non-increase under local operations:} 
$\widetilde{I}_\alpha^{\downarrow\downarrow}(A:B)_\rho \geq \widetilde{I}_\alpha^{\downarrow\downarrow}(A':B')_{\mathcal{M}\otimes\mathcal{N}(\rho)}$ 
for any $\mathcal{M}\in\CPTP(A,A'),\mathcal{N}\in\CPTP(B,B')$ and all $\alpha\in [\frac{1}{2},\infty]$.
\item \emph{Invariance under local isometries:} 
$\widetilde{I}_\alpha^{\downarrow\downarrow}(A':B')_{V\otimes W\rho V^\dagger\otimes W^\dagger}
=\widetilde{I}_\alpha^{\downarrow\downarrow}(A:B)_{\rho}$ 
for any isometries $V\in \mathcal{L}(A,A'), W\in \mathcal{L}(B,B')$ and all $\alpha\in (0,\infty]$.
\item \emph{Additivity:} 
Let $\alpha\in [\frac{2}{3},\infty]$ and $\rho'_{DE}\in \mathcal{S}(DE)$. Then
\begin{equation}
\widetilde{I}_\alpha^{\downarrow\downarrow}(AD:BE)_{\rho_{AB}\otimes \rho'_{DE}}
=\widetilde{I}_\alpha^{\downarrow\downarrow}(A:B)_{\rho_{AB}}+\widetilde{I}_\alpha^{\downarrow\downarrow}(D:E)_{\rho'_{DE}}.
\end{equation}
\item \emph{Duality:}
Let $\ket{\rho}_{ABC}\in ABC$ be such that $\tr_C[\proj{\rho}_{ABC}]=\rho_{AB}$. 
Let $\alpha\in (\frac{1}{2},1)\cup (1,\infty]$ and $\beta\coloneqq \frac{\alpha}{2\alpha -1}\in[\frac{1}{2},1)\cup (1,\infty)$. Then 
$\frac{1}{\alpha}+\frac{1}{\beta}=2$ and all of the following hold. 

If $\alpha\in (\frac{1}{2},1)$, then 
\begin{align}\label{eq:i-duality-1}
\widetilde{I}_\alpha^{\downarrow\downarrow}(A:B)_{\rho}
&=-\frac{1}{\beta-1}\log 
\sup\limits_{\sigma_A\in \mathcal{S}(A)}
\inf\limits_{\mu_C\in \mathcal{S}_{>0}(C)}
\widetilde{Q}_\beta(\rho_{AC}\| \sigma_A^{-1}\otimes \mu_C).
\end{align}
If $\alpha\in [\frac{2}{3},1)$, then 
\begin{align}\label{eq:i-duality-2}
\widetilde{I}_\alpha^{\downarrow\downarrow}(A:B)_{\rho}
&=-\frac{1}{\beta-1}\log 
\inf\limits_{\mu_C\in \mathcal{S}_{>0}(C)} 
\sup\limits_{\sigma_A\in \mathcal{S}(A)}
\widetilde{Q}_\beta(\rho_{AC}\| \sigma_A^{-1}\otimes \mu_C).
\end{align}
If $\alpha\in (1,\infty]$, then 
\begin{align}
\widetilde{I}_\alpha^{\downarrow\downarrow}(A:B)_{\rho}
&=\frac{1}{1-\beta}\log 
\inf\limits_{\sigma_A\in \mathcal{S}_{\sim \rho_A}(A)}
\sup\limits_{\mu_C\in \mathcal{S}(C)} 
\widetilde{Q}_\beta(\rho_{AC}\| \sigma_A^{-1}\otimes \mu_C)
\label{eq:i-duality1}\\
&=\frac{1}{1-\beta}\log 
\sup\limits_{\mu_C\in \mathcal{S}(C)}
\inf\limits_{\sigma_A\in \mathcal{S}_{\sim \rho_A}(A)}
\widetilde{Q}_\beta(\rho_{AC}\| \sigma_A^{-1}\otimes \mu_C).
\label{eq:i-duality2}
\end{align}
\item \emph{Non-negativity:}
$\widetilde{I}_\alpha^{\downarrow\downarrow}(A:B)_\rho \geq 0$
for all $\alpha\in (0,\infty]$.
\item \emph{Upper bound:} 
Let $\alpha \in (0,\infty]$ and $r_A\coloneqq \rank(\rho_A)$.
Then $\widetilde{I}_\alpha^{\downarrow\downarrow}(A:B)_\rho\leq 2 H_{1/3}(A)_\rho \leq 2\log r_A$.

Furthermore, if $\alpha\in [\frac{2}{3},\infty]$, then 
$\widetilde{I}_\alpha^{\downarrow\downarrow}(A:B)_\rho=2\log r_A$ iff $\spec(\rho_A)\subseteq\{0,1/r_A\}$ and $H(A|B)_\rho=-\log r_A$.

If $\alpha\in [\frac{1}{2},\frac{2}{3})$ instead, then 
$\widetilde{I}_\alpha^{\downarrow\downarrow}(A:B)_\rho
\leq \frac{\alpha}{1-\alpha} H_\infty(A)_\rho 
\leq \frac{\alpha}{1-\alpha}\log r_A
< 2\log r_A$.
\item \emph{Deviation from doubly minimized PRMI:} 
$I_\alpha^{\downarrow\downarrow}(A:B)_\rho\geq \widetilde{I}_\alpha^{\downarrow\downarrow}(A:B)_\rho$ for all $\alpha\in (0,\infty)$ and 
$\widetilde{I}_\alpha^{\downarrow\downarrow}(A:B)_\rho\geq I_{2-\frac{1}{\alpha}}^{\downarrow\downarrow}(A:B)_\rho$ for all $\alpha\in [\frac{1}{2},\infty)$.
\item \emph{Existence of minimizers:}
Let $\alpha\in (0,\infty]$. Then 
\begin{equation}
\emptyset \neq 
\argmin_{(\sigma_A,\tau_B)\in \mathcal{S}(A)\times \mathcal{S}(B)}
\widetilde{D}_\alpha (\rho_{AB}\| \sigma_A\otimes \tau_B )
\subseteq \mathcal{S}_{\ll \rho_A}(A)\times \mathcal{S}_{\ll \rho_B}(B).
\end{equation}
\item \emph{Asymptotic optimality of universal permutation invariant state:} 
Let $\alpha \in [\frac{2}{3},\infty]$. Then 
\begin{equation}\label{eq:srmi2-omega1}
\widetilde{I}_\alpha^{\downarrow\downarrow}(A:B)_\rho
= \lim\limits_{n\rightarrow\infty}\frac{1}{n}\widetilde{D}_\alpha (\rho_{AB}^{\otimes n}\| \omega_{A^n}^n\otimes \omega_{B^n}^n)
= \lim\limits_{n\rightarrow\infty}\frac{1}{n}D_\alpha (\mathcal{P}_{\omega_{A^n}^n\otimes \omega_{B^n}^n}(\rho_{AB}^{\otimes n})\| \omega_{A^n}^n\otimes \omega_{B^n}^n)
\end{equation}
and for any $n\in \mathbb{N}_{>0}$
\begin{equation}\label{eq:srmi2-omega2}
\widetilde{I}_\alpha^{\downarrow\downarrow}(A:B)_\rho
=\inf_{\substack{\sigma_{A^n}\in \mathcal{S}_{\sym}(A^{\otimes n}), \\ \tau_{B^n}\in \mathcal{S}_{\sym}(B^{\otimes n}) }}
\frac{1}{n} \widetilde{D}_\alpha (\rho_{AB}^{\otimes n}\| \sigma_{A^n}\otimes \tau_{B^n})
=\inf_{\substack{\sigma_{A^n}\in \mathcal{S}(A^n), \\ \tau_{B^n}\in \mathcal{S}(B^n) }}
\frac{1}{n} \widetilde{D}_\alpha (\rho_{AB}^{\otimes n}\| \sigma_{A^n}\otimes \tau_{B^n}).
\end{equation}
Furthermore, for any $t\in [0,\infty)$
\begin{equation}\label{eq:srmi2-omega3}
\lim_{n\rightarrow\infty}t\sqrt{n} \left(\frac{1}{n} D_{1+\frac{t}{\sqrt{n}}}(\mathcal{P}_{\omega_{A^n}^n\otimes \omega_{B^n}^n}(\rho_{AB}^{\otimes n})\| \omega_{A^n}^n\otimes \omega_{B^n}^n) 
- I(A:B)_\rho \right)
=\frac{t^2}{2}V(A:B)_\rho.
\end{equation}
\item \emph{R\'enyi order $\alpha=1$:} 
$\widetilde{I}_1^{\downarrow\downarrow}(A:B)_\rho = I(A:B)_\rho$.
\item \emph{Monotonicity in $\alpha$:} 
If $\alpha,\beta\in (0,\infty]$ are such that $\alpha\leq \beta$, then 
$\widetilde{I}_\alpha^{\downarrow\downarrow}(A:B)_\rho \leq \widetilde{I}_\beta^{\downarrow\downarrow}(A:B)_\rho$.
\item \emph{Continuity in $\alpha$:} 
The function 
$(0,\infty)\rightarrow [0,\infty),\alpha \mapsto \widetilde{I}_\alpha^{\downarrow\downarrow} (A:B)_\rho$ is continuous and 
$\lim_{\alpha\rightarrow\infty}\widetilde{I}_\alpha^{\downarrow\downarrow} (A:B)_\rho=\widetilde{I}_{\infty}^{\downarrow\downarrow} (A:B)_\rho$.
\item \emph{Differentiability in $\alpha$:} 
The function 
$(1,\infty)\rightarrow [0,\infty),\alpha \mapsto \widetilde{I}_\alpha^{\downarrow\downarrow} (A:B)_\rho$ is continuously differentiable. 
For any $\alpha\in (1,\infty)$ and any fixed 
$(\sigma_A,\tau_B)\in \argmin_{(\sigma_A',\tau_B')\in \mathcal{S}(A)\times\mathcal{S}(B)} 
\widetilde{D}_\alpha(\rho_{AB}\| \sigma_A'\otimes \tau_B')$, the derivative at $\alpha$ is
\begin{equation}\label{eq:srmi2-diff}
\frac{\mathrm{d}}{\mathrm{d}\alpha} \widetilde{I}_\alpha^{\downarrow\downarrow}(A:B)_\rho 
=\frac{\partial}{\partial\alpha} \widetilde{D}_\alpha (\rho_{AB}\| \sigma_A\otimes \tau_B ).
\end{equation}
Moreover, 
$\frac{\partial}{\partial\alpha^+} \widetilde{I}_\alpha^{\downarrow\downarrow}(A:B)_\rho \big|_{\alpha=1}
=\frac{\mathrm{d}}{\mathrm{d}\alpha} \widetilde{D}_\alpha (\rho_{AB}\| \rho_A\otimes \rho_B )  \big|_{\alpha=1}
=\frac{1}{2}V(A:B)_\rho$.
\item \emph{Convexity in $\alpha$:} 
The function 
$[\frac{2}{3},\infty)\rightarrow \mathbb{R},\alpha\mapsto (\alpha -1) \widetilde{I}_\alpha^{\downarrow\downarrow}(A:B)_\rho$ is convex.
\item \emph{Product states:} 
Let $\alpha\in (0,\infty]$.
Then $\rho_{AB}=\rho_A\otimes \rho_B$ iff $\widetilde{I}^{\downarrow\downarrow}_\alpha(A:B)_\rho=0$.
\item \emph{$AC$-independent states:} 
Let $\ket{\rho}_{ABC}\in ABC$ be such that $\tr_C[\proj{\rho}_{ABC}]=\rho_{AB}$. 
If $\rho_{AC}=\rho_A\otimes\rho_C$, then for all $\alpha\in [\frac{1}{2},\infty)$
\begin{equation}\label{eq:srmi2-ac}
\widetilde{I}_\alpha^{\downarrow\downarrow}(A:B)_{\rho}
=\begin{cases}
\frac{\alpha}{1-\alpha}H_\infty(A)_\rho
\hspace*{2em}\text{if }\alpha\in [\frac{1}{2},\frac{2}{3}]
\\
2H_{\frac{\alpha}{3\alpha-2}}(A)_\rho
\hspace*{2em}\text{if }\alpha\in (\frac{2}{3},\infty).
\end{cases}
\end{equation}
\item \emph{Pure states:} 
If there exists $\ket{\rho}_{AB}\in AB$ such that $\rho_{AB}=\proj{\rho}_{AB}$, then for all $\alpha \in (0,\infty)$
\begin{equation}\label{eq:srmi2-pure}
\widetilde{I}_\alpha^{\downarrow\downarrow}(A:B)_{\proj{\rho}}
=\widetilde{D}_\alpha(\rho_{AB}\| \sigma_A\otimes\tau_B)
=\begin{cases}
\frac{\alpha}{1-\alpha}H_\infty(A)_\rho
\hspace*{2em}\text{if }\alpha\in (0,\frac{2}{3}]
\\
2H_{\frac{\alpha}{3\alpha-2}}(A)_\rho
\hspace*{2em}\text{if }\alpha\in (\frac{2}{3},\infty),
\end{cases}
\end{equation}
where 
$\sigma_A\coloneqq \rho_A^{\frac{\alpha}{3\alpha -2}}/\tr[\rho_A^{\frac{\alpha}{3\alpha -2}}], 
\tau_B\coloneqq \rho_B^{\frac{\alpha}{3\alpha -2}}/\tr[\rho_B^{\frac{\alpha}{3\alpha -2}}]$ 
if $\alpha\in (\frac{2}{3},\infty)$, 
and if $\alpha\in (0,\frac{2}{3}]$, 
then $\ket{\sigma}_A\in A$ is defined as a unit eigenvector of $\rho_A$ corresponding to the largest eigenvalue of $\rho_A$, 
$\sigma_A\coloneqq \proj{\sigma}_A$, $\ket{\tau}_B\coloneqq \bra{\sigma}_A\ket{\rho}_{AB}/\sqrt{\bra{\sigma}_A\rho_A\ket{\sigma}_A}$, and $\tau_B\coloneqq \proj{\tau}_B$.
\item \emph{CC states:} 
Let $P_{XY}$ be the joint PMF of two random variables $X,Y$ over $\mathcal{X}\coloneqq [d_A],\mathcal{Y}\coloneqq [d_B]$. 
If there exist orthonormal bases $\{\ket{a_x}_A\}_{x\in [d_A]},\{\ket{b_y}_B\}_{y\in [d_B]}$ for $A,B$ such that 
$\rho_{AB}=\sum_{x\in \mathcal{X}}\sum_{y\in \mathcal{Y}}P_{XY}(x,y)\proj{a_x,b_y}_{AB}$, then for all $\alpha\in [\frac{1}{2},\infty]$
\begin{equation}
\widetilde{I}_\alpha^{\downarrow\downarrow}(A: B)_\rho
=I_\alpha^{\downarrow\downarrow}(X:Y)_P.
\end{equation}
\end{enumerate}
\end{thm}
\begin{proof}
See Appendix~\ref{proof:srmi2}.
\end{proof}

\begin{rem}[Previous results on properties of the doubly minimized SRMI] \label{rem:previous}
In~\cite{cheng2023tight}, the following three properties of the doubly minimized SRMI of order $\alpha\in (1,\infty)$ have been established. 
Firstly, the optimization problem occurring in the definition of the doubly minimized SRMI~\eqref{eq:srmi2} is jointly convex in $\sigma_A$ and $\tau_B$. 
Secondly, minimizers $(\sigma_A,\tau_B)$ can be characterized in terms of a fixed-point property of these states on $AB$. 
Thirdly, the doubly minimized SRMI of order $\alpha$ is additive. 
Our work does not make use of the results or proof methods in~\cite{cheng2023tight}. 
In particular, our proof of additivity for $\alpha\in [\frac{2}{3},\infty]$ is independent of~\cite{cheng2023tight} because the proof methods used are different. 
In~\cite{cheng2023tight}, the proof of additivity is based on their fixed-point property of minimizers on $AB$. 
In contrast, our proof of additivity is based on the novel duality relation expressed in~\eqref{eq:i-duality-2} and~\eqref{eq:i-duality2} in Theorem~\ref{thm:srmi2}~(e). 
Our proof of duality also employs a fixed-point property, albeit a distinct type of fixed-point property that applies to the dual system $AC$ rather than the original system $AB$, see Lemma~\ref{lem:fixed_point} and Lemma~\ref{lem:fixedpoint}.
\end{rem}

\begin{rem}[Inequivalence of SRMIs: Pure states]
The non-minimized, the singly minimized, and the doubly minimized SRMI of a given quantum state $\rho_{AB}$ are not necessarily the same for $\alpha\neq 1$, as we now show. 
Consider a pure state $\rho_{AB}=\proj{\rho}_{AB}\in \mathcal{S}(AB)$. Then, for all $\alpha\in (0,\infty)$
\begin{align}
\widetilde{I}_\alpha^{\uparrow\uparrow}(A:B)_\rho 
&=2H_{\frac{2-\alpha}{\alpha}}(A)_\rho,
\label{eq:srmi-pure1}\\
\widetilde{I}_\alpha^{\uparrow\downarrow}(A:B)_\rho 
&=\begin{cases}
\frac{1}{1-\alpha} H_{\infty}(A)_\rho 
\quad\text{ if }\alpha\in (0,\frac{1}{2}]
\\
2H_{\frac{1}{2\alpha-1}}(A)_\rho 
\quad\text{ if }\alpha\in (\frac{1}{2},\infty),
\end{cases}
\label{eq:srmi-pure2}\\
\widetilde{I}_\alpha^{\downarrow\downarrow}(A:B)_\rho 
&=\begin{cases}
\frac{\alpha}{1-\alpha}H_\infty(A)_\rho
\hspace*{2em}\text{if }\alpha\in (0,\frac{2}{3}]
\\
2H_{\frac{\alpha}{3\alpha-2}}(A)_\rho
\hspace*{2em}\text{if }\alpha\in (\frac{2}{3},\infty).
\end{cases}
\label{eq:srmi-pure3}
\end{align}
\eqref{eq:srmi-pure1} follows by evaluating the definition of the non-minimized SRMI in~\eqref{eq:i-quantum-00}, 
\eqref{eq:srmi-pure2} follows from Proposition~\ref{prop:gen-srmi}~(l) and~(q), and 
\eqref{eq:srmi-pure3} follows from Theorem~\ref{thm:srmi2}~(r).
These three quantities are not necessarily the same; 
see Figure~\ref{fig:srmi} for an example. 
Furthermore, since the state $\rho_{AB}$ considered in Figure~\ref{fig:srmi} is the same as in Figure~\ref{fig:prmi}, one can, by comparing these two figures, infer that the sandwiched quantities can differ from the corresponding Petz quantities for $\alpha \neq 1$.
\end{rem}

\begin{figure}[t]\centering
\begin{tikzpicture}
\begin{axis}[
    axis lines = left,
    xlabel = \(\alpha\),
    x label style={at={(axis description cs:1.06,0.173)}},
    ymin = 0,
    ymax = 2.15,
    height = 0.35\textwidth,
    width = 0.6\textwidth,
    legend pos=south east,
    legend cell align={left},
    clip=false
]

\addplot[thick,color=blue]
table[x=alpha,y=rmi0]{data_srmi.txt};
\addlegendentry{\(\widetilde{I}_\alpha^{\uparrow\uparrow}(A:B)_{\proj{\rho}}\)}
\addplot[thick,color=LimeGreen]
table[x=alpha,y=rmi1]{data_srmi.txt};
\addlegendentry{\(\widetilde{I}_\alpha^{\uparrow\downarrow}(A:B)_{\proj{\rho}}\)}
\addplot[thick,color=RedOrange]
table[x=alpha,y=rmi2]{data_srmi.txt};
\addlegendentry{\(\widetilde{I}_\alpha^{\downarrow\downarrow}(A:B)_{\proj{\rho}}\)}

\addplot[color=LimeGreen,dashed,thick]
table[x=alpha,y=h1]{data_srmih.txt};
\addplot[color=blue,dashed,thick]
table[x=alpha,y=h2]{data_srmih.txt};
\addplot[color=lightgray,dashed,thick]
table[x=alpha,y=h3]{data_srmih.txt};
\addplot[color=RedOrange,dashed,thick]
table[x=alpha,y=h4]{data_srmih.txt};
\addplot[color=LimeGreen,dashed,thick]
table[x=alpha,y=h5]{data_srmih.txt};
\addplot[color=blue,dashed,thick]
table[x=alpha,y=h6]{data_srmih.txt};

\node[right, text=black] at (309,21) {$H_\infty(A)_\rho$};
\node[right, text=black] at (309,45) {$2H_\infty(A)_\rho$};
\node[right, text=black] at (309,100) {$2H(A)_\rho$};
\node[right, text=black] at (309,120) {$2H_{1/3}(A)_\rho$};
\node[right, text=black] at (309,141) {$2H_0(A)_\rho$};
\node[right, text=black] at (309,183) {$2H_{-1}(A)_\rho$};

\end{axis}
\end{tikzpicture}
\caption{Comparison of SRMIs for a pure state. 
Suppose $d_A=2,d_B=2$, 
and let $\{\ket{i}_A\}_{i=0}^1,\{\ket{i}_B\}_{i=0}^1$ be orthonormal vectors in $A,B$.
Let $\rho_{AB}\coloneqq\proj{\rho}_{AB}$, where $\ket{\rho}_{AB}\coloneqq \sqrt{p}\ket{0,0}_{AB}+\sqrt{1-p}\ket{1,1}_{AB}$ and $p\coloneqq 0.2$. 
Then, $\rho_A=p\proj{0}_A+(1-p)\proj{1}_A$. 
The solid lines depict the behavior of three SRMIs of $\rho_{AB}$, 
computed according to the expressions in~\eqref{eq:srmi-pure1}--\eqref{eq:srmi-pure3}. 
For comparison, the values of certain R\'enyi entropies of 
$\rho_A$ are indicated by dashed lines.
The plot shows that the three SRMIs differ from each other for all $\alpha\in (0,1)\cup (1,\infty]$.}
\label{fig:srmi}
\end{figure}

\begin{acknowledgments}
The author thanks Renato Renner for valuable discussions and comments. 
This work was supported by 
the Swiss National Science Foundation via grant No.\ 200021\_188541
and the National Centre of Competence in Research SwissMAP, 
and the Quantum Center at ETH Zurich.
\end{acknowledgments}

\appendix

\section{Properties of the minimized generalized Petz and sandwiched R\'enyi mutual information}\label{sec:gen-prmi}
\begin{prop}[Minimized generalized PRMI]\label{prop:gen-prmi} 
Let $\rho_{AB}\in \mathcal{S}(AB)$ and let $\sigma_A\in \mathcal{S}(A)$.
Then all of the following hold.
\begin{enumerate}[label=(\alph*)]
\item \emph{Non-increase under local operations:} 
$I_\alpha^{\downarrow}(\rho_{AB}\| \sigma_A) \geq I_\alpha^{\downarrow}(\mathcal{M}\otimes \mathcal{N}(\rho_{AB})\| \mathcal{M}(\sigma_A))$ 
for any $\mathcal{M}\in \CPTP(A,A'),\mathcal{N}\in \CPTP(B,B')$ and all $\alpha\in [0,2]$.
\item \emph{Invariance under local isometries:} 
$I_\alpha^{\downarrow}(V\otimes W\rho_{AB}V^\dagger \otimes W^\dagger\| V\sigma_{A}V^\dagger )
=I_\alpha^{\downarrow}(\rho_{AB}\| \sigma_A) $ 
for any isometries $V\in \mathcal{L}(A,A'),W\in \mathcal{L}(B,B')$ and all $\alpha\in [0,\infty)$.
\item \emph{Additivity:} \cite{hayashi2016correlation} 
Let $\alpha\in [0,\infty)$ and $\rho'_{DE}\in \mathcal{S}(DE),\sigma'_D\in \mathcal{S}(D)$. 
If $(\alpha\in [0,1)\land \rho_A\not\perp\sigma_A\land \rho_D'\not\perp\sigma_D')\lor (\rho_A\ll \sigma_A\land \rho'_D\ll \sigma'_D)$, then
\begin{equation}
I_\alpha^{\downarrow}(\rho_{AB}\otimes \rho'_{DE}\| \sigma_A\otimes \sigma'_D)
=I_\alpha^{\downarrow}(\rho_{AB}\| \sigma_A)+I_\alpha^{\downarrow}(\rho'_{DE}\| \sigma'_D).
\end{equation}
\item \emph{Duality:} \cite{hayashi2016correlation} 
Let $\ket{\rho}_{ABC}\in ABC$ be such that $\tr_C[\proj{\rho}_{ABC}]=\rho_{AB}$.
Let $\alpha\in (0,1)\cup (1,\infty)$ and $\beta\coloneqq \frac{1}{\alpha}$. 
If $(\alpha\in (0,1)\land \rho_A\not\perp \sigma_A)\lor  \rho_A\ll \sigma_A$, then
\begin{equation}\label{eq:gen-prmi-duality}
I_\alpha^{\downarrow}(\rho_{AB}\| \sigma_A)
=-\frac{1}{\beta-1}\log  \widetilde{Q}_\beta (\rho_{AC}\| \sigma_A^{-1}\otimes \rho_C).
\end{equation}
\item \emph{Non-negativity:}
Let $\alpha\in [0,\infty)$. Then $I_\alpha^{\downarrow}(\rho_{AB}\| \sigma_A)\in [0,\infty]$.
Furthermore, $I_\alpha^{\downarrow}(\rho_{AB}\| \sigma_A)$ is finite iff 
$(\alpha\in [0,1)\land \rho_A\not\perp\sigma_A)\lor \rho_A\ll \sigma_A$.
\item \emph{Existence and uniqueness of minimizer:} \cite{hayashi2016correlation} 
Let $\alpha\in [0,\infty)$. 
If $\alpha\neq 0\land ((\alpha\in (0,1)\land \rho_A\not\perp\sigma_A)\lor \rho_A\ll \sigma_A)$, then 
$\hat{\tau}_B\coloneqq (\tr_A[\rho_{AB}^\alpha \sigma_A^{1-\alpha}])^{\frac{1}{\alpha}}/\tr[(\tr_A[\rho_{AB}^\alpha \sigma_A^{1-\alpha}])^{\frac{1}{\alpha}}]\in \mathcal{S}_{\ll \rho_B}(B)$ and
\begin{equation}\label{eq:gen-prmi-tau}
\argmin_{\tau_B\in \mathcal{S}(B)}D_\alpha (\rho_{AB}\| \sigma_A\otimes \tau_B)=\{\hat{\tau}_B\}.
\end{equation}
If $\alpha=0\land \rho_A\not\perp\sigma_A$, then $\emptyset\neq \argmin_{\tau_B\in \mathcal{S}(B)}D_\alpha (\rho_{AB}\| \sigma_A\otimes \tau_B)\subseteq \mathcal{S}_{\ll \rho_B}(B)$.
\item \emph{Closed-form expression:} \cite{hayashi2016correlation} 
Let $\alpha\in (0,1)\cup (1,\infty)$.
If $(\alpha\in (0,1)\land \rho_A\not\perp\sigma_A)\lor  \rho_A\ll \sigma_A$, then
\begin{equation}\label{eq:i-gen-explicit}
I_{\alpha}^{\downarrow}(\rho_{AB}\| \sigma_A)
=\frac{1}{\alpha -1}\log\lVert \tr_A[\rho_{AB}^\alpha \sigma_A^{1-\alpha}] \rVert_{\frac{1}{\alpha}}.
\end{equation}
\item \emph{Asymptotic optimality of universal permutation invariant state:} 
Let $\alpha \in [0,2]$. 
If $(\alpha\in [0,1)\land \rho_A\not \perp\sigma_A)\lor \rho_A\ll \sigma_A$, then 
\begin{equation}\label{eq:prmi-gen-omega1}
I_\alpha^{\downarrow}(\rho_{AB}\| \sigma_A)
= \lim\limits_{n\rightarrow\infty}\frac{1}{n}D_\alpha (\rho_{AB}^{\otimes n}\| \sigma_{A}^{\otimes n}\otimes \omega_{B^n}^n)
\end{equation}
and for any $n\in \mathbb{N}_{>0}$
\begin{align}\label{eq:prmi-gen-omega2}
I_\alpha^{\downarrow}(\rho_{AB}\| \sigma_A)
= \inf_{ \tau_{B^n}\in \mathcal{S}_{\sym}(B^{\otimes n}) }
\frac{1}{n}D_\alpha (\rho_{AB}^{\otimes n}\| \sigma_{A}^{\otimes n}\otimes \tau_{B^n})
=\inf_{ \tau_{B^n}\in \mathcal{S}(B^n) }
\frac{1}{n}D_\alpha (\rho_{AB}^{\otimes n}\| \sigma_{A}^{\otimes n}\otimes \tau_{B^n}).
\end{align}
\item \emph{R\'enyi order $\alpha \in \{0,1\}$:} 
$I_1^{\downarrow}(\rho_{AB}\| \sigma_A)=D(\rho_{AB}\| \sigma_A\otimes \rho_B)$.
Furthermore, if $\rho_A\not\perp\sigma_A$, then
\begin{align}\label{eq:i0-gen-down}
I_{0}^{\downarrow}(\rho_{AB}\| \sigma_A)
=-\log \lVert \tr_A[\rho_{AB}^0 \sigma_A] \rVert_{\infty}
=\min_{\substack{\ket{\tau}_B\in \supp(\rho_B):\\ \brak{\tau}_B =1}}D_0(\rho_{AB}\| \sigma_A\otimes \proj{\tau}_B).
\end{align}
\item \emph{Monotonicity in $\alpha$:} 
If $\alpha,\beta\in [0,\infty)$ are such that $\alpha\leq \beta$, then 
$I_\alpha^{\downarrow}(\rho_{AB}\| \sigma_A)\leq I_\beta^{\downarrow}(\rho_{AB}\| \sigma_A)$.
\item \emph{Continuity in $\alpha$:} 
If $\rho_A\not\perp\sigma_A$, then the function 
$[0,1)\rightarrow [0,\infty),\alpha\mapsto I_\alpha^{\downarrow}(\rho_{AB}\| \sigma_A)$ is continuous.
If $\rho_A\ll\sigma_A$, then the function 
$[0,\infty)\rightarrow [0,\infty),\alpha\mapsto I_\alpha^{\downarrow}(\rho_{AB}\| \sigma_A)$ is continuous.
\item \emph{Differentiability in $\alpha$:} 
If $\rho_A\ll\sigma_A$, then all of the following hold.

The function 
$(0,2)\rightarrow [0,\infty),\alpha\mapsto I_\alpha^{\downarrow}(\rho_{AB}\| \sigma_A)$ is continuously differentiable. 
For any $\alpha\in (0,2)$ and any fixed 
$\tau_B\in \argmin_{\tau_B'\in \mathcal{S}(B)} D_\alpha(\rho_{AB}\| \sigma_A\otimes\tau_B')$, the derivative at $\alpha$ is 
\begin{equation}\label{eq:gen-prmi-diff}
\frac{\mathrm{d}}{\mathrm{d} \alpha}I_\alpha^\downarrow(\rho_{AB}\| \sigma_A)
=\frac{\partial}{\partial \alpha} D_\alpha(\rho_{AB}\| \sigma_A\otimes\tau_B).
\end{equation}
In particular,  
$\frac{\mathrm{d}}{\mathrm{d} \alpha}I_\alpha^\downarrow(\rho_{AB}\| \sigma_A)|_{\alpha=1}
=\frac{\mathrm{d}}{\mathrm{d} \alpha} D_\alpha(\rho_{AB}\| \sigma_A\otimes\rho_B)|_{\alpha=1}
=\frac{1}{2}V(\rho_{AB}\| \sigma_A\otimes\rho_B)$.
\item \emph{Convexity in $\alpha$:} 
If $\rho_A\not\perp\sigma_A$, then the function
$[0,1)\rightarrow \mathbb{R},\alpha\mapsto (\alpha-1)I_\alpha^\downarrow (\rho_{AB}\| \sigma_A)$
is convex. 
If $\rho_A\ll\sigma_A$, then the function
$[0,2]\rightarrow \mathbb{R},\alpha\mapsto (\alpha-1)I_\alpha^\downarrow (\rho_{AB}\| \sigma_A)$
is convex.
\item \emph{Product states:} 
If $\rho_{AB}=\rho_A\otimes\rho_B$, then $I_\alpha^{\downarrow}(\rho_{AB}\| \sigma_A)=D_\alpha(\rho_A\| \sigma_A)$ and $I_\alpha^{\downarrow}(\rho_{AB}\| \rho_A)=0$ for all $\alpha\in [0,\infty)$.
Conversely, for any $\alpha\in (0,\infty)$, if $I_\alpha^{\downarrow}(\rho_{AB}\| \sigma_A)=0$, then $\rho_{AB}=\rho_A\otimes\rho_B$ and $\sigma_A=\rho_A$.
\item \emph{$AC$-independent states:} 
Let $\ket{\rho}_{ABC}\in ABC$ be such that $\tr_C[\proj{\rho}_{ABC}]=\rho_{AB}$. 
Let $\alpha\in (0,1)\cup (1,\infty)$ and $\beta\coloneqq \frac{1}{\alpha}$.
If $\rho_{AC}=\rho_A\otimes\rho_C$ and
$(\alpha\in (0,1)\land\rho_A\not\perp\sigma_A)\lor \rho_A\ll \sigma_A$, then 
$I_\alpha^{\downarrow}(\rho_{AB}\|\sigma_A)=-\frac{1}{\beta -1}\log \widetilde{Q}_\beta(\rho_A\| \sigma_A^{-1})$.
\item \emph{Pure states:} 
Let $\alpha\in (0,1)\cup (1,\infty)$ and $\beta\coloneqq \frac{1}{\alpha}$.
If there exists $\ket{\rho}_{AB}\in AB$ such that $\rho_{AB}=\proj{\rho}_{AB}$ 
and $(\alpha\in (0,1)\land \rho_A\not\perp \sigma_A)\lor \rho_A\ll \sigma_A$, then
$I_\alpha^{\downarrow}(\proj{\rho}_{AB}\| \sigma_A)=-\frac{1}{\beta -1}\log \widetilde{Q}_\beta(\rho_A\| \sigma_A^{-1})$.
\item \emph{CC states:} 
Let $\alpha\in [0,\infty)$.
Let $P_{XY}$ be the joint PMF of two random variables $X,Y$ over $\mathcal{X}\coloneqq [d_A],\mathcal{Y}\coloneqq [d_B]$.
If there exist orthonormal bases $\{\ket{a_x}_A\}_{x\in [d_A]},\{\ket{b_y}_B\}_{y\in [d_B]}$ for $A,B$ such that 
$\rho_{AB}=\sum_{x\in \mathcal{X}}\sum_{y\in \mathcal{Y}}P_{XY}(x,y)\proj{a_x,b_y}_{AB}$
and $(\alpha\in [0,1)\land \rho_A\not\perp\sigma_A)\lor \rho_A\ll \sigma_A$, then 
\begin{equation}
I_\alpha^{\downarrow}(\rho_{AB}\| \sigma_A)
=\min_{\substack{\tau_B\in \mathcal{S}(B):\\ \exists (t_y)_{y\in \mathcal{Y}}\in [0,1]^{\times |\mathcal{Y}|}:\\ \tau_B=\sum\limits_{y\in \mathcal{Y}}t_y \proj{b_y}_B }} D_\alpha(\rho_{AB}\| \sigma_A\otimes \tau_B).
\end{equation}
\end{enumerate}
\end{prop}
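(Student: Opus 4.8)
The plan is to obtain all seventeen items from three inputs: the properties of the Petz divergence collected in Proposition~\ref{prop:petz-divergence}, the corresponding properties of the non-minimized generalized PRMI in Proposition~\ref{prop:gen-prmi-up}, and the results of~\cite{hayashi2016correlation}, where the minimized generalized PRMI was introduced in precisely this generality. Accordingly, items~(c) (additivity), (d) (duality), (f) (existence and uniqueness of the minimizer) and~(g) (closed-form expression) are cited from~\cite{hayashi2016correlation}. Their common engine is the identity $Q_\alpha(\rho_{AB}\|\sigma_A\otimes\tau_B)=\tr[M_B\,\tau_B^{1-\alpha}]$ with $M_B\coloneqq\tr_A[\rho_{AB}^\alpha\sigma_A^{1-\alpha}]\geq 0$, which turns the inner minimization over $\tau_B$ into a (reverse-)Hölder extremum of $\tau_B\mapsto\tr[M_B\tau_B^{1-\alpha}]$ over density operators, with unique optimizer $\hat\tau_B=M_B^{1/\alpha}/\tr[M_B^{1/\alpha}]$ (satisfying $\supp\hat\tau_B=\supp\rho_B$ under the support hypothesis on $\sigma_A$) and $I_\alpha^\downarrow(\rho_{AB}\|\sigma_A)=\tfrac{1}{\alpha-1}\log\|M_B\|_{1/\alpha}$. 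I record this explicitly because it, rather than~(c)--(d), is what drives the remaining items.

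\paragraph*{Behaviour in $\alpha$; asymptotic optimality.}
Monotonicity~(j) is immediate, since the pointwise infimum over $\tau_B$ of $D_\alpha(\rho_{AB}\|\sigma_A\otimes\tau_B)$ inherits monotonicity in $\alpha$ from Proposition~\ref{prop:petz-divergence}(i). Continuity~(k) I would read off the closed form, which is manifestly continuous in $\alpha$ on $(0,1)$ and on $(0,\infty)$; the value at $\alpha=0$ follows by taking the limit of the closed form, which also yields the $\alpha=0$ half of~(i) together with the identity $Q_0(\rho\|\sigma)=\tr[\Pi_\rho\sigma]$ (with $\Pi_\rho$ the projection onto $\supp\rho$), and $\alpha=1$ is handled by the limit defining $D_1$. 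The derivative formula in~(l) I would obtain from an envelope (Danskin) argument: the minimizer $\hat\tau_B(\alpha)$ is unique by~(f) and continuous in $\alpha$ through its closed form, so the derivative of the minimum equals $\tfrac{\partial}{\partial\alpha}D_\alpha(\rho_{AB}\|\sigma_A\otimes\tau_B)$ evaluated at $\tau_B=\hat\tau_B(\alpha)$. For~(m) I would split at $\alpha=1$: on $[0,1)$, $(\alpha-1)I_\alpha^\downarrow(\rho_{AB}\|\sigma_A)=\sup_{\tau_B}\log Q_\alpha(\rho_{AB}\|\sigma_A\otimes\tau_B)$ is a supremum of the convex functions $\alpha\mapsto(\alpha-1)D_\alpha$ of Proposition~\ref{prop:petz-divergence}(m), hence convex; on $[1,2]$ one uses instead the closed form and the log-convexity of $\alpha\mapsto\|\tr_A[\rho_{AB}^\alpha\sigma_A^{1-\alpha}]\|_{1/\alpha}$, exactly as for $I_\alpha^{\uparrow\downarrow}$ in~\cite{hayashi2016correlation}. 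For~(h): by the $n$-letter closed form (and its $\alpha\in\{0,1\}$ analogues) together with multiplicativity of the Schatten (quasi-)norm under tensor powers one gets $\inf_{\tau_{B^n}\in\mathcal{S}(B^n)}\tfrac1n D_\alpha(\rho_{AB}^{\otimes n}\|\sigma_A^{\otimes n}\otimes\tau_{B^n})=I_\alpha^\downarrow(\rho_{AB}\|\sigma_A)$ exactly; by the data-processing inequality under the symmetrizing twirl on $A^nB^n$ (which fixes $\rho_{AB}^{\otimes n}$ and $\sigma_A^{\otimes n}$) this infimum is unchanged when restricted to $\tau_{B^n}\in\mathcal{S}_{\sym}(B^{\otimes n})$, and since such states satisfy $\tau_{B^n}\leq g_{n,d_B}\,\omega_{B^n}^n$ by Proposition~\ref{prop:universal-state}(b), the dominance and normalization of $D_\alpha$ together with $\tfrac1n\log g_{n,d_B}\to 0$ squeeze in the $\omega_{B^n}^n$ version.

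\paragraph*{The elementary items.}
Item~(a) is the data-processing inequality of Proposition~\ref{prop:petz-divergence}(a) combined with the fact that $\{\mathcal{N}(\tau_B):\tau_B\in\mathcal{S}(B)\}\subseteq\mathcal{S}(B')$ only shrinks the feasible set of the inner minimization; (b) is the two-sided version of~(a) applied to an isometric channel and a CPTP recovery map. Non-negativity in~(e) is Proposition~\ref{prop:petz-divergence}(f), and the finiteness criterion follows because $I_\alpha^\downarrow(\rho_{AB}\|\sigma_A)$ is sandwiched between $I_\alpha^\uparrow(\rho_{AB}\|\sigma_A)$ (the choice $\tau_B=\rho_B$) and the finite value of $D_\alpha$ at any full-rank $\tau_B$, reducing it to Proposition~\ref{prop:gen-prmi-up}(e). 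The $\alpha=1$ parts of~(i) are~\eqref{eq:min-re-t}: by Klein's inequality the inner minimizer is uniquely $\rho_B$ when $\rho_A\ll\sigma_A$, and both sides are $+\infty$ otherwise. For~(n), additivity of $D_\alpha$ gives $D_\alpha(\rho_A\otimes\rho_B\|\sigma_A\otimes\tau_B)=D_\alpha(\rho_A\|\sigma_A)+D_\alpha(\rho_B\|\tau_B)$, whose second summand is uniquely minimized to $0$ at $\tau_B=\rho_B$ by positive definiteness; the converse uses that a minimizer exists by~(f) and then $D_\alpha(\rho_{AB}\|\sigma_A\otimes\tau_B)=0$ forces $\rho_{AB}=\sigma_A\otimes\tau_B$. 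Items~(o) and~(p) follow from duality~(d) and tensor-multiplicativity of $\widetilde{Q}_\beta$; for~(p) one takes $C$ one-dimensional, so that $\rho_{AC}=\rho_A\otimes\rho_C$ holds trivially. Finally~(q): for a CC state one computes $M_B=\tr_A[\rho_{AB}^\alpha\sigma_A^{1-\alpha}]=\sum_y\big(\sum_x P_{XY}(x,y)^\alpha\,\bra{a_x}\sigma_A^{1-\alpha}\ket{a_x}\big)\proj{b_y}_B$, which is diagonal in $\{\ket{b_y}_B\}$, so the minimizer $\hat\tau_B\propto M_B^{1/\alpha}$ of~(f) is diagonal in that basis; the cases $\alpha\in\{0,1\}$ reduce to the corresponding formulas in~(i).

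\paragraph*{Main obstacle.}
Almost everything here is bookkeeping. The two items that need genuine analytic input are~(l), where the Danskin step must be underpinned by uniqueness and continuity of the minimizer path (both furnished by~(f)), and the $\alpha\in[1,2]$ half of~(m), which rests on a log-convexity property of the trace functional $\alpha\mapsto\|\tr_A[\rho_{AB}^\alpha\sigma_A^{1-\alpha}]\|_{1/\alpha}$ rather than on a soft supremum argument; both can be imported essentially verbatim from the treatment of $I_\alpha^{\uparrow\downarrow}$ in~\cite{hayashi2016correlation}, so that the net new effort is organizational --- tracking the support hypotheses on $\sigma_A$ consistently through all seventeen items.
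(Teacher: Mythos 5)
Your overall architecture coincides with the paper's: (d), (f), (g) are imported from~\cite{hayashi2016correlation} via the Sibson-identity/reverse-H\"older reduction, the elementary items come from Proposition~\ref{prop:petz-divergence}, (h) uses the twirl plus Proposition~\ref{prop:universal-state}(b), (o)--(p) follow from duality, and (q) from the explicit diagonal minimizer. Two steps, however, do not go through as written. First, item~(b) is asserted for all $\alpha\in[0,\infty)$, but your route --- ``the two-sided version of~(a)'' applied to an isometric channel and a CPTP recovery map --- rests on the data-processing inequality, which for the Petz divergence is only available for $\alpha\in[0,2]$ (Proposition~\ref{prop:petz-divergence}(a)); so your argument proves~(b) only on $[0,2]$. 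The paper instead uses isometric invariance of $D_\alpha$ (valid for all $\alpha\in[0,\infty)$, Proposition~\ref{prop:petz-divergence}(b)) in both directions, and closes the resulting domain mismatch (after embedding, the infimum runs over all $\tau_{B'}\in\mathcal{S}(B')$, not only over states of the form $W\tau_B W^\dagger$) by the support property of the minimizer from~(f), which lets one restrict to $\tau_{B'}\ll W\rho_B W^\dagger$.

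Second, for~(m) on $[1,2]$ you appeal to ``log-convexity of $\alpha\mapsto\lVert \tr_A[\rho_{AB}^\alpha\sigma_A^{1-\alpha}]\rVert_{1/\alpha}$, exactly as for $I_\alpha^{\uparrow\downarrow}$ in~\cite{hayashi2016correlation}''; this is not an off-the-shelf statement in that reference, and even granting convexity on $[0,1)$ and on $[1,2]$ separately you would still have to patch the two pieces at $\alpha=1$ (piecewise convexity does not give convexity on $[0,2]$ without matching one-sided derivatives at the junction), which you do not address. The paper avoids both problems with an argument whose ingredients you already have: by your own~(h), $(\alpha-1)I_\alpha^\downarrow(\rho_{AB}\|\sigma_A)$ is the pointwise limit of the functions $\frac{1}{n}(\alpha-1)D_\alpha(\rho_{AB}^{\otimes n}\|\sigma_A^{\otimes n}\otimes\omega_{B^n}^n)$, each convex in $\alpha$, hence it is convex on $[0,2]$ when $\rho_A\ll\sigma_A$. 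Relatedly, in~(k) the phrase ``$\alpha=1$ is handled by the limit defining $D_1$'' conflates the fixed-argument limit with the minimized quantity; left-continuity at $\alpha=1$ requires either the Sibson-identity/L'H\^opital computation on the closed form or the paper's squeeze via $\omega_{B^n}^n$ and monotonicity. The same care is needed in your Danskin step for~(l): the objective is $+\infty$ at boundary points of $\mathcal{S}(B)$ when $\alpha>1$, so the envelope theorem must be applied after localizing around the (unique, full-support-on-$\supp\rho_B$) minimizer path, and continuity of the derivative across $\alpha=1$ again rests on the Sibson identity --- the paper instead gives a self-contained sandwich of one-sided derivatives combined with the convexity from~(m).
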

\begin{proof}
See Appendix~\ref{proof:gen-prmi}.
\end{proof}

\begin{prop}[Minimized generalized SRMI]\label{prop:gen-srmi}
Let $\rho_{AB}\in \mathcal{S}(AB)$ and let $\sigma_A\in \mathcal{S}(A)$.
Then all of the following hold.
\begin{enumerate}[label=(\alph*)]
\item \emph{Non-increase under local operations:} 
$\widetilde{I}_{\alpha}^{\downarrow}(\rho_{AB}\| \sigma_A) \geq \widetilde{I}_\alpha^{\downarrow}(\mathcal{M}\otimes\mathcal{N}(\rho_{AB})\| \mathcal{M}(\sigma_A))$ 
for any $\mathcal{M}\in \CPTP(A,A'),\mathcal{N}\in \CPTP(B,B')$ and all $\alpha\in [\frac{1}{2},\infty]$.
\item \emph{Invariance under local isometries:} 
$\widetilde{I}_\alpha^{\downarrow}(V\otimes W\rho_{AB}V^\dagger\otimes W^\dagger\| V\sigma_A V^\dagger)
=\widetilde{I}_\alpha^{\downarrow}(\rho_{AB}\| \sigma_A)$ 
for any isometries $V\in \mathcal{L}(A,A'),W\in \mathcal{L}(B,B')$ and all $\alpha\in (0,\infty]$.
\item \emph{Additivity:}~\cite{hayashi2016correlation} 
Let $\alpha\in [\frac{1}{2},\infty]$ and $\rho'_{DE}\in \mathcal{S}(DE),\sigma'_D\in \mathcal{S}(D)$. 
If $(\alpha\in [\frac{1}{2},1)\land \rho_A\not\perp\sigma_A\land \rho'_D\not\perp \sigma'_D )\lor (\rho_A\ll \sigma_A\land \rho'_D\ll \sigma'_D)$, then
\begin{equation}
\widetilde{I}_\alpha^{\downarrow}(\rho_{AB}\otimes \rho'_{DE}\| \sigma_A\otimes \sigma'_D)
=\widetilde{I}_{\alpha}^{\downarrow}(\rho_{AB}\| \sigma_A)+\widetilde{I}_\alpha^{\downarrow}(\rho'_{DE}\| \sigma'_D).
\end{equation}
\item \emph{Duality:}~\cite{hayashi2016correlation} 
Let $\ket{\rho}_{ABC}\in ABC$ be such that $\tr_C[\proj{\rho}_{ABC}]=\rho_{AB}$. 
Let $\alpha,\beta\in [\frac{1}{2},\infty]$ be such that $\frac{1}{\alpha}+\frac{1}{\beta}=2$. 

If $\rho_A\ll \sigma_A$, then
$\widetilde{I}_{\alpha}^{\downarrow}(\rho_{AB}\| \sigma_A)=-\widetilde{I}_\beta^{\downarrow}(\rho_{AC}\| \sigma_A^{-1})$.

If $\alpha\in (\frac{1}{2},1)\land \rho_A\not\perp\sigma_A$, then 
\begin{equation}
\widetilde{I}_{\alpha}^{\downarrow}(\rho_{AB}\| \sigma_A)
=\sup_{\substack{\mu_C\in \mathcal{S}_{>0}(C)}}-\frac{1}{\beta-1}\log \widetilde{Q}_\beta (\rho_{AC}\| \sigma_A^{-1}\otimes \mu_C).
\end{equation}
\item \emph{Non-negativity:}
Let $\alpha\in (0,\infty]$. Then $\widetilde{I}_{\alpha}^{\downarrow}(\rho_{AB}\| \sigma_A)\in [0,\infty]$.
Furthermore, $\widetilde{I}_{\alpha}^{\downarrow}(\rho_{AB}\| \sigma_A)$ is finite iff 
$(\alpha\in (0,1)\land \rho_A\not\perp\sigma_A)\lor \rho_A\ll \sigma_A$.
\item \emph{Deviation from minimized generalized PRMI:} 
If $\rho_A\ll \sigma_A$, then 
$I_\alpha^{\downarrow}(\rho_{AB}\| \sigma_A)\geq \widetilde{I}_\alpha^{\downarrow}(\rho_{AB}\| \sigma_A)$ for all $\alpha\in (0,\infty)$ and 
$\widetilde{I}_\alpha^{\downarrow}(\rho_{AB}\| \sigma_A)\geq I_{2-\frac{1}{\alpha}}^{\downarrow}(\rho_{AB}\| \sigma_A)$ for all $\alpha\in [\frac{1}{2},\infty)$.
\item \emph{Existence of minimizers:} 
Let $\alpha\in (0,\infty]$. If $(\alpha\in (0,1)\land \rho_A\not\perp\sigma_A)\lor \rho_A\ll \sigma_A$, then
\begin{equation}\label{eq:gen-srmi-existence}
\emptyset \neq 
\argmin_{\tau_B\in \mathcal{S}(B)}\widetilde{D}_\alpha (\rho_{AB}\| \sigma_A\otimes \tau_B)
\subseteq \mathcal{S}_{\ll \rho_B}(B).
\end{equation}
\item \emph{Uniqueness and fixed-point property of minimizer:}~\cite{hayashi2016correlation}  
Let $\alpha\in (\frac{1}{2},\infty)$. Let 
\begin{align}
\mathcal{M}_\alpha
&\coloneqq\argmin_{\tau_B\in \mathcal{S}(B):\rho_B\ll \tau_B}\widetilde{D}_\alpha (\rho_{AB}\| \sigma_A\otimes\tau_B),
\\
\mathcal{F}_\alpha
&\coloneqq \Big\{
\tau_B\in \mathcal{S}(B):
\rho_B\ll \tau_B, 
\tau_B
=\frac{\tr_A[( (\sigma_A\otimes \tau_B)^{\frac{1-\alpha}{2\alpha}}\rho_{AB}(\sigma_A\otimes \tau_B)^{\frac{1-\alpha}{2\alpha}} )^\alpha ]}{\tr[( (\sigma_A\otimes \tau_B)^{\frac{1-\alpha}{2\alpha}}\rho_{AB}(\sigma_A\otimes \tau_B)^{\frac{1-\alpha}{2\alpha}} )^\alpha ]}
\Big\}.
\end{align}
If $\rho_A\ll \sigma_A$, then 
$\mathcal{M}_\alpha=\mathcal{F}_\alpha$ and this set contains exactly one element.

Moreover, if $\rho_A\ll \sigma_A$ and $\alpha\geq 1$, then $\mathcal{M}_\alpha=\argmin_{\tau_B\in \mathcal{S}(B)}\widetilde{D}_\alpha (\rho_{AB}\| \sigma_A\otimes\tau_B)$.
\item \emph{Asymptotic optimality of universal permutation invariant state:}~\cite{hayashi2016correlation}  
Let $\alpha \in [\frac{1}{2},\infty]$. If 
$(\alpha\in [\frac{1}{2},1)\land \rho_A\not\perp \sigma_A)\lor \rho_A\ll \sigma_A$, then 
\begin{equation}\label{eq:srmi-gen-omega1}
\widetilde{I}_\alpha^{\downarrow}(\rho_{AB}\| \sigma_A)
= \lim\limits_{n\rightarrow\infty}\frac{1}{n}\widetilde{D}_\alpha (\rho_{AB}^{\otimes n}\| \sigma_A^{\otimes n}\otimes \omega_{B^n}^n)
= \lim\limits_{n\rightarrow\infty}\frac{1}{n}D_\alpha (\mathcal{P}_{\sigma_{A}^{\otimes n}\otimes \omega_{B^n}^n}(\rho_{AB}^{\otimes n})\| \sigma_A^{\otimes n}\otimes \omega_{B^n}^n)
\end{equation}
and for any $n\in \mathbb{N}_{>0}$
\begin{align}\label{eq:srmi-gen-omega2}
\widetilde{I}_\alpha^{\downarrow}(\rho_{AB}\| \sigma_A)
=\inf_{ \tau_{B^n}\in \mathcal{S}_{\sym}(B^{\otimes n}) }
\frac{1}{n} \widetilde{D}_\alpha (\rho_{AB}^{\otimes n}\| \sigma_{A}^{\otimes n}\otimes \tau_{B^n})
=\inf_{ \tau_{B^n}\in \mathcal{S}(B^n) }
\frac{1}{n} \widetilde{D}_\alpha (\rho_{AB}^{\otimes n}\| \sigma_{A}^{\otimes n}\otimes \tau_{B^n}).
\end{align}
Furthermore, if $\rho_A\ll \sigma_A$, then 
for any $t\in [0,\infty)$
\begin{equation}\label{eq:srmi-gen-omega3}
\lim_{n\rightarrow\infty}t\sqrt{n} \left(\frac{1}{n} D_{1+\frac{t}{\sqrt{n}}}(\mathcal{P}_{\sigma_{A}^{\otimes n}\otimes \omega_{B^n}^n}(\rho_{AB}^{\otimes n})\| \sigma_{A}^{\otimes n}\otimes \omega_{B^n}^n) 
- \widetilde{I}_1^\downarrow (\rho_{AB}\| \sigma_A) \right)
=\frac{t^2}{2}V(\rho_{AB}\| \sigma_A\otimes \rho_B).
\end{equation}
\item \emph{R\'enyi order $\alpha = 1$:}~\cite{hayashi2016correlation}  
$\widetilde{I}_1^{\downarrow}(\rho_{AB}\| \sigma_A)=D(\rho_{AB}\| \sigma_A\otimes \rho_B)$.
\item \emph{Monotonicity in $\alpha$:} 
If $\alpha,\beta\in (0,\infty]$ are such that $\alpha\leq \beta$, then 
$\widetilde{I}_{\alpha}^{\downarrow}(\rho_{AB}\| \sigma_A)\leq \widetilde{I}_\beta^{\downarrow}(\rho_{AB}\| \sigma_A)$.
\item \emph{Continuity in $\alpha$:}~\cite{hayashi2016correlation}  
If $\rho_A\not\perp\sigma_A$, then the function 
$(0,1)\rightarrow [0,\infty),\alpha\mapsto \widetilde{I}_{\alpha}^{\downarrow}(\rho_{AB}\| \sigma_A)$ is continuous. 
If $\rho_A\ll\sigma_A$, then the function 
$(0,\infty)\rightarrow [0,\infty),\alpha\mapsto \widetilde{I}_{\alpha}^{\downarrow}(\rho_{AB}\| \sigma_A)$ is continuous and 
$\lim_{\alpha\rightarrow\infty}\widetilde{I}_{\alpha}^{\downarrow}(\rho_{AB}\| \sigma_A)=\widetilde{I}_\infty^{\downarrow}(\rho_{AB}\| \sigma_A)$.
\item \emph{Differentiability in $\alpha$:}~\cite{hayashi2016correlation} 
If $\rho_A\ll \sigma_A$, then all of the following hold.

The function 
$(1,\infty)\rightarrow [0,\infty),\alpha \mapsto \widetilde{I}_{\alpha}^{\downarrow}(\rho_{AB}\| \sigma_A)$ is continuously differentiable. 
For any $\alpha\in (1,\infty)$ and any fixed
$\tau_B\in \argmin_{\tau'_B\in \mathcal{S}(B)} \widetilde{D}_\alpha(\rho_{AB}\| \sigma_A\otimes \tau'_B)$, the derivative at $\alpha$ is
\begin{equation}
\frac{\mathrm{d}}{\mathrm{d}\alpha} \widetilde{I}_{\alpha}^{\downarrow}(\rho_{AB}\| \sigma_A)
=\frac{\partial}{\partial\alpha} \widetilde{D}_\alpha (\rho_{AB}\| \sigma_A\otimes \tau_B ).
\end{equation}
Moreover, 
$\frac{\partial}{\partial\alpha^+} \widetilde{I}_\alpha^{\downarrow}(\rho_{AB}\| \sigma_A) \big|_{\alpha=1}
=\frac{\mathrm{d}}{\mathrm{d}\alpha} \widetilde{D}_\alpha (\rho_{AB}\| \sigma_A\otimes \rho_B )  \big|_{\alpha=1}
=\frac{1}{2}V(\rho_{AB}\| \sigma_A\otimes \rho_B)$.
\item \emph{Convexity in $\alpha$:}~\cite{hayashi2016correlation}  
If $\rho_A\ll \sigma_A$, then the function 
$[\frac{1}{2},\infty)\rightarrow \mathbb{R},\alpha\mapsto (\alpha -1) \widetilde{I}_\alpha^\downarrow(\rho_{AB}\| \sigma_A)$ is convex.
\item \emph{Product states:} 
Let $\alpha\in (0,\infty]$.
If $\rho_{AB}=\rho_A\otimes\rho_B$, then $\widetilde{I}_{\alpha}^{\downarrow}(\rho_{AB}\| \sigma_A)=\widetilde{D}_\alpha(\rho_A\| \sigma_A)$ and $\widetilde{I}_\alpha^{\downarrow}(\rho_{AB}\| \rho_A)=0$.
Conversely, 
if $\widetilde{I}_{\alpha}^{\downarrow}(\rho_{AB}\| \sigma_A)=0$, then $\rho_{AB}=\rho_A\otimes\rho_B$ and $\sigma_A=\rho_A$.
\item \emph{$AC$-independent states:} 
Let $\ket{\rho}_{ABC}\in ABC$ be such that $\tr_C[\proj{\rho}_{ABC}]=\rho_{AB}$. 
Let $\alpha\in (\frac{1}{2},1)\cup (1,\infty]$ and $\beta\coloneqq \frac{\alpha}{2\alpha -1}\in [\frac{1}{2},1)\cup(1,\infty)$. 
If $\rho_{AC}=\rho_A\otimes\rho_C$ and 
$(\alpha\in (\frac{1}{2},1)\land \rho_A\not\perp\sigma_A)\lor \rho_A\ll \sigma_A$, then 
$\widetilde{I}_{\alpha}^{\downarrow}(\rho_{AB}\| \sigma_A)
=-\frac{1}{\beta-1}\log\widetilde{Q}_\beta(\rho_{A}\| \sigma_A^{-1})$.
\item \emph{Pure states:} 
Let $\alpha\in (0,1)\cup (1,\infty)$.

If there exists $\ket{\rho}_{AB}\in AB$ such that $\rho_{AB}=\proj{\rho}_{AB}$, then all of the following hold.

If $(\alpha\in (\frac{1}{2},1)\land\rho_A\not\perp\sigma_A)\lor (\alpha\in (1,\infty)\land \rho_A\ll \sigma_A)$, then 
\begin{equation}
\widetilde{I}_\alpha^{\downarrow}(\proj{\rho}_{AB}\| \sigma_A)
=-\frac{1}{\frac{\alpha}{2\alpha-1} -1}\log \widetilde{Q}_{\frac{\alpha}{2\alpha-1}}(\rho_A\| \sigma_A^{-1})
=-\frac{\alpha}{1-\alpha }\log \big\lVert \sigma_A^{\frac{1-\alpha}{2\alpha}}\rho_{A} \sigma_A^{\frac{1-\alpha}{2\alpha}}  \big\rVert_{\frac{\alpha}{2\alpha -1}}.
\end{equation}
If $\alpha \in (0,\frac{1}{2}]\land\rho_A\not\perp\sigma_A$, then 
$\widetilde{I}_\alpha^{\downarrow}(\proj{\rho}_{AB}\| \sigma_A)
=-\frac{\alpha}{1-\alpha}\log \big\lVert \sigma_A^{\frac{1-\alpha}{2\alpha}}\rho_{A} \sigma_A^{\frac{1-\alpha}{2\alpha}} \big\rVert_{\infty}$.
\item \emph{CC states:} 
Let $\alpha\in [\frac{1}{2},\infty]$. 
Let $P_{XY}$ be the joint PMF of two random variables $X,Y$ over $\mathcal{X}\coloneqq [d_A],\mathcal{Y}\coloneqq [d_B]$. 
If there exist orthonormal bases $\{\ket{a_x}_A\}_{x\in [d_A]},\{\ket{b_y}_B\}_{y\in [d_B]}$ for $A,B$ such that 
$\rho_{AB}=\sum_{x\in \mathcal{X}}\sum_{y\in \mathcal{Y}}P_{XY}(x,y)\proj{a_x,b_y}_{AB}$ 
and $(\alpha\in [\frac{1}{2},1)\land\rho_A\not\perp\sigma_A)\lor \rho_A\ll \sigma_A$, then 
\begin{equation}
\widetilde{I}_\alpha^{\downarrow}(\rho_{AB}\| \sigma_A)
=\min_{\substack{\tau_B\in \mathcal{S}(B):\\ \exists (t_y)_{y\in \mathcal{Y}}\in [0,1]^{\times |\mathcal{Y}|}:\\ \tau_B=\sum\limits_{y\in \mathcal{Y}}t_y \proj{b_y}_B }} \widetilde{D}_\alpha(\rho_{AB}\| \sigma_A\otimes \tau_B).
\end{equation}
\end{enumerate}
\end{prop}
\begin{proof}
See Appendix~\ref{proof:gen-srmi}.
\end{proof}

\subsection{Proof of Proposition~\ref{prop:gen-prmi}}\label{proof:gen-prmi}
\begin{proof}[Proof of (a), (e), (j), (n)]
These properties follow from the corresponding properties of the Petz divergence, see Remark~\ref{rem:petz-divergence}. 
In particular,~(n) follows from the additivity and positive definiteness of the Petz divergence.
\end{proof}
\begin{proof}[Proof of (i)]
$I_1^\downarrow(\rho_{AB}\| \sigma_A)=D(\rho_{AB}\| \sigma_A\otimes \rho_B)$ follows from~\eqref{eq:min-re-t}.

Now, suppose $\rho_A\not\perp\sigma_A$. Then
\begin{align}
\exp(-I_0^{\downarrow}(\rho_{AB}\| \sigma_A))
&=\max_{\tau_B\in \mathcal{S}(B)}\tr[\rho^0_{AB}\sigma_A\otimes\tau_B]
\\
&=\max_{\substack{\ket{\tau}_B\in \supp(\rho_B):\\ \brak{\tau}_B=1}}\tr[\rho^0_{AB}\sigma_A\otimes\proj{\tau}_B]
=\lVert \tr_A[\rho_{AB}^0\sigma_A] \rVert_\infty.
\label{eq:tau-ket-tau}
\end{align}
\end{proof}
\begin{proof}[Proof of (f)]
Let $\alpha\in [0,\infty)$.

Case 1: $\alpha\neq 0$. 
For this case, the assertion has been proved in~\cite[Eq.~(3.10)]{hayashi2016correlation}
under the assumption $\alpha\in (0,\infty)\land\rho_A\ll\sigma_A$ by means of a quantum Sibson identity~\cite[Eq.~(B10)]{hayashi2016correlation} (see also~\cite[Proposition~2]{cheng2022properties}).
However, as can be easily verified, their proof still works under the slightly weaker conditions specified in~(f) because the quantum Sibson identity still applies.

Case 2: $\alpha= 0$. For this case, the assertion follows from~(i).
\end{proof}
\begin{proof}[Proof of (g)] 
The closed-form expression follows from the explicit expression for the minimizer in~(f).
\end{proof}
\begin{proof}[Proof of (b)]
Let $\alpha\in [0,\infty)$. 
Let $\hat{\tau}_B\in \mathcal{S}(B)$ be such that 
$I_\alpha^{\downarrow} (\rho_{AB}\| \sigma_A)=D_\alpha (\rho_{AB}\| \sigma_A\otimes \hat{\tau}_B)$. Then 
\begin{align}
I_\alpha^{\downarrow} (\rho_{AB}\| \sigma_A)
&=D_\alpha (V\otimes W\rho_{AB}V^\dagger \otimes W^\dagger\| V\sigma_AV^\dagger \otimes W\hat{\tau}_B W^\dagger)
\label{eq:iso11}\\
&\geq I_\alpha^{\downarrow} (V\otimes W\rho_{AB}V^\dagger \otimes W^\dagger \| V\sigma_A V^\dagger)
\\
&=I_\alpha^{\downarrow}(W\rho_{AB} W^\dagger\| \sigma_A)
\label{eq:iso21}\\
&=\inf_{\substack{\tau_{B'}\in \mathcal{S}(B'):\\ \tau_{B'}\ll W\rho_B W^\dagger }}
D_\alpha (W\rho_{AB}W^\dagger \| \sigma_A\otimes \tau_{B'})
\label{eq:iso31}\\
&\geq \inf_{\tau_B\in \mathcal{S}(B)}
D_\alpha (W\rho_{AB}W^\dagger \| \sigma_A\otimes W \tau_BW^\dagger )
\\
&= I_\alpha^{\downarrow} (\rho_{AB}\| \sigma_A).
\label{eq:iso41}
\end{align}
\eqref{eq:iso11}, \eqref{eq:iso21} and \eqref{eq:iso41} follow from the isometric invariance of the Petz divergence. 
\eqref{eq:iso31} follows from~(f).
\end{proof}
\begin{proof}[Proof of (c)]
Let $\alpha\in [0,\infty)$.

Case 1: $\alpha\in (0,1)\cup (1,\infty)$. 
Then additivity follows from the closed-form expression~(g).

Case 2: $\alpha\in \{0,1\}$. 
Then additivity follows from~(i).
\end{proof}
\begin{proof}[Proof of (d)]
Duality has been proved in~\cite[Lemma~6]{hayashi2016correlation} under the assumption that $\rho_A\ll\sigma_A$. 
However, as can be easily verified, their proof still works under the slightly weaker conditions specified in~(d) due to~(g).
\end{proof}
\begin{proof}[Proof of (h)]
Let $\alpha\in [0,2]$. Then
\begin{subequations}\label{eq:i-down-reg}
\begin{align}
I_\alpha^{\downarrow}(\rho_{AB}\| \sigma_A)
&=\inf_{\tau_B\in \mathcal{S}(B)}\frac{1}{n}
D_\alpha (\rho_{AB}^{\otimes n}\| \sigma_{A}^{\otimes n}\otimes \tau_{B}^{\otimes n})
\label{eq:proof-reg11}\\
&\geq \inf_{\tau_{B^n}\in \mathcal{S}_{\sym}(B^{\otimes n})}
\frac{1}{n}D_\alpha (\rho_{AB}^{\otimes n}\| \sigma_{A}^{\otimes n}\otimes \tau_{B^n})
\label{eq:proof-reg21}\\
&\geq \frac{1}{n}D_\alpha (\rho_{AB}^{\otimes n}\| \sigma_A^{\otimes n}\otimes\omega_{B^n}^n)-\frac{\log g_{n,d_B}}{n}
\label{eq:proof-reg31}\\
&\geq \frac{1}{n}I_\alpha^{\downarrow} (\rho_{AB}^{\otimes n}\| \sigma_A^{\otimes n})-\frac{\log g_{n,d_B}}{n}
= I_\alpha^{\downarrow} (\rho_{AB}\| \sigma_A)-\frac{\log g_{n,d_B}}{n}.
\label{eq:proof-reg41}
\end{align}
\end{subequations}
\eqref{eq:proof-reg11} follows from the additivity of the Petz divergence.
\eqref{eq:proof-reg31} follows from Remark~\ref{rem:universal-state}~(b).
\eqref{eq:proof-reg41} follows from additivity~(c).
The assertion in~\eqref{eq:prmi-gen-omega1} follows from~\eqref{eq:i-down-reg} by taking the limit $n\rightarrow\infty$ and using Remark~\ref{rem:universal-state}~(b).

It remains to prove~\eqref{eq:prmi-gen-omega2}. 
For any $n\in \mathbb{N}_{>0}$
\begin{align}
I_\alpha^{\downarrow}(\rho_{AB}\| \sigma_A)
&\geq \inf_{\tau_{B^n}\in \mathcal{S}_{\sym}(B^{\otimes n})}
\frac{1}{n}D_\alpha (\rho_{AB}^{\otimes n}\| \sigma_{A}^{\otimes n}\otimes \tau_{B^n})
\label{eq:proof-gen-prmi-add-n0}\\
&\geq \inf_{\tau_{B^n}\in \mathcal{S}(B^n)}
\frac{1}{n}D_\alpha (\rho_{AB}^{\otimes n}\| \sigma_{A}^{\otimes n}\otimes \tau_{B^n})
=\frac{1}{n} I_\alpha^\downarrow (\rho_{AB}^{\otimes n}\| \sigma_{A}^{\otimes n})
=I_\alpha^{\downarrow}(\rho_{AB}\| \sigma_A).
\label{eq:proof-gen-prmi-add-n1}
\end{align}
\eqref{eq:proof-gen-prmi-add-n0} follows from~\eqref{eq:proof-reg21}. 
\eqref{eq:proof-gen-prmi-add-n1} follows from additivity~(c).
\end{proof}
\begin{proof}[Proof of (k)]
If $\rho_A\not\perp\sigma_A$, then continuity on $[0,1)$ follows from the continuity in $\alpha$ of the Petz divergence. 

Now, suppose $\rho_A\ll\sigma_A$. 
Then continuity on $[0,1)$ and on $[1,\infty)$ follows from the continuity in $\alpha$ of the Petz divergence. 
It remains to prove left-continuity at $\alpha=1$. 
For any $n\in \mathbb{N}_{>0}$
\begin{align}\label{eq:proof-gen-prmi-continuity}
\frac{1}{n}D_1 (\rho_{AB}^{\otimes n}\| \sigma_A^{\otimes n}\otimes\omega_{B^n}^n) -\frac{\log g_{n,d_B}}{n}
&\leq \lim_{\alpha\rightarrow 1^-}I_\alpha^\downarrow (\rho_{AB}\| \sigma_A )
\leq I_1^\downarrow (\rho_{AB}\| \sigma_A ).
\end{align}
The first inequality in~\eqref{eq:proof-gen-prmi-continuity} follows from~\eqref{eq:i-down-reg}. 
The second inequality in~\eqref{eq:proof-gen-prmi-continuity} follows from monotonicity~(j).
By Remark~\ref{rem:universal-state}~(b), the second term on the left-hand side of~\eqref{eq:proof-gen-prmi-continuity} vanishes in the limit $n\rightarrow\infty$. Thus,
\begin{align}\label{eq:poof-gen-prmi-cont}
I_1^\downarrow (\rho_{AB}\| \sigma_A )=\lim_{n\rightarrow\infty} \frac{1}{n}D_1 (\rho_{AB}^{\otimes n}\| \sigma_A^{\otimes n}\otimes\omega_{B^n}^n)
\leq \lim_{\alpha\rightarrow 1^-}I_\alpha^\downarrow (\rho_{AB}\| \sigma_A )
\leq I_1^\downarrow (\rho_{AB}\| \sigma_A ),
\end{align}
where the first equality in~\eqref{eq:poof-gen-prmi-cont} follows from~(h). 
Hence, $\lim_{\alpha\rightarrow 1^-}I_\alpha^\downarrow (\rho_{AB}\| \sigma_A )=I_1^\downarrow (\rho_{AB}\| \sigma_A )$.
\end{proof}
\begin{proof}[Proof of (m)] 
Convexity is inherited from the Petz divergence because, according to~\eqref{eq:prmi-gen-omega1} in~(h), 
$(\alpha-1)I_\alpha^\downarrow (\rho_{AB}\| \sigma_A)$ is the pointwise limit of a sequence of functions that are convex in $\alpha$.
\end{proof}
\begin{proof}[Proof of (l)]
We will first prove differentiability on $\alpha\in (0,1)\cup (1,2)$.
By the closed-form expression~(g), 
$I_\alpha^\downarrow (\rho_{AB}\| \sigma_A)$ is differentiable in $\alpha$ on $\alpha\in (0,1)\cup (1,2)$. 
In particular, the right and left derivatives of this function at $\alpha\in (0,1)\cup(1,2)$ coincide,
\begin{equation}\label{eq:diff+-}
\frac{\mathrm{d}}{\mathrm{d}\alpha}I_\alpha^\downarrow (\rho_{AB}\| \sigma_A)
=\frac{\partial}{\partial\alpha^+}I_\alpha^\downarrow (\rho_{AB}\| \sigma_A)
=\frac{\partial}{\partial\alpha^-}I_\alpha^\downarrow (\rho_{AB}\| \sigma_A).
\end{equation}
Let $\alpha\in (0,1)\cup (1,2)$ and 
let $\tau_B\in \argmin_{\tau_B'\in\mathcal{S}(B)} D_\alpha (\rho_{AB}\| \sigma_A\otimes \tau_B')$ be fixed. 
Then the right derivative of $I_\alpha^\downarrow (\rho_{AB}\| \sigma_A)$ at $\alpha$ is upper bounded by its left derivative at $\alpha$ because
\begin{align}
\frac{\partial}{\partial\alpha^+}I_\alpha^\downarrow (\rho_{AB}\| \sigma_A)
&=\lim_{\varepsilon\rightarrow 0^+}
\frac{1}{\varepsilon}(I_{\alpha+\varepsilon}^{\downarrow}(\rho_{AB}\|\sigma_A) - I_{\alpha}^{\downarrow}(\rho_{AB}\|\sigma_A) )
\\
&\leq \lim_{\varepsilon\rightarrow 0^+}\frac{1}{\varepsilon}(D_{\alpha+\varepsilon } (\rho_{AB}\| \sigma_A\otimes \tau_B) - D_\alpha (\rho_{AB}\| \sigma_A\otimes \tau_B) )
\\
&=\frac{\partial}{\partial\alpha}D_\alpha (\rho_{AB}\| \sigma_A\otimes \tau_B) 
\label{eq:proof-gen-prmi-diff1}\\
&=\lim_{\varepsilon\rightarrow 0^-}\frac{1}{\varepsilon}(D_{\alpha+\varepsilon } (\rho_{AB}\| \sigma_A\otimes \tau_B) - D_\alpha (\rho_{AB}\| \sigma_A\otimes \tau_B) )
\label{eq:proof-gen-prmi-diff2}\\
&\leq \lim_{\varepsilon\rightarrow 0^-}
\frac{1}{\varepsilon}(I_{\alpha+\varepsilon}^{\downarrow}(\rho_{AB}\|\sigma_A) - I_{\alpha}^{\downarrow}(\rho_{AB}\|\sigma_A) )
=\frac{\partial}{\partial\alpha^-}I_\alpha^\downarrow (\rho_{AB}\| \sigma_A).
\end{align}
\eqref{eq:proof-gen-prmi-diff1} and~\eqref{eq:proof-gen-prmi-diff2} follow from the differentiability in $\alpha$ of the Petz divergence, see Remark~\ref{rem:petz-divergence}. 
By~\eqref{eq:diff+-}, all inequalities must be saturated. 
This proves~\eqref{eq:gen-prmi-diff} for all $\alpha\in (0,1)\cup (1,2)$.

We will now prove that $I_\alpha^\downarrow (\rho_{AB}\| \sigma_A)$ is \emph{continuously} differentiable on $\alpha\in (0,1)\cup (1,2)$.
Let $g(\alpha)\coloneqq (\alpha-1)I_\alpha^\downarrow (\rho_{AB}\| \sigma_A)$ for all $\alpha\in (0,1)\cup (1,2)$. 
By the product rule, $g$ is differentiable on $\alpha\in (0,1)\cup (1,2)$. 
By~(m), $g$ is convex, so the differentiability of $g$ implies its continuous differentiability. 
By the product rule, this implies that also 
$I_\alpha^\downarrow (\rho_{AB}\| \sigma_A)$ is continuously differentiable on $\alpha\in (0,1)\cup (1,2)$. 

Next, we will prove differentiability of $I_\alpha^\downarrow (\rho_{AB}\| \sigma_A)$ at $\alpha=1$. 
The combination of~\eqref{eq:min-re-t} and a quantum Sibson identity~\cite[Eq.~(B10)]{hayashi2016correlation} 
implies that the limits
\begin{subequations}\label{eq:gen-mi-diff}
\begin{align}
\lim_{\beta\rightarrow 1^-}\frac{\mathrm{d}}{\mathrm{d}\alpha}I_\alpha^\downarrow (\rho_{AB}\| \sigma_A)\big|_{\alpha=\beta}
&=\frac{\mathrm{d}}{\mathrm{d} \alpha} D_\alpha (\rho_{AB}\| \sigma_A\otimes\rho_B)\big|_{\alpha=1},
\\
\lim_{\beta\rightarrow 1^+}\frac{\mathrm{d}}{\mathrm{d}\alpha}I_\alpha^\downarrow (\rho_{AB}\| \sigma_A)\big|_{\alpha=\beta}
&=\frac{\mathrm{d}}{\mathrm{d} \alpha} D_\alpha (\rho_{AB}\| \sigma_A\otimes\rho_B)\big|_{\alpha=1}
\end{align}
\end{subequations}
exist. 
Therefore, they are identical to the left and right derivative of $I_\alpha^{\downarrow}(\rho_{AB}\| \sigma_A)$ at $\alpha=1$, respectively. 
By \eqref{eq:gen-mi-diff}, the left and right derivative of $I_\alpha^{\downarrow}(\rho_{AB}\| \sigma_A)$ at $\alpha=1$ coincide. 
This proves differentiability at $\alpha=1$. 
The \emph{continuous} differentiability of $I_\alpha^{\downarrow}(\rho_{AB}\| \sigma_A)$ at $\alpha=1$ follows from the quantum Sibson identity~\cite[Eq.~(B10)]{hayashi2016correlation}.
\end{proof}
\begin{proof}[Proof of (o), (p)]
The assertion in (o) follows from duality~(d), and (p) follows from~(o).
\end{proof}
\begin{proof}[Proof of (q)]
Let $\alpha\in [0,\infty)$.

Case 1: $\alpha\in (0,\infty)$. By~(f), the unique minimizer is then
\begin{equation}
\hat{\tau}_B
=\text{const. }(\tr_A[\rho_{AB}^\alpha \sigma_A^{1-\alpha}])^{\frac{1}{\alpha}}
=\text{const. }\sum_{y\in \mathcal{Y}}\left(\sum_{x\in \mathcal{X}} P_{XY}(x,y)^\alpha \bra{a_x}_A \sigma_A^{1-\alpha}\ket{a_x}_A\right)^{\frac{1}{\alpha}}\proj{b_y}_B,
\end{equation}
so $\hat{\tau}_B$ has the desired form.

Case 2: $\alpha=0$. 
For all $y\in \mathcal{Y}$, let 
$c_y\coloneqq \sum_{x\in \mathcal{X}: P_{XY}(x,y)\neq 0}\bra{a_x}_A\sigma_A\ket{a_x}_A$. 
Let $\hat{y}\in \argmax_{y\in \mathcal{Y}}c_y$ and let $\ket{\hat{\tau}}_B\coloneqq \ket{b_{\hat{y}}}_B$.
By (i),
\begin{align}
\exp(-I_0^{\downarrow}(\rho_{AB}\| \sigma_A))
=\Big\lVert \sum_{y\in \mathcal{Y}}c_y\proj{b_y}_B \Big\rVert_\infty 
=\max_{y\in \mathcal{Y}}c_y 
=c_{\hat{y}}
=\exp(-D_0(\rho_{AB}\| \sigma_A\otimes \proj{\hat{\tau}}_B)).
\end{align}
Therefore, $\hat{\tau}_B\coloneqq\proj{\hat{\tau}}_B=\proj{b_{\hat{y}}}_B$ is a minimizer that has the desired form.
\end{proof}

\subsection{Proof of Proposition~\ref{prop:gen-srmi}}\label{proof:gen-srmi}

\begin{proof}[Proof of (a), (e), (k), (o)]
These properties follow from the corresponding properties of the sandwiched divergence, see Remark~\ref{rem:sandwiched-divergence}. 
In particular,~(o) follows from the additivity and positive definiteness of the sandwiched divergence. 
\end{proof}
\begin{proof}[Proof of (h), (j), (l), (m)]
These properties have been proved in previous work. 
For (h) and (j), see~\cite[Lemma~5]{hayashi2016correlation}. 
For (l), see~\cite[Corollary~10]{hayashi2016correlation}.
For (m), see \cite[Proposition~11]{hayashi2016correlation}.
\end{proof}
\begin{proof}[Proof of (g)]
Let $\alpha\in (0,\infty]$. Suppose $(\alpha\in (0,1)\land \rho_A\not\perp\sigma_A)\lor \rho_A\ll \sigma_A$. 

Let $\tau_B\in \argmin_{\tau_B'\in \mathcal{S}(B)}\widetilde{D}_\alpha (\rho_{AB}\|\sigma_A\otimes \tau_B')$. We will now show that $\tau_B\ll \rho_B$.

Case 1: $\alpha\in (0,1)\cup (1,\infty)$. 
Let $\hat{\tau}_B\coloneqq (\rho_B^0\tau_B^{\frac{1-\alpha}{\alpha}}\rho_B^0)^{\frac{\alpha}{1-\alpha}}/c$ where $c\coloneqq \tr[(\rho_B^0\tau_B^{\frac{1-\alpha}{\alpha}}\rho_B^0)^{\frac{\alpha}{1-\alpha}}]$.
Then, because $c\leq 1$,
\begin{align}
\widetilde{I}_\alpha^\downarrow (\rho_{AB}\| \sigma_A)
=\widetilde{D}_\alpha (\rho_{AB}\| \sigma_A\otimes \tau_B)
&= \widetilde{D}_\alpha (\rho_{AB}\| \sigma_A\otimes \hat{\tau}_B)- \log c
\\
&\geq \widetilde{D}_\alpha (\rho_{AB}\| \sigma_A\otimes \hat{\tau}_B)
\geq \widetilde{I}_\alpha^\downarrow (\rho_{AB}\| \sigma_A).
\end{align}
It follows that both inequalities are saturated.
Hence, $c=1$.
Therefore, $\tau_B\ll \rho_B$.

Case 2: $\alpha=1$. Then $\tau_B=\rho_B$~\cite{hayashi2016correlation}. 
Therefore, $\tau_B\ll \rho_B$.

Case 3: $\alpha=\infty$. 
Let $\hat{\tau}_B\coloneqq \rho_B^0\tau_B\rho_B^0/c$ where $c\coloneqq \tr[\rho_B^0\tau_B]$. 
Then, because $c\leq 1$, 
\begin{align}
\widetilde{I}_\infty^{\downarrow}(\rho_{AB}\| \sigma_A)
=D_{\max}(\rho_{AB}\| \sigma_A \otimes \tau_B)
&=D_{\max}(\rho_{AB}\| \sigma_A \otimes \hat{\tau}_B)-\log c
\\
&\geq D_{\max}(\rho_{AB}\| \sigma_A \otimes \hat{\tau}_B)
\geq \widetilde{I}_\infty^{\downarrow}(\rho_{AB}\| \sigma_A).
\label{eq:i-gen-ex}
\end{align}
It follows that both inequalities are saturated. 
Hence, $c=1$.
Therefore, $\tau_B\ll \rho_B$.

\end{proof}
\begin{proof}[Proof of (b)] 
Let $\alpha\in (0,\infty]$. 
Let $\hat{\tau}_B\in \mathcal{S}(B)$ be such that 
$\widetilde{I}_\alpha^{\downarrow} (\rho_{AB}\| \sigma_A)=\widetilde{D}_\alpha (\rho_{AB}\| \sigma_A\otimes \hat{\tau}_B)$. Then 
\begin{align}
\widetilde{I}_\alpha^{\downarrow} (\rho_{AB}\| \sigma_A)
&=\widetilde{D}_\alpha (V\otimes W\rho_{AB}V^\dagger \otimes W^\dagger\| V\sigma_AV^\dagger \otimes W\hat{\tau}_B W^\dagger)
\label{eq:iso12}\\
&\geq \widetilde{I}_\alpha^{\downarrow} (V\otimes W\rho_{AB}V^\dagger \otimes W^\dagger \| V\sigma_A V^\dagger)
\\
&=\widetilde{I}_\alpha^{\downarrow}(W\rho_{AB} W^\dagger\| \sigma_A)
\label{eq:iso22}\\
&=\inf_{\substack{\tau_{B'}\in \mathcal{S}(B'):\\ \tau_{B'}\ll W\rho_B W^\dagger }}
\widetilde{D}_\alpha (W\rho_{AB}W^\dagger \| \sigma_A\otimes \tau_{B'})
\label{eq:iso32}\\
&\geq\inf_{\tau_B\in \mathcal{S}(B)}
\widetilde{D}_\alpha (W\rho_{AB}W^\dagger \| \sigma_A\otimes W \tau_BW^\dagger )
\\
&= \widetilde{I}_\alpha^{\downarrow} (\rho_{AB}\| \sigma_A).
\label{eq:iso42}
\end{align}
\eqref{eq:iso12}, \eqref{eq:iso22}, and~\eqref{eq:iso42} follow from the isometric invariance of the sandwiched divergence. 
\eqref{eq:iso32} follows from~(g).
\end{proof}
\begin{proof}[Proof of (d)]
In~\cite[Lemma~6]{hayashi2016correlation}, the assertion has been proved for the case $\rho_A\ll\sigma_A$. 
However, for the case 
$\alpha\in (\frac{1}{2},1)\land \rho_A\not\perp\sigma_A$, an analogous proof works.
\end{proof}
\begin{proof}[Proof of (c)]
For $\alpha\in (\frac{1}{2},\infty]$, additivity follows from duality~(d), see~\cite[Lemma~7]{hayashi2016correlation}. 
From this, additivity for $\alpha=\frac{1}{2}$ follows due to the continuity in $\alpha$ of the sandwiched divergence.
\end{proof}
\begin{proof}[Proof of (f)]
$I_\alpha^{\downarrow}(\rho_{AB}\| \sigma_A)\geq \widetilde{I}_\alpha^{\downarrow}(\rho_{AB}\| \sigma_A)$ for all $\alpha\in (0,\infty)$ 
follows from~\eqref{eq:alt}.

Let $\ket{\rho}_{ABC}$ be such that $\tr_C[\proj{\rho}_{ABC}]=\rho_{AB}$. Then, for any $\alpha\in (\frac{1}{2},\infty)$
\begin{align}\label{eq:gen-srmi-dev}
\widetilde{I}_\alpha^\downarrow(\rho_{AB}\| \sigma_A)
\geq -\widetilde{D}_{\frac{\alpha}{2\alpha -1}}(\rho_{AC}\| \sigma_A^{-1}\otimes \rho_C)
=I_{2-\frac{1}{\alpha}}^{\downarrow}(\rho_{AB}\| \sigma_A).
\end{align}
The inequality in~\eqref{eq:gen-srmi-dev} follows from duality~(d).
The equality in~\eqref{eq:gen-srmi-dev} follows from the duality of the minimized generalized PRMI, see Proposition~\ref{prop:gen-prmi}. 
From this, the assertion for $\alpha=\frac{1}{2}$ follows by continuity in $\alpha$~(l).
\end{proof}
\begin{proof}[Proof of (i)]
The assertion in~\eqref{eq:srmi-gen-omega3} has been proved in~\cite[Corollary~9]{hayashi2016correlation}. 

The assertion in~\eqref{eq:srmi-gen-omega1} has been proved in~\cite[Proposition~8]{hayashi2016correlation} under the assumption that $\rho_A\ll \sigma_A$. 
However, their proof remains valid under the slightly less restrictive conditions specified in~(i).

It remains to prove~\eqref{eq:srmi-gen-omega2}. 
For any $n\in \mathbb{N}_{>0}$
\begin{align}
\widetilde{I}_\alpha^{\downarrow}(\rho_{AB}\| \sigma_A)
&=  \inf_{\tau_{B}\in \mathcal{S}(B)}
\frac{1}{n}\widetilde{D}_\alpha (\rho_{AB}^{\otimes n}\| \sigma_{A}^{\otimes n}\otimes \tau_{B}^{\otimes n})
\label{eq:proof-rreg0}\\
&\geq \inf_{\tau_{B^n}\in \mathcal{S}_{\sym}(B^{\otimes n})}
\frac{1}{n} \widetilde{D}_\alpha (\rho_{AB}^{\otimes n}\| \sigma_{A}^{\otimes n}\otimes \tau_{B^n})
\label{eq:proof-rreg1}\\
&\geq \inf_{\tau_{B^n}\in \mathcal{S}(B^n)}
\frac{1}{n}\widetilde{D}_\alpha (\rho_{AB}^{\otimes n}\| \sigma_{A}^{\otimes n}\otimes \tau_{B^n})
=\frac{1}{n} \widetilde{I}_\alpha^\downarrow (\rho_{AB}^{\otimes n}\| \sigma_{A}^{\otimes n})
=\widetilde{I}_\alpha^{\downarrow}(\rho_{AB}\| \sigma_A).
\label{eq:proof-rreg2}
\end{align}
\eqref{eq:proof-rreg0} follows from the additivity of the sandwiched divergence. 
\eqref{eq:proof-rreg2} follows from additivity~(c).
\end{proof}
\begin{proof}[Proof of (n)]
As noted in~\cite[Corollary~10]{hayashi2016correlation}, convexity on $[\frac{1}{2},\infty)$ is inherited from the sandwiched divergence because, according to the first equality in~\eqref{eq:srmi-gen-omega1} in~(i), $(\alpha-1)\widetilde{I}_\alpha^\downarrow(\rho_{AB}\| \sigma_A)$ is the pointwise limit of a sequence of functions that are convex in $\alpha$. 
\end{proof}
\begin{proof}[Proof of (p)]
This assertion follows from duality (d).
\end{proof}
\begin{proof}[Proof of (q)]
Let $\alpha\in (0,\infty]$.

Case 1: $\alpha\in (\frac{1}{2},\infty)$. For this case, the assertion follows from~(p).

Case 2: $\alpha\in (0,\frac{1}{2}]$. Then $\frac{1-\alpha}{\alpha}\in [1,\infty)$. Hence, 
\begin{align}
\exp\left(-\frac{1-\alpha}{\alpha}\widetilde{I}_{\alpha}^{\downarrow}(\proj{\rho}_{AB}\| \sigma_A)\right)
&=\sup_{\tau_B\in \mathcal{S}(B)} \bra{\rho}_{AB}(\sigma_A\otimes\tau_B)^{\frac{1-\alpha}{\alpha}}\ket{\rho}_{AB}
\\
&= \sup_{\substack{\ket{\tau}_B\in B:\\ \brak{\tau}_B=1}} \bra{\rho}_{AB}\sigma_A^{\frac{1-\alpha}{\alpha}}\otimes \proj{\tau}_B\ket{\rho}_{AB}
\\
&= \sup_{\substack{\ket{\tau}_B\in B:\\ \brak{\tau}_B=1}}\tr[\proj{\tau}_B \tr_A[\sigma_A^{\frac{1-\alpha}{2\alpha}}\proj{\rho}_{AB}\sigma_A^{\frac{1-\alpha}{2\alpha}}]]
\\
&=\big\lVert \tr_A[\sigma_A^{\frac{1-\alpha}{2\alpha}}\proj{\rho}_{AB}\sigma_A^{\frac{1-\alpha}{2\alpha}}] \big\rVert_\infty
\\
&=\big\lVert \tr_B[\sigma_A^{\frac{1-\alpha}{2\alpha}}\proj{\rho}_{AB}\sigma_A^{\frac{1-\alpha}{2\alpha}}] \big\rVert_\infty
=\big\lVert \sigma_A^{\frac{1-\alpha}{2\alpha}}\rho_A\sigma_A^{\frac{1-\alpha}{2\alpha}} \big\rVert_\infty.
\end{align}
\end{proof}
\begin{proof}[Proof of (r)] 
Let $\alpha\in [\frac{1}{2},\infty]$.
Let $\tau_B\in \mathcal{S}(B)$ be such that $\widetilde{I}_\alpha^{\downarrow}(\rho_{AB}\| \sigma_A)=\widetilde{D}_\alpha(\rho_{AB}\| \sigma_A\otimes \tau_B)$.
Let 
$\tau_B'\coloneqq \sum_{y\in \mathcal{Y}}\proj{b_y}_{B}\tau_B\proj{b_y}_{B}\in \mathcal{S}(B)$. 
Then, 
\begin{align}
\widetilde{D}_\alpha(\rho_{AB}\| \sigma_A\otimes \tau_B)
&\geq 
\widetilde{D}_\alpha (\sum_{y\in \mathcal{Y}}\proj{b_y}_{B}\rho_{AB}\proj{b_y}_{B}\| \sigma_A\otimes \sum_{y\in \mathcal{Y}}\proj{b_y}_{B}\tau_B\proj{b_y}_{B})
\label{eq:proof-migen-cc}\\
&=\widetilde{D}_\alpha(\rho_{AB}\| \sigma_A\otimes \tau_B')
\geq \widetilde{I}_\alpha^{\downarrow}(\rho_{AB}\| \sigma_A).
\end{align}
\eqref{eq:proof-migen-cc} follows from the data-processing inequality for the sandwiched divergence, see Remark~\ref{rem:sandwiched-divergence}. 
Therefore, $\widetilde{I}_\alpha^{\downarrow}(\rho_{AB}\| \sigma_A)=\widetilde{D}_\alpha(\rho_{AB}\| \sigma_A\otimes \tau_B')$.
Since $\tau_B'$ has the desired form, the assertion is implied.
\end{proof}

\section{Proof of Lemma~\ref{lem:cs}}\label{app:lem:cs}

\begin{proof}
\begin{align}
\sqrt{X_A}\otimes \sqrt{Y_B} + \sqrt{X_A'}\otimes \sqrt{Y_B'}
&= (X_A\otimes 1_B)\# (1_A\otimes Y_B) + (X_A'\otimes 1_B)\#  (1_A\otimes Y_B')
\\
&\leq ((X_A\otimes 1_B)+(X_A'\otimes 1_B))\# ((1_A\otimes Y_B)+(1_A\otimes Y_B'))
\\
&= \sqrt{X_A+X_A'}\otimes \sqrt{Y_B+Y_B'}
\end{align}
The inequality follows from the subadditivity~\cite{kubo1980means} of the geometric operator mean.
\end{proof}

\section{Proof of Theorem~\ref{thm:convexity}}\label{app:convexity}

\subsection{Lemmas for Theorem~\ref{thm:convexity}}\label{ssec:lemmas}
\begin{lem}[Saturation of operator inequality from saturation of trace inequality]\label{lem:xyz}
Let $X,Y,Z\in \mathcal{L}(A)$ be positive semidefinite and such that $X \leq Y$. 
Then all of the following hold.
\begin{enumerate}[label=(\alph*)]
\item $\tr[XZ]\leq \tr[YZ]$.
\item If $\tr[XZ]=\tr[YZ]$ and $Y\ll Z$, then $X=Y$.
\end{enumerate}
\end{lem}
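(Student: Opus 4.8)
The statement to prove is Lemma~\ref{lem:xyz}: for positive semidefinite $X,Y,Z$ with $X\leq Y$, (a) $\tr[XZ]\leq\tr[YZ]$, and (b) if equality holds and $Y\ll Z$ then $X=Y$.

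\medskip

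\textbf{Plan.} Part (a) is immediate: since $Y-X\geq 0$ and $Z\geq 0$, we have $\tr[(Y-X)Z]=\tr[Z^{1/2}(Y-X)Z^{1/2}]\geq 0$ because $Z^{1/2}(Y-X)Z^{1/2}$ is positive semidefinite, hence has nonnegative trace. Rearranging gives $\tr[XZ]\leq\tr[YZ]$.

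\medskip

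For part (b), assume $\tr[XZ]=\tr[YZ]$, i.e. $\tr[(Y-X)Z]=0$, and $Y\ll Z$ (meaning $\ker Z\subseteq\ker Y$, equivalently $\supp Y\subseteq\supp Z$). First I would note that $\tr[Z^{1/2}(Y-X)Z^{1/2}]=0$ together with $Z^{1/2}(Y-X)Z^{1/2}\geq 0$ forces $Z^{1/2}(Y-X)Z^{1/2}=0$ (a positive semidefinite operator with zero trace is zero). Since $(Y-X)\geq 0$ we may write $Y-X=W^\dagger W$ with $W\coloneqq(Y-X)^{1/2}$, so $0=Z^{1/2}W^\dagger W Z^{1/2}=(WZ^{1/2})^\dagger(WZ^{1/2})$, which gives $WZ^{1/2}=0$ and hence $(Y-X)Z^{1/2}=W^\dagger WZ^{1/2}=0$. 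Taking adjoints, $Z^{1/2}(Y-X)=0$ as well, so $(Y-X)$ annihilates $\supp Z$, i.e. $\supp(Y-X)\subseteq\ker Z$.

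\medskip

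Now I would use $X\geq 0$ to control $\supp(Y-X)$ in terms of $\supp Y$: from $0\leq X\leq Y$ it follows that $\supp X\subseteq\supp Y$, and since $Y-X\leq Y$ we likewise get $\supp(Y-X)\subseteq\supp Y$ by the same argument. Combining with the previous step, $\supp(Y-X)\subseteq\supp Y\cap\ker Z$. But $Y\ll Z$ says $\supp Y\subseteq\supp Z$, so $\supp Y\cap\ker Z=\{0\}$, forcing $\supp(Y-X)=\{0\}$, i.e. $Y-X=0$, which is the claim. The one small point worth spelling out is the implication $0\leq A\leq B\implies\supp A\subseteq\supp B$: if $\ket{\psi}\in\ker B$ then $0\leq\bra{\psi}A\ket{\psi}\leq\bra{\psi}B\ket{\psi}=0$, so $\bra{\psi}A\ket{\psi}=0$, and since $A\geq 0$ this gives $A^{1/2}\ket{\psi}=0$, hence $A\ket{\psi}=0$; thus $\ker B\subseteq\ker A$.

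\medskip

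\textbf{Main obstacle.} There is essentially no obstacle here — the lemma is elementary. The only thing requiring a moment of care is keeping the support/kernel inclusions straight (in particular that $X\leq Y$ controls the support of $Y-X$, not just of $X$), and making sure the chain $\supp(Y-X)\subseteq\ker Z$ combined with $\supp Y\subseteq\supp Z$ genuinely collapses to $Y=X$. Everything else is the standard fact that a positive semidefinite operator with vanishing trace vanishes.
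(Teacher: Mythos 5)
Your proof is correct, and it takes a slightly different route from the paper's. For part (b) the paper works with the spectral decomposition $Z=\sum_\lambda \lambda P_\lambda$: the equality $\tr[XZ]=\tr[YZ]$ forces $\tr[(Y-X)P_\lambda]=0$ for every $\lambda\neq 0$, and then the trace-preserving pinching map with respect to $Z$ gives $\tr[Y-X]=\sum_\lambda \tr[(Y-X)P_\lambda]=0$, where the $\lambda=0$ term is discarded because $X\leq Y$ and $Y\ll Z$ imply $(Y-X)\ll Z$; positivity of $Y-X$ then yields $Y=X$. You instead upgrade the trace equality directly to the operator identity $Z^{1/2}(Y-X)Z^{1/2}=0$, factor through $W=(Y-X)^{1/2}$ to get $(Y-X)Z^{1/2}=0$, and conclude via the support inclusions $\supp(Y-X)\subseteq \ker Z$ and $\supp(Y-X)\subseteq\supp Y\subseteq\supp Z$. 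Both arguments rest on the same two ingredients (the trace equality kills $Y-X$ on $\supp Z$, and $X\leq Y$ together with $Y\ll Z$ confines $Y-X$ to $\supp Z$), but yours avoids the spectral decomposition and the pinching map, replacing them with the elementary facts that a positive semidefinite operator with zero trace vanishes and that $A^\dagger A=0$ implies $A=0$; the paper's version instead concludes by showing $\tr[Y-X]=0$. The two proofs are of comparable length and rigor, and your handling of the support inclusions (including the auxiliary fact $0\leq A\leq B\Rightarrow\ker B\subseteq\ker A$) is complete, so nothing is missing.
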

\begin{proof}
By spectral decomposition, $Z=\sum_{\lambda\in \spec(Z)} \lambda P_\lambda$, where 
$P_\lambda\in \mathcal{L}(A)$ denotes the orthogonal projection onto the eigenspace corresponding to $\lambda$. 

We will now prove~(a).
\begin{align}
\tr[YZ]-\tr[XZ]
&=\tr[(Y-X)Z] 
=\sum_{\lambda\in \spec(Z)} \lambda  \tr[(Y-X)P_\lambda ] 
=\sum_{\substack{\lambda\in \spec(Z):\\ \lambda\neq 0}} \lambda  \tr[(Y-X)P_\lambda ] 
\geq 0
\label{eq:lem-xyz-a}
\end{align}

We will now prove~(b). Suppose $\tr[XZ]=\tr[YZ]$. Then, \eqref{eq:lem-xyz-a} implies that
$\tr[(Y-X)P_\lambda] = 0$ for all $\lambda\in \spec(Z)$ such that $\lambda\neq 0$. 
Since the pinching map is trace-preserving, 
\begin{align}
\tr[Y-X]
=\tr[\mathcal{P}_Z(Y-X)]
=\sum_{\lambda\in \spec(Z)}\tr[(Y-X)P_\lambda]
=\sum_{\substack{\lambda\in \spec(Z):\\ \lambda\neq 0}}\tr[(Y-X)P_\lambda]
=0,
\label{eq:lem-xyz-tr}
\end{align}
where we used that $X\leq Y$ and $Y\ll Z$ implies $(Y-X)\ll Z$. 
By~\eqref{eq:lem-xyz-tr}, $\tr[Y-X]=0$. 
Because $(Y-X)\geq 0$, the trace of $(Y-X)$ vanishes iff the operator itself vanishes. 
Therefore, $X=Y$.
\end{proof}

\begin{lem}[Strict concavity for states]\label{lem:strict-concavity}
Let $\sigma,\sigma'\in \mathcal{S}(A),p\in (1,\infty)$, 
and let $\lambda,\lambda'\in (0,1)$ be such that $\lambda+\lambda'=1$. Then 
\begin{align}\label{eq:strict-concavity}
\lambda \sigma^{\frac{1}{p}}+\lambda' {\sigma'}^{\frac{1}{p}} \leq (\lambda\sigma+\lambda'\sigma')^{\frac{1}{p}}.
\end{align}
Moreover, if~\eqref{eq:strict-concavity} holds with equality, then $\sigma=\sigma'$. 
\end{lem}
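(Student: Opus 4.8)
The plan is to reduce the operator inequality~\eqref{eq:strict-concavity} to the numerical fact that $t\mapsto t^{1/p}$ is operator concave on $[0,\infty)$ for $p\in(1,\infty)$, and then extract the equality case from the strict concavity of the scalar function. First I would invoke operator concavity of $x\mapsto x^{1/p}$ (a standard consequence of $1/p\in(0,1)$, e.g.\ via the L\"owner theory or the integral representation $x^{r}=\frac{\sin(\pi r)}{\pi}\int_0^\infty \frac{x}{x+s}\,s^{r-1}\,\mathrm{d}s$ for $r\in(0,1)$). Operator concavity immediately gives $\lambda\sigma^{1/p}+\lambda'{\sigma'}^{1/p}\leq(\lambda\sigma+\lambda'\sigma')^{1/p}$, which is~\eqref{eq:strict-concavity}. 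The only mild subtlety is that the definition of $X^p$ in this paper takes the power on the support, but for positive semidefinite $X$ the map $x\mapsto x^{1/p}$ (with $0\mapsto 0$) agrees with this convention, so no issue arises.

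For the equality case, suppose equality holds in~\eqref{eq:strict-concavity}. The idea is to project onto a one-dimensional subspace to turn the operator equality into a scalar equality, then use strict concavity of $t\mapsto t^{1/p}$ on $[0,\infty)$. Concretely, let $\bar\sigma\coloneqq\lambda\sigma+\lambda'\sigma'$ and pick any unit vector $\ket{v}\in A$. Taking the expectation $\bra{v}\cdot\ket{v}$ of both sides of the equality, and using that for a positive semidefinite operator $Y$ one has $\bra{v}Y^{1/p}\ket{v}\le \bra{v}\big(\bra{v}Y\ket{v}\big)^{1/p}$... — actually the cleaner route: for each unit vector $\ket{v}$, the function $Y\mapsto \bra{v}Y^{1/p}\ket{v}$ is concave (restriction of operator concavity to a rank-one positive functional), and moreover $\bra{v}Y^{1/p}\ket{v}\ge \big(\bra{v}Y\ket{v}\big)^{1/p}$ is \emph{false} in general, so instead I would argue as follows. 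From operator concavity we have the operator inequality with equality, hence in particular $\bra{v}\big(\lambda\sigma^{1/p}+\lambda'{\sigma'}^{1/p}\big)\ket{v}=\bra{v}\bar\sigma^{1/p}\ket{v}$ for every $\ket{v}$.

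The main obstacle — and where the real work lies — is converting this into $\sigma=\sigma'$. The cleanest approach: diagonalize $\bar\sigma=\sum_j \mu_j\proj{e_j}$ and test the equality against eigenvectors $\ket{e_j}$. Using the operator Jensen inequality in its strict form, equality in operator concavity of a \emph{strictly} concave function (on the relevant spectra) forces $\sigma$ and $\sigma'$ to commute with $\bar\sigma$ and to have the same eigenspaces, reducing everything to the commutative (diagonal) case; there, $\lambda a^{1/p}+\lambda' b^{1/p}=(\lambda a+\lambda' b)^{1/p}$ with $a,b\ge 0$, $p>1$, forces $a=b$ by strict concavity of $t\mapsto t^{1/p}$. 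An efficient way to make the ``reduction to the commutative case'' rigorous is to apply Lemma~\ref{lem:xyz}: write the inequality as $X\le Y$ with $X\coloneqq\lambda\sigma^{1/p}+\lambda'{\sigma'}^{1/p}$ and $Y\coloneqq\bar\sigma^{1/p}$, note $\tr[X]=\tr[Y]$ is \emph{not} automatic — but $\tr X = \lambda\tr[\sigma^{1/p}]+\lambda'\tr[{\sigma'}^{1/p}]$ while $\tr Y=\tr[\bar\sigma^{1/p}]$, and strict concavity of the scalar trace functional $\sigma\mapsto\tr[\sigma^{1/p}]$ (which is strictly concave on $\mathcal S(A)$ for $p>1$ since $t\mapsto t^{1/p}$ is strictly concave) gives $\tr X\le\tr Y$ with equality iff $\sigma=\sigma'$. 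Combined with $X\le Y$ (so $\tr X\le\tr Y$ with the difference being $\tr[(Y-X)]\ge 0$ and $Y-X\ge0$), equality in~\eqref{eq:strict-concavity} forces $\tr[Y-X]=0$, hence $\tr X=\tr Y$, hence by strict concavity of $\tr[\cdot^{1/p}]$ we conclude $\sigma=\sigma'$. This last argument is short and avoids eigenspace bookkeeping entirely, so I would structure the final writeup around it: operator concavity for the inequality, then strict concavity of $\tr[\cdot^{1/p}]$ together with $Y-X\ge 0$ for the equality case.
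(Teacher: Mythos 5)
Your proof of the inequality coincides with the paper's: both rest on operator concavity of $x\mapsto x^{1/p}$ for $1/p\in(0,1)$. For the equality case, however, you take a genuinely different route. The paper sets $X=\sigma^{1/p}$, $X'={\sigma'}^{1/p}$, observes that equality gives $\lambda X+\lambda'X'=(\lambda X^p+\lambda'{X'}^p)^{1/p}$, computes $\lVert \lambda X+\lambda'X'\rVert_p=1$, and then saturates the triangle inequality for the Schatten $p$-norm; the equality condition of that triangle inequality (via the variational characterization of Schatten norms, \cite[Lemma~3.2]{tomamichel2016quantum}) forces $(\lambda X)^p\propto(\lambda'X')^p$, hence $\sigma=\sigma'$ since both are states. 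You instead take traces of the operator equality and invoke strict concavity of the functional $\tau\mapsto\tr[\tau^{1/p}]$ on $\mathcal{S}(A)$. Your route is arguably more economical (the detour through $X\leq Y$ and Lemma~\ref{lem:xyz} is superfluous, since equality already gives $\tr X=\tr Y$ directly), but its crux is precisely the claim you state parenthetically: that $\tau\mapsto\tr[\tau^{1/p}]$ is \emph{strictly} concave. This does not follow ``since $t\mapsto t^{1/p}$ is strictly concave'' by fiat; the implication from strict concavity of a scalar function $f$ to strict concavity of $A\mapsto\tr f(A)$ is a genuine (though standard) matrix-analysis lemma, proved by expanding in an eigenbasis of $\lambda\sigma+\lambda'\sigma'$ and using the equality conditions in both the scalar Jensen step and the Berezin--Jensen inequality $\bra{v}f(A)\ket{v}\leq f(\bra{v}A\ket{v})$ — essentially the eigenspace argument you sketch and then abandon. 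To make your proof complete you should either carry out that short argument or cite the strict-concavity-of-trace-functions result explicitly; with that ingredient supplied, your argument is correct and stands as a valid alternative to the paper's norm-saturation proof, trading the Schatten-norm equality condition for the trace-function strict concavity lemma.
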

\begin{proof}
The inequality in~\eqref{eq:strict-concavity} follows from the operator concavity of $X\mapsto X^{\frac{1}{p}}$ since $\frac{1}{p}\in (0,1)$. 

Now, suppose~\eqref{eq:strict-concavity} holds with equality. 
Let $X\coloneqq \sigma^{\frac{1}{p}}$ and $X'\coloneqq {\sigma'}^{\frac{1}{p}}$. 
By the saturation of the inequality~\eqref{eq:strict-concavity}, 
$\lambda X+\lambda'X'=(\lambda X^p+\lambda' {X'}^p)^{\frac{1}{p}}$. 
Hence, 
\begin{align}
\lVert \lambda X+\lambda' X'\lVert_p 
=\lVert (\lambda X^p+\lambda' {X'}^p)^{\frac{1}{p}} \lVert_p 
=(\tr[\lambda X^p+\lambda' {X'}^p])^{\frac{1}{p}}
=(\lambda+\lambda')^{\frac{1}{p}}=1.
\end{align}
By the subadditivity of norms, 
\begin{align}\label{eq:norm-subadditivity}
1=\lVert \lambda X+\lambda'X'\rVert_p
\leq \lVert \lambda X\rVert_p+\lVert \lambda'X'\rVert_p
=\lambda +\lambda'=1.
\end{align}
Hence, the inequality in~\eqref{eq:norm-subadditivity} must be saturated.  
The saturation of the inequality in~\eqref{eq:norm-subadditivity} implies that $(\lambda X)^p$ is proportional to $(\lambda' X')^p$ for some strictly positive proportionality constant (due to the variational characterization of the  Schatten norms~\cite[Lemma~3.2]{tomamichel2016quantum}). 
Since $(\lambda X)^p=\lambda^p\sigma$ and $(\lambda' X')^p=(\lambda')^p\sigma'$, we conclude that $\sigma=\sigma'$.
\end{proof}

\subsection{Proof of Theorem~\ref{thm:convexity}}
\begin{proof}
Let $\alpha\in [\frac{1}{2},1)$. 
We will now prove~\eqref{eq:joint-concavity}.
\begin{align}
&\lambda Q_\alpha(\rho_{AB}\| \sigma_A \otimes \tau_B ) + \lambda' Q_\alpha(\rho_{AB}\| \sigma_A' \otimes \tau_B' )\\
&= \lambda \tr[\rho_{AB}^\alpha (\sigma_A\otimes \tau_B)^{1-\alpha}]+\lambda'\tr[\rho_{AB}^\alpha (\sigma_A'\otimes \tau_B')^{1-\alpha}]\\
&=\tr[\rho_{AB}^\alpha (\lambda (\sigma_A\otimes \tau_B)^{1-\alpha} +\lambda' (\sigma_A'\otimes \tau_B')^{1-\alpha} )]
\label{eq:c3}\\
&=\tr[\rho_{AB}^\alpha (\sqrt{\lambda \sigma_A^{2(1-\alpha)}}\otimes \sqrt{\lambda\tau_B^{2(1-\alpha)}} 
+\sqrt{\lambda' {\sigma_A'}^{2(1-\alpha)}}\otimes \sqrt{\lambda'{\tau_B'}^{2(1-\alpha)}} )]
\label{eq:c4}\\
&\leq\tr[\rho_{AB}^\alpha \sqrt{\lambda\sigma_A^{2(1-\alpha)} +\lambda'{\sigma_A'}^{2(1-\alpha)} } \otimes \sqrt{\lambda \tau_B^{2(1-\alpha)}+\lambda' {\tau_B'}^{2(1-\alpha)} }]
\label{eq:c5}\\
&\leq\tr[\rho_{AB}^\alpha \sqrt{(\lambda\sigma_A +\lambda'\sigma_A')^{2(1-\alpha)}} 
\otimes \sqrt{\lambda \tau_B^{2(1-\alpha)}+\lambda' {\tau_B'}^{2(1-\alpha)} }]
\label{eq:c51}\\
&\leq\tr[\rho_{AB}^\alpha \sqrt{(\lambda\sigma_A +\lambda'\sigma_A')^{2(1-\alpha)}} 
\otimes \sqrt{(\lambda \tau_B+\lambda' \tau_B')^{2(1-\alpha)} }]
\label{eq:c6}\\
&=\tr[\rho_{AB}^\alpha (\lambda\sigma_A +\lambda'\sigma_A')^{1-\alpha} 
\otimes (\lambda \tau_B+\lambda' \tau_B')^{1-\alpha}]
\label{eq:c7}\\
&=Q_\alpha(\rho_{AB}\| (\lambda\sigma_A +\lambda'\sigma_A') \otimes (\lambda \tau_B+\lambda' \tau_B') )
\end{align}
\eqref{eq:c5} follows from Lemma~\ref{lem:cs}. 
\eqref{eq:c51} and~\eqref{eq:c6} hold because $X\mapsto X^{2(1-\alpha)}$ is operator concave since $2(1-\alpha)\in (0,1]$, and $X\mapsto \sqrt{X}$ is operator monotone. 
This completes the proof of~\eqref{eq:joint-concavity}.

We will now prove~\eqref{eq:joint-convexity}.
If $\rho_{AB}\perp \sigma_A\otimes\tau_B$, then $D_\alpha(\rho_{AB}\| \sigma_A\otimes\tau_B)=\infty$, so~\eqref{eq:joint-convexity} is trivially true.
If $\rho_{AB}\perp \sigma_A'\otimes\tau_B'$, then~\eqref{eq:joint-convexity} is trivially true for the same reason.
It remains to prove~\eqref{eq:joint-convexity} for the case where both
$\rho_{AB}\not\perp \sigma_A\otimes\tau_B$ and $\rho_{AB}\not\perp\sigma_A'\otimes \tau_B'$ hold. 
Then
\begin{align}
&\exp\left((\alpha - 1) ( \lambda D_\alpha(\rho_{AB}\| \sigma_A\otimes\tau_B) + \lambda' D_\alpha(\rho_{AB}\| \sigma_A'\otimes\tau_B' ) )\right)
\\
&=\tr[\rho_{AB}^\alpha (\sigma_A\otimes \tau_B)^{1-\alpha}]^\lambda
\tr[\rho_{AB}^\alpha (\sigma_A'\otimes \tau_B')^{1-\alpha}]^{\lambda'}
\label{eq:c1}\\
&\leq \lambda \tr[\rho_{AB}^\alpha (\sigma_A\otimes \tau_B)^{1-\alpha}]+\lambda'\tr[\rho_{AB}^\alpha (\sigma_A'\otimes \tau_B')^{1-\alpha}]
\label{eq:c2}\\
&=\lambda Q_\alpha(\rho_{AB}\| \sigma_A \otimes \tau_B ) + \lambda' Q_\alpha(\rho_{AB}\| \sigma_A' \otimes \tau_B' )
\label{eq:c21}\\
&\leq Q_\alpha(\rho_{AB}\| (\lambda\sigma_A +\lambda'\sigma_A') \otimes (\lambda \tau_B+\lambda' \tau_B') )
\label{eq:c22}\\
&= \exp\left((\alpha - 1)D_\alpha(\rho_{AB}\| (\lambda\sigma_A +\lambda'\sigma_A') \otimes (\lambda \tau_B+\lambda' \tau_B') ) \right).
\end{align}
\eqref{eq:c2} follows from the weighted arithmetic-geometric mean inequality. 
\eqref{eq:c22} follows from~\eqref{eq:joint-concavity}. 
This completes the proof of~\eqref{eq:joint-convexity}.

We will now prove~\eqref{eq:i-convexity}.
If $\rho_A\perp\sigma_A$, then $I_\alpha^{\downarrow}(\rho_{AB}\| \sigma_A)=\infty$, so~\eqref{eq:i-convexity} is trivially true. 
For the same reason,~\eqref{eq:i-convexity} is trivially true if $\rho_A\perp \sigma_A'$.
It remains to prove~\eqref{eq:i-convexity} for the case where both $\rho_A\not\perp\sigma_A$ and $\rho_A\not\perp\sigma_A'$ hold.
Let 
$\hat{\tau}_B\in \argmin_{\tilde{\tau}_B\in \mathcal{S}(B)} D_\alpha (\rho_{AB}\| \sigma_A\otimes \tilde{\tau}_B)$ and let 
$\hat{\tau}'_B\in \argmin_{\tilde{\tau}_B\in \mathcal{S}(B)} D_\alpha (\rho_{AB}\| \sigma_A'\otimes \tilde{\tau}_B)$. 
Then
\begin{align}
\lambda I_\alpha^{\downarrow}(\rho_{AB}\| \sigma_A) +\lambda' I_\alpha^{\downarrow}(\rho_{AB}\| \sigma_A')
&=\lambda D_\alpha (\rho_{AB}\| \sigma_A\otimes \hat{\tau}_B) + \lambda' D_\alpha (\rho_{AB}\| \sigma_A'\otimes \hat{\tau}_B')
\label{eq:i-convex1}\\
&\geq D_\alpha (\rho_{AB}\| (\lambda\sigma_A +\lambda'\sigma_A') \otimes (\lambda\hat{\tau}_B +\lambda'\hat{\tau}_B'))
\label{eq:i-convex2}\\
&\geq I_\alpha^{\downarrow} (\rho_{AB}\| \lambda\sigma_A +\lambda'\sigma_A').
\end{align}
\eqref{eq:i-convex2} follows from~\eqref{eq:joint-convexity}.
This completes the proof of~\eqref{eq:i-convexity}.

We will now prove the assertions below~\eqref{eq:i-convexity}. 
Suppose $\alpha\in (\frac{1}{2},1)$. 

First, suppose in addition that 
$\sigma_A,\sigma_A'\in \mathcal{S}_{\sim\rho_A}(A),
\tau_B,\tau_B'\in \mathcal{S}_{\sim\rho_B}(B)$, 
and that~\eqref{eq:joint-concavity} or~\eqref{eq:joint-convexity} holds with equality. 
The proof above then implies that~\eqref{eq:joint-concavity} holds with equality.
In particular, the inequalities in~\eqref{eq:c5}, \eqref{eq:c51}, and~\eqref{eq:c6} must then hold with equality.
Let us define the following positive semidefinite operators.
\begin{align}
X_B&\coloneqq \sqrt{\lambda \tau_B^{2(1-\alpha)}+\lambda' {\tau_B'}^{2(1-\alpha)}}\\
Y_B&\coloneqq \sqrt{(\lambda\tau_B+\lambda'\tau_B')^{2(1-\alpha)}}\\
Z_B&\coloneqq \tr_A[\rho_{AB}^\alpha \sqrt{(\lambda\sigma_A+\lambda'\sigma'_A)^{2(1-\alpha)}}]
\end{align}
The saturation of the inequality in~\eqref{eq:c6} can then be expressed as $\tr[X_BZ_B]= \tr[Y_BZ_B]$.
Since $\tilde{X}\mapsto \tilde{X}^{2(1-\alpha)}$ is operator concave, $X_B^2\leq Y_B^2$. 
Since $\tilde{X}\mapsto \tilde{X}^{1/2}$ is operator monotone, $X_B\leq Y_B$.
Furthermore, $Y_B\ll \rho_B$ and $\rho_B\ll Z_B$, so $Y_B\ll Z_B$. 
By applying Lemma~\ref{lem:xyz}~(b), it follows that $X_B=Y_B$, i.e.,
\begin{equation}\label{eq:lam-tau}
\lambda\tau_B^{2(1-\alpha)} +\lambda'{\tau'_B}^{2(1-\alpha)}
=\left(\lambda\tau_B + \lambda'\tau_B'\right)^{2(1-\alpha)}.
\end{equation}
An analogous argument can be made to conclude that
\begin{equation}\label{eq:lam-sigma}
\lambda\sigma_A^{2(1-\alpha)} +\lambda'{\sigma'_A}^{2(1-\alpha)}
=\left(\lambda\sigma_A + \lambda'\sigma_A'\right)^{2(1-\alpha)}.
\end{equation}
By applying Lemma~\ref{lem:strict-concavity}, we deduce from~\eqref{eq:lam-tau} and~\eqref{eq:lam-sigma} that 
$\sigma_A=\sigma_A'$ and $\tau_B=\tau_B'$.

Now, suppose $\sigma_A,\sigma_A'\in \mathcal{S}_{\sim\rho_A}(A)$, 
and that~\eqref{eq:i-convexity} holds with equality instead. 
Let $\hat{\tau}_B\in \argmin_{\tilde{\tau}_B\in \mathcal{S}(B)} D_\alpha (\rho_{AB}\| \sigma_A\otimes \tilde{\tau}_B)$ and
let $\hat{\tau}_B'\in \argmin_{\tilde{\tau}_B\in \mathcal{S}(B)} D_\alpha (\rho_{AB}\| \sigma_A'\otimes \tilde{\tau}_B)$. 
By Proposition~\ref{prop:gen-prmi}~(f), $\hat{\tau}_B,\hat{\tau}_B'\in \mathcal{S}_{\sim\rho_B}(B)$. 
Since the inequality in~\eqref{eq:i-convex2} must be saturated, it follows from the saturation of~\eqref{eq:joint-convexity} that $\sigma_A=\sigma_A'$.
\end{proof}

\section{Proof of Theorem~\ref{thm:prmi2}}\label{proof:prmi2}
For the proof of the fixed-point property, Theorem~\ref{thm:prmi2}~(k), we make use of a lemma that asserts a general equivalence of optimizers and fixed-points. 
We begin by stating and proving this lemma, and then proceed to the proof of Theorem~\ref{thm:prmi2}.

\subsection{Lemma for Theorem~\ref{thm:prmi2}~(k)}\label{app:fixed}

In order to prove the following lemma, we will use Fr\'echet derivatives. 
We will now elucidate the notation for these derivatives. 
Consider $\mathcal{B}_A\coloneqq \{X_A\in \mathcal{L}(A):X_A\text{ is self-adjoint} \}$ with $\lVert\cdot \rVert_\infty$ as a Banach space over $\mathbb{R}$. 
Similarly, consider $\mathcal{B}_B\coloneqq \{Y_B\in \mathcal{L}(B):Y_B\text{ is self-adjoint} \}$ with $\lVert\cdot \rVert_\infty$ as a Banach space over $\mathbb{R}$. 
Let $U\subseteq \mathcal{B}_A$ be an open set, and let 
$f:U\rightarrow\mathcal{B}_B$ be a Fr\'echet differentiable function. 
Then we denote the Fr\'echet derivative of $f$ at $X\in U$ by $Df(X)\in \mathcal{L}(\mathcal{B}_A,\mathcal{B}_B)$. 
(Since we are working exclusively with finite-dimensional Hilbert spaces $A$ and $B$, the Fr\'echet derivative of $f$ remains unchanged if the norms of $\mathcal{B}_A$ and $\mathcal{B}_B$ are given by some other Schatten $p$-norm for $p\in [1,\infty)$ due to norm equivalence.) 
For a Fr\'echet differentiable map $f:\mathcal{S}_{>0}(A)\rightarrow\mathcal{B}_B$, we define the directional derivative of $f$ at $\sigma\in \mathcal{S}_{>0}(A)$ in the direction of $\omega\in \mathcal{S}(A)$ as 
$\partial_\omega f(\sigma)\coloneqq Df(\sigma)(\omega-\sigma)$. 
For some basic properties of the directional derivative $\partial_\omega$, we refer the reader to~\cite[Appendix~C1]{hayashi2016correlation}.

The following lemma is an extension of~\cite[Lemma~22]{hayashi2016correlation}. 

\begin{lem}[Equivalence of optimizers and fixed-points]\label{lem:fixed_point}
Let $\alpha\in (0,\infty),\beta\in \mathbb{R},\gamma\in [-1,0)\cup (0,1)$. 
Let $\rho_{AB}\in \mathcal{S}(AB)$ and $\tau_B\in \mathcal{S}_{\not\perp\rho_B}(B)$. 
Let $X_{AB}\coloneqq \tau_B^\beta \rho_{AB}^{\frac{1}{2}}$ and $X_A\coloneqq \tr_B[X_{AB}]$.
Let us define the following functions and sets.
\begin{align}
\chi_{\alpha,\gamma} :&\quad
\mathcal{S}(A) \rightarrow [0,\infty), \, 
\sigma_A\mapsto 
\chi_{\alpha,\gamma}(\sigma_A)
\coloneqq \tr[(X_{AB}^\dagger \sigma_A^\gamma X_{AB})^\alpha ]
\\
\mathcal{X}_{\alpha,\gamma} :&\quad 
\mathcal{S}_{\sim X_A}(A) \rightarrow\mathcal{S}_{\sim X_A}(A), \, 
\sigma_A\mapsto 
\mathcal{X}_{\alpha,\gamma}(\sigma_A)\coloneqq \frac{\tr_B[(\sigma_A^{\frac{\gamma}{2}}X_{AB}X_{AB}^\dagger \sigma_A^{\frac{\gamma}{2}})^\alpha ]}{\tr[(\sigma_A^{\frac{\gamma}{2}}X_{AB}X_{AB}^\dagger \sigma_A^{\frac{\gamma}{2}})^\alpha ]}
\\
\mathcal{F}_{\alpha,\gamma}
&\coloneqq \{\sigma_A\in \mathcal{S}_{\sim X_A}(A): \mathcal{X}_{\alpha,\gamma}(\sigma_A)=\sigma_A\}
\label{eq:def-f-ag}\\
\mathcal{M}_{\alpha,\gamma}
&\coloneqq \begin{cases}
\argmax\limits_{\sigma_A\in \mathcal{S}_{\sim X_A}(A)}\chi_{\alpha,\gamma}(\sigma_A)
\quad\text{ if }\gamma \in (0,1)\\
\argmin\limits_{\sigma_A\in \mathcal{S}_{\sim X_A}(A)}\chi_{\alpha,\gamma}(\sigma_A)
\quad\text{ if }\gamma \in [-1,0)
\end{cases}
\label{eq:def-m-ag}
\end{align}
Then all of the following hold. 
\begin{enumerate}[label=(\alph*)]
\item If $\gamma\in (0,1)$ and $\alpha\in (0,\frac{1}{\gamma}]$, then 
$\mathcal{F}_{\alpha,\gamma}=\mathcal{M}_{\alpha,\gamma}\neq \emptyset$.
\item If $\gamma\in [-1,0)$, then 
$\mathcal{F}_{\alpha,\gamma}=\mathcal{M}_{\alpha,\gamma}\neq \emptyset$.
\end{enumerate}
\end{lem}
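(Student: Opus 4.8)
The plan is to follow the strategy of~\cite[Lemma~22]{hayashi2016correlation} but to handle the new regime $\gamma\in[-1,0)$ in addition to the previously known $\gamma\in(0,1)$. The key observation is that $\chi_{\alpha,\gamma}$ is smooth on $\mathcal{S}_{>0}(A)$ and that a stationarity condition for a maximizer (when $\gamma\in(0,1)$) or a minimizer (when $\gamma\in[-1,0)$) translates, via the chain rule for the Fr\'echet derivative and the cyclicity of the trace, precisely into the fixed-point equation $\mathcal{X}_{\alpha,\gamma}(\sigma_A)=\sigma_A$. First I would establish that the feasible optimization in~\eqref{eq:def-m-ag} can be restricted (without changing the value) to $\sigma_A\in\mathcal{S}_{\ll X_A\ll}(A)$, and in fact that an optimizer must lie in $\mathcal{S}_{>0}$ on the support of $X_A$: for $\gamma\in(0,1)$ one checks that $\chi_{\alpha,\gamma}$ is strictly increased by moving toward a full-rank state on $\supp(X_A)$, while for $\gamma\in[-1,0)$ the map $\sigma_A\mapsto\sigma_A^\gamma$ blows up as eigenvalues approach $0$, forcing $\chi_{\alpha,\gamma}\to\infty$ there, so the minimizer stays in the interior. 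Compactness of the relevant (closed) set together with continuity of $\chi_{\alpha,\gamma}$ on it gives $\mathcal{M}_{\alpha,\gamma}\neq\emptyset$.

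Next I would compute the directional derivative $\partial_\omega\chi_{\alpha,\gamma}(\sigma_A)$ at an interior optimizer $\sigma_A$ in the direction of an arbitrary $\omega\in\mathcal{S}(A)$, using the integral (or Daleckii--Krein) representation of the derivative of $\sigma_A\mapsto\sigma_A^\gamma$ and of $Y\mapsto Y^\alpha$, together with the directional-derivative calculus from~\cite[Appendix~C1]{hayashi2016correlation}. At an optimizer this derivative vanishes for every feasible direction; after simplification (cyclicity of the trace plus the fact that $\tr[\sigma_A^{\gamma/2}X_{AB}X_{AB}^\dagger\sigma_A^{\gamma/2}]$ appears as a common scalar factor) the stationarity condition becomes $\tr[(\mathcal{X}_{\alpha,\gamma}(\sigma_A)-\sigma_A)(\omega-\sigma_A)\cdot(\text{positive operator})]=0$ for all $\omega$; reading this off for all directions gives $\mathcal{X}_{\alpha,\gamma}(\sigma_A)=\sigma_A$, i.e.\ $\mathcal{M}_{\alpha,\gamma}\subseteq\mathcal{F}_{\alpha,\gamma}$. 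Conversely, for the reverse inclusion $\mathcal{F}_{\alpha,\gamma}\subseteq\mathcal{M}_{\alpha,\gamma}$ I would invoke a convexity/concavity argument: the function $\sigma_A\mapsto\chi_{\alpha,\gamma}(\sigma_A)$ (or a monotone transform such as $\log$) should be concave on $\mathcal{S}_{\ll X_A\ll}(A)$ when $\gamma\in(0,1)$ and $\alpha\gamma\le 1$, and convex when $\gamma\in[-1,0)$; this is exactly where the constraint $\alpha\in(0,\tfrac{1}{\gamma}]$ in part~(a) is used, since $Y\mapsto Y^\alpha$ composed with $\sigma\mapsto\sigma^\gamma$ retains operator concavity only in that range. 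For a (strictly) concave/convex smooth function on a convex set, every stationary point is a global optimizer, giving the reverse inclusion and hence $\mathcal{F}_{\alpha,\gamma}=\mathcal{M}_{\alpha,\gamma}$.

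The main obstacle I anticipate is establishing the concavity (resp.\ convexity) of $\chi_{\alpha,\gamma}$ in the correct parameter range, and in particular verifying that in the new case $\gamma\in[-1,0)$ no extra restriction on $\alpha$ is needed (so that part~(b) holds for all $\alpha\in(0,\infty)$). For $\gamma\in[-1,0)$ the map $\sigma_A\mapsto\sigma_A^\gamma$ is operator convex, and one then needs a composition lemma showing that $\sigma\mapsto\tr[(X^\dagger\sigma^\gamma X)^\alpha]$ is jointly convex for \emph{all} $\alpha>0$; this is plausible because $t\mapsto t^\alpha$ is monotone and the ``outer'' trace power interacts favourably with convexity of the inner map regardless of whether $\alpha\lessgtr 1$, but it requires care (one may need to treat $\alpha\in(0,1]$ and $\alpha\ge 1$ separately, using operator concavity of $t\mapsto t^\alpha$ in the first case and a Lieb-type concavity/convexity theorem, or the variational formula for Schatten norms as in Lemma~\ref{lem:strict-concavity}, in the second). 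A secondary technical point is the boundary behaviour of the optimization: I must rule out that optimizers escape $\mathcal{S}_{\ll X_A\ll}(A)$, which for $\gamma<0$ is automatic from the singularity of $\sigma^\gamma$, but for $\gamma\in(0,1)$ requires the monotonicity argument sketched above. Once these are in place, the stationarity computation and the convexity argument close the proof.
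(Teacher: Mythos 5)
Your overall route is the same as the paper's: reduce to the support of $X_A$, show that an optimizer exists and lies in the interior (for $\gamma\in[-1,0)$ because $\chi_{\alpha,\gamma}$ blows up when an eigenvalue of $\sigma_A$ tends to zero, for $\gamma\in(0,1)$ because the inward directional derivative is strictly positive near the boundary), identify the vanishing of all directional derivatives with the condition that $\sigma_A^{-1/2}\tr_B[(\sigma_A^{\gamma/2}X_{AB}X_{AB}^\dagger\sigma_A^{\gamma/2})^\alpha]\sigma_A^{-1/2}$ is proportional to the identity (i.e.\ with $\mathcal{X}_{\alpha,\gamma}(\sigma_A)=\sigma_A$) via the computation of~\cite{hayashi2016correlation}, and finally use concavity/convexity of $\chi_{\alpha,\gamma}$ so that stationary points and global optimizers coincide, giving $\mathcal{F}_{\alpha,\gamma}=\mathcal{M}_{\alpha,\gamma}$.

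The genuine gap is exactly the step you flag but do not close: the concavity of $\sigma_A\mapsto\tr[(X_{AB}^\dagger\sigma_A^\gamma X_{AB})^\alpha]$ for $\gamma\in(0,1)$, $\alpha\in(0,\tfrac{1}{\gamma}]$, and its convexity for $\gamma\in[-1,0)$ and \emph{all} $\alpha\in(0,\infty)$. Your elementary composition argument (monotone scalar outer function applied to an operator concave/convex inner map) only settles the easy ranges $\alpha\le 1$ in case (a) and $\alpha\ge 1$ in case (b); for $\alpha\in(1,\tfrac{1}{\gamma}]$ in (a) and $\alpha\in(0,1)$ in (b) it fails, since a concave increasing outer power composed with an operator convex map need not yield a convex trace functional (and vice versa), so the claim that the outer trace power ``interacts favourably'' with the inner map regardless of whether $\alpha$ is above or below $1$ is not a valid principle. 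These are precisely the nontrivial Epstein/Carlen--Lieb/Hiai-type ranges; the paper does not prove them but invokes the known results (Theorem~2.1 of~\cite{evert2022convexity}, together with~\cite{epstein1973remarks,carlen2008minkowski,hiai2013concavity}). With that input supplied, the remainder of your argument coincides with the paper's; note only that for $\gamma\in(0,1)$ plain compactness is not enough to get $\mathcal{M}_{\alpha,\gamma}\neq\emptyset$, since $\mathcal{S}_{\ll X_A\ll}(A)$ is not closed, and one really needs the quantitative boundary statement (strictly positive derivative toward the maximally mixed state once an eigenvalue is small), which is how the paper argues.
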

\begin{proof}[Proof of (a)] 
$X_A$ is positive semidefinite. 
Without loss of generality, suppose $X_{A}$ is positive definite.
(Otherwise, the same proof works if $A$ is restricted to $\supp(X_A)$.)
Then the sets defined in~\eqref{eq:def-f-ag} and~\eqref{eq:def-m-ag} can be expressed as follows.
\begin{align}
\mathcal{F}_{\alpha,\gamma}
&=\{\sigma_A\in \mathcal{S}_{>0}(A):\mathcal{X}_{\alpha,\gamma}(\sigma_A)=\sigma_A\}
\\
\mathcal{M}_{\alpha,\gamma}
&=\argmax_{\sigma_A\in \mathcal{S}_{>0}(A)}\chi_{\alpha,\gamma}(\sigma_A)
\end{align}
We will now prove that there exists a maximizer for 
\begin{equation}\label{eq:fix-max-sigma}
\max_{\sigma_A\in \mathcal{S}_{>0}(A)}\chi_{\alpha,\gamma}(\sigma_A).
\end{equation}
To this end, we will show that the directional derivative of $\chi_{\alpha,\gamma}(\sigma_A)$ at $\sigma_A\in \mathcal{S}_{>0}(A)$ in the direction of the maximally mixed state $\omega_A$ is strictly positive 
if (at least) one of the eigenvalues of $\sigma_A$ becomes sufficiently small. 
Let $\sigma_A\in \mathcal{S}_{>0}(A)$, let $\omega_A\coloneqq 1_A/d_A\in \mathcal{S}(A)$ and let us define the map 
$f:\mathcal{S}_{>0}(A)\rightarrow \mathcal{L}(A),\sigma_A\mapsto \sigma_A^\gamma$. 
Its Fr\'echet derivative at $\sigma_A$ is 
$Df(\sigma_A)(X_A)=f^{[1]}(\sigma_A)\odot X_A$ for all self-adjoint $X_A\in \mathcal{L}(A)$, where 
$\odot$ denotes the Hadamard product taken in an eigenbasis of $\sigma_A$ and $f^{[1]}$ is the first-order divided difference of $f$~\cite{bhatia2007positive}. 
Then,
\begin{align}
\partial_{\omega_A} \chi_{\alpha,\gamma}(\sigma_A)
&=\alpha  \tr[X_{AB}(X_{AB}^\dagger \sigma_A^\gamma X_{AB})^{\alpha -1}X_{AB}^\dagger
\partial_{\omega_A} f(\sigma_A) ]
\label{eq:d-omega-chi0}\\
&=\alpha \tr[\tr_B[X_{AB}(X_{AB}^\dagger \sigma_A^\gamma X_{AB})^{\alpha -1}X_{AB}^\dagger] 
\partial_{\omega_A} f(\sigma_A) ]
\\
&=\alpha \gamma \tr[\tr_B[X_{AB}(X_{AB}^\dagger \sigma_A^\gamma X_{AB})^{\alpha -1}X_{AB}^\dagger] 
\left(\frac{1}{d_A}\sigma_A^{\gamma-1}-\sigma_A^\gamma \right) ].
\label{eq:d-omega-chi}
\end{align}
\eqref{eq:d-omega-chi} holds because 
\begin{align}
\partial_{\omega_A} f(\sigma_A)
&=Df(\sigma_A)(\omega_A-\sigma_A)
=f^{[1]}(\sigma_A)\odot (\omega_A-\sigma_A)
\\
&=\gamma \sigma_A^{\gamma -1}\odot (\omega_A-\sigma_A)
=\gamma \sigma_A^{\gamma -1} (\omega_A-\sigma_A)
=\gamma \left(\frac{1}{d_A}\sigma_A^{\gamma -1}-\sigma_A^\gamma\right),
\label{eq:proof-fixed0}
\end{align}
where the first two equalities in~\eqref{eq:proof-fixed0} hold because 
$\omega_A-\sigma_A$ is diagonal in any eigenbasis of $\sigma_A$. 
Since $\alpha>0$ and $\gamma>0$, the prefactor in~\eqref{eq:d-omega-chi} is strictly positive. 
The trace in~\eqref{eq:d-omega-chi} is the trace of the product of two self-adjoint operators on $A$, and the first operator is positive definite. 
Now, consider the case where (at least) one of the eigenvalues of $\sigma_A$ is arbitrarily small but non-zero. 
Since $0<\gamma<1$, the corresponding eigenvalue of $\sigma_A^{\gamma -1}$ becomes arbitrarily large, whereas none of the eigenvalues of $\sigma_A^\gamma$ diverges in this limit. 
Consequently, if one of the eigenvalues of $\sigma_A$ is sufficiently small, then 
$\partial_{\omega_A} \chi_{\alpha,\gamma}(\sigma_A)$ is strictly positive. 
Hence, there exists a maximizer for the optimization problem in~\eqref{eq:fix-max-sigma}, so 
$\mathcal{M}_{\alpha,\gamma}\neq \emptyset$.

It remains to prove that $\mathcal{M}_{\alpha,\gamma}=\mathcal{F}_{\alpha,\gamma}$. 
The function $\chi_{\alpha,\gamma}$ is concave \cite[Theorem 2.1(a)]{evert2022convexity} (see also~\cite{epstein1973remarks,carlen2008minkowski,hiai2013concavity}).
Hence, for any $\sigma_A\in \mathcal{S}_{>0}(A)$: $\sigma_A\in \mathcal{M}_{\alpha,\gamma}$ iff $\partial_{\omega_A}\chi_{\alpha,\gamma}(\sigma_A)=0 \, \forall \omega_A\in \mathcal{S}(A)$.
The argument in~\cite[Proof of Lemma~22]{hayashi2016correlation} shows that
the second condition is equivalent to the condition that 
\begin{equation}\label{eq:fix-prop-1}
\sigma_A^{-\frac{1}{2}}
\tr_B[(\sigma_A^{\frac{\gamma}{2}} X_{AB}X_{AB}^\dagger \sigma_A^{\frac{\gamma}{2}})^\alpha ]
\sigma_A^{-\frac{1}{2}}
\end{equation}
is proportional to the identity. 
This is equivalent to $\sigma_A\in \mathcal{F}_{\alpha,\gamma}$.
Therefore, $\mathcal{M}_{\alpha,\gamma}=\mathcal{F}_{\alpha,\gamma}$.
\end{proof}
\begin{proof}[Proof of (b)]
$X_A$ is positive semidefinite. 
Without loss of generality, suppose $X_{A}$ is positive definite.
Then the sets defined in~\eqref{eq:def-f-ag} and~\eqref{eq:def-m-ag} can be expressed as follows.
\begin{align}
\mathcal{F}_{\alpha,\gamma}
&=\{\sigma_A\in \mathcal{S}_{>0}(A):\mathcal{X}_{\alpha,\gamma}(\sigma_A)=\sigma_A\}
\\
\mathcal{M}_{\alpha,\gamma}
&=\argmin_{\sigma_A\in \mathcal{S}_{>0}(A)}\chi_{\alpha,\gamma}(\sigma_A)
\end{align}
Since $\alpha>0,\gamma<0$, and 
$\chi_{\alpha,\gamma}(\sigma_A)=\tr[(X_{AB}^\dagger \sigma_A^\gamma X_{AB})^\alpha ]$,
it is clear that $\chi_{\alpha,\gamma}(\sigma_A)$ diverges to $+\infty$ 
if (at least) one of the eigenvalues of $\sigma_A$ becomes arbitrarily small. 
Hence, there exists a minimizer for 
$\min_{\sigma_A\in \mathcal{S}_{>0}(A)}\chi_{\alpha,\gamma}(\sigma_A)$, so 
$\mathcal{M}_{\alpha,\gamma}\neq \emptyset$.

It remains to prove that $\mathcal{M}_{\alpha,\gamma}=\mathcal{F}_{\alpha,\gamma}$.
The function $\chi_{\alpha,\gamma}$ is convex~\cite[Theorem 2.1(b)]{evert2022convexity} (see also~\cite{hiai2013concavity}).
Hence, for any $\sigma_A\in \mathcal{S}_{>0}(A)$: $\sigma_A\in \mathcal{M}_{\alpha,\gamma}$ iff 
$\partial_{\omega_A}\chi_{\alpha,\gamma}(\sigma_A)=0 \, \forall \omega_A\in \mathcal{S}(A)$.
The argument in~\cite[Proof of Lemma~22]{hayashi2016correlation} shows that
the second condition is equivalent to the condition that the operator in~\eqref{eq:fix-prop-1}
is proportional to the identity. 
Therefore, $\mathcal{M}_{\alpha,\gamma}=\mathcal{F}_{\alpha,\gamma}$.
\end{proof}

\subsection{Proof of Theorem~\ref{thm:prmi2}}\label{app:proof-prmi2}

\begin{proof}[Proof of (a)]
This assertion follows from the symmetry of the definition of the doubly minimized PRMI in~\eqref{eq:prmi2} with respect to $A$ and $B$.
\end{proof}
\begin{proof}[Proof of (b), (e), (f), (h), (i), (n), (r)]
Since $I_\alpha^{\downarrow\downarrow}(A:B)_\rho=\inf_{\sigma_A\in \mathcal{S}(A)}I_\alpha^{\downarrow}(\rho_{AB}\| \sigma_A)$, 
these properties follow from corresponding properties of the minimized generalized PRMI, see Proposition~\ref{prop:gen-prmi}. 
In particular, (h) and~(i) follow from Proposition~\ref{prop:gen-prmi}~(f). 
\end{proof}
\begin{proof}[Proof of (c)]
\begin{align}
I_\alpha^{\downarrow\downarrow}(A':B')_{V\otimes W\rho_{AB}V^\dagger\otimes W^\dagger}
&=\inf_{\sigma_{A'}\in \mathcal{S}(A')}I_\alpha^\downarrow (V\otimes W\rho_{AB}V^\dagger\otimes W^\dagger\| \sigma_{A'})
\\
&=\inf_{\sigma_{A'}\in \mathcal{S}(A')}I_\alpha^\downarrow (V\rho_{AB}V^\dagger \| \sigma_{A'})
=I_\alpha^{\downarrow\downarrow}(A':B)_{V\rho_{AB}V^\dagger}
\label{eq:iso-dd1}\\
&=\inf_{\tau_{B}\in \mathcal{S}(B)}I_\alpha^\downarrow (V\rho_{AB}V^\dagger \| \tau_{B})
\\
&=\inf_{\tau_{B}\in \mathcal{S}(B)}I_\alpha^\downarrow (\rho_{AB}\| \tau_{B})
=I_\alpha^{\downarrow\downarrow}(A:B)_{\rho_{AB}}
\label{eq:iso-dd2}
\end{align}
Above, we have used the invariance of the minimized generalized PRMI under local isometries, see Proposition~\ref{prop:gen-prmi}~(b), twice:
for the first equality in~\eqref{eq:iso-dd1},
and for the first equality in~\eqref{eq:iso-dd2}.
\end{proof}
\begin{proof}[Proof of (j)] 
If $\alpha=1$, then the assertion is true due to~\eqref{eq:i1-argmin}. 
It remains to prove~(j) for $\alpha\in (\frac{1}{2},1)$.

Let $\rho_{AB}\in \mathcal{S}(AB),\alpha\in (\frac{1}{2},1)$ be fixed. 
Let $\mathcal{M}_\alpha\coloneqq \argmin_{\sigma_A\in \mathcal{S}(A)}I_\alpha^\downarrow (\rho_{AB}\| \sigma_A)$. 
We will now prove that $\mathcal{M}_\alpha\subseteq \mathcal{S}_{\sim\rho_A}(A)$ by cases.

Case 1: $\rho_{AB}=\rho_A\otimes \rho_B$. 
By the positive definiteness of the Petz divergence, 
$\mathcal{M}_\alpha=\{\rho_A\}$. 
Therefore, $\mathcal{M}_\alpha\subseteq \mathcal{S}_{\sim \rho_A}(A)$. 

Case 2: $\rho_{AB}\neq\rho_A\otimes \rho_B$. 
For this case, we will prove that
\begin{align}\label{eq:proof-case3}
\argmin_{({\sigma}_A,{\tau}_B)\in \mathcal{S}(A)\times \mathcal{S}(B) }D_\alpha (\rho_{AB}\| {\sigma}_A\otimes {\tau}_B)
=\argmin_{({\sigma}_A,{\tau}_B)\in \mathcal{S}_{\sim \rho_A}(A)\times \mathcal{S}_{\sim \rho_B}(B) }D_\alpha (\rho_{AB}\| {\sigma}_A\otimes {\tau}_B).
\end{align}
Clearly, the assertion that $\mathcal{M}_\alpha\subseteq \mathcal{S}_{\sim \rho_A}(A)$ then follows from~\eqref{eq:proof-case3}. 
In order to prove~\eqref{eq:proof-case3}, let  
\begin{equation}\label{eq:proof-support}
(\hat{\sigma}_A,\hat{\tau}_B)\in 
\argmin_{(\sigma_A,\tau_B)\in \mathcal{S}(A)\times \mathcal{S}(B)} D_\alpha (\rho_{AB}\| \sigma_A\otimes \tau_B).
\end{equation}
By Proposition~\ref{prop:gen-prmi}~(f), $\hat{\sigma}_A\ll \rho_A$ and $\hat{\tau}_B\ll \rho_B$.
It remains to show that $\rho_A\ll\hat{\sigma}_A$ and $\rho_B\ll\hat{\tau}_B$.
Note that if $\rho_A\ll\hat{\sigma}_A$, then $\rho_B\ll\hat{\tau}_B$ by Proposition~\ref{prop:gen-prmi}~(f). 
So it suffices to show that $\rho_A\ll\hat{\sigma}_A$. 
We will prove this by contradiction.

Suppose it is false that $\rho_A\ll\hat{\sigma}_A$. 
Let $\sigma_A\in \mathcal{S}(A)$ be the quantum state that is proportional to the orthogonal projection onto $\ker(\hat{\sigma}_A)$. 
Let $\tau_B\in \mathcal{S}(B)$, to be specified later.
Let 
\begin{align}
\hat{f}\coloneqq Q_\alpha(\rho_{AB}\| \hat{\sigma}_A\otimes \hat{\tau}_B )=\max_{(\sigma_A',\tau_B')\in \mathcal{S}(A)\times\mathcal{S}(B)}Q_\alpha(\rho_{AB}\| {\sigma}_A'\otimes {\tau}_B' )>0.
\end{align}
Let us define the following two functions of $\lambda\in [0,1]$.
\begin{align}
f_1(\lambda)&\coloneqq  Q_\alpha(\rho_{AB}\| ((1-\lambda)\hat{\sigma}_A+\lambda\sigma_A)\otimes ((1-\lambda)\hat{\tau}_B+\lambda\tau_B) )
\\
f_2(\lambda)
&\coloneqq \hat{f}-(2-\alpha)\lambda \hat{f}
+ \lambda^{1-\alpha}Q_\alpha (\rho_{AB}\| {\sigma}_A\otimes ((1-\lambda)\hat{\tau}_B+\lambda\tau_B) )
\label{eq:proof-def-f3}
\end{align}
Then, for all $\lambda\in [0,1]$
\begin{subequations}\label{eq:proof-fmax>f3}
\begin{align}
\hat{f}
&\geq f_1(\lambda)
\label{eq:proof-supp0}\\
&=(1-\lambda)^{1-\alpha} Q_\alpha(\rho_{AB}\| \hat{\sigma}_A\otimes ((1-\lambda)\hat{\tau}_B+\lambda\tau_B) )
+\lambda^{1-\alpha} Q_\alpha(\rho_{AB}\| \sigma_A\otimes ((1-\lambda)\hat{\tau}_B+\lambda\tau_B) )
\label{eq:proof-supp00}\\
&\geq (1-\lambda)^{1-\alpha}( (1-\lambda) \hat{f}
+  \lambda Q_\alpha(\rho_{AB}\| \hat{\sigma}_A\otimes \tau_B ) )
+\lambda^{1-\alpha} Q_\alpha(\rho_{AB}\| \sigma_A\otimes ((1-\lambda)\hat{\tau}_B+\lambda\tau_B) )
\label{eq:proof-supp1}\\
&\geq 
(1-\lambda)^{2-\alpha}\hat{f}
+\lambda^{1-\alpha} Q_\alpha (\rho_{AB}\| {\sigma}_A\otimes ((1-\lambda)\hat{\tau}_B+\lambda\tau_B) )
\label{eq:proof-supp2}\\
&\geq 
(1-(2-\alpha)\lambda)\hat{f}
+\lambda^{1-\alpha} Q_\alpha (\rho_{AB}\| {\sigma}_A\otimes ((1-\lambda)\hat{\tau}_B+\lambda\tau_B) )
=f_2(\lambda).
\label{eq:proof-supp3}
\end{align}
\end{subequations}
\eqref{eq:proof-supp0} follows from~\eqref{eq:proof-support}. 
\eqref{eq:proof-supp00} holds because $\hat{\sigma}_A\perp\sigma_A$ implies that 
$((1-\lambda) \hat{\sigma}_A+\lambda \sigma_A)^{1-\alpha}=((1-\lambda) \hat{\sigma}_A)^{1-\alpha}+(\lambda \sigma_A)^{1-\alpha}$.
\eqref{eq:proof-supp1} follows from the concavity of $Q_\alpha(\rho_{AB}\| \cdot )$~\cite{tomamichel2016quantum}. 
\eqref{eq:proof-supp3} holds because $(1-\lambda)^{2-\alpha}\geq 1-(2-\alpha)\lambda$ for all $\lambda\in [0,1]$.

Case 2.1: $Q_\alpha (\rho_{AB}\| \sigma_A\otimes \hat{\tau}_B)>0$. 
Then we define $\tau_B\coloneqq \hat{\tau}_B$. 
By~\eqref{eq:proof-def-f3}, 
\begin{align}\label{eq:proof-supp31}
f_2(\lambda)
&=\hat{f}+\lambda (-(2-\alpha)\hat{f}+\lambda^{-\alpha}Q_\alpha(\rho_{AB}\| \sigma_A\otimes \hat{\tau}_B) )
\end{align}
for all $\lambda\in [0,1]$. 
Let $\lambda_0\coloneqq
( \frac{Q_\alpha(\rho_{AB}\| \sigma_A\otimes \hat{\tau}_B) }{(2-\alpha)\hat{f}} )^{\frac{1}{\alpha}}>0$. 
Then $f_2(\lambda_0)=\hat{f}$. 
Since $-\alpha<0$, it follows from~\eqref{eq:proof-supp31} that 
$f_2(\lambda)>\hat{f}$ for all $\lambda\in (0,\lambda_0)$.
This contradicts~\eqref{eq:proof-fmax>f3}.

Case 2.2: $Q_\alpha (\rho_{AB}\| \sigma_A\otimes \hat{\tau}_B)=0$. 
Then we define $\tau_B$ as the quantum state which is proportional to the orthogonal projection onto $\ker(\hat{\tau}_B)$. 
Then, for all $\lambda\in [0,1]$
\begin{subequations}\label{eq:proof-supp<}
\begin{align}
Q_\alpha(\rho_{AB}\| \sigma_A\otimes ((1-\lambda)\hat{\tau}_B+\lambda\tau_B))
&=\lambda^{1-\alpha}Q_\alpha(\rho_{AB}\| \sigma_A\otimes \tau_B)+(1-\lambda)^{1-\alpha} Q_\alpha (\rho_{AB}\| \sigma_A\otimes \hat{\tau}_B)
\label{eq:proof-supp<1}\\
&=\lambda^{1-\alpha}Q_\alpha(\rho_{AB}\| \sigma_A\otimes \tau_B).
\label{eq:proof-supp<2}
\end{align}
\end{subequations}
The left-hand side of~\eqref{eq:proof-supp<1} is non-zero for any $\lambda\in (0,1)$ because $\sigma_A\ll \rho_A$ and $(1-\lambda)\hat{\tau}_B+\lambda \tau_B$ has full rank. 
Hence, the expression in~\eqref{eq:proof-supp<2} must be non-zero for any $\lambda\in (0,1)$, which implies that $Q_\alpha(\rho_{AB}\| \sigma_A\otimes \tau_B)>0$.
The combination of~\eqref{eq:proof-supp<} and~\eqref{eq:proof-def-f3} implies that 
\begin{align}\label{eq:proof-supp32}
f_2(\lambda)
= \hat{f}+\lambda (
-(2-\alpha)\hat{f} +\lambda^{1-2\alpha}Q_\alpha (\rho_{AB}\| \sigma_A\otimes \tau_B)
)
\qquad \forall \lambda\in [0,1].
\end{align}
Let $\lambda_0\coloneqq (\frac{Q_\alpha(\rho_{AB}\| \sigma_A\otimes \tau_B)}{(2-\alpha)\hat{f}} )^{\frac{1}{2\alpha-1}}>0$. 
Then $f_2(\lambda_0)=\hat{f}$. 
Since $1-2\alpha <0$, it follows from~\eqref{eq:proof-supp32} that 
$f_2(\lambda)>\hat{f}$ for all $\lambda\in (0,\lambda_0)$.
This contradicts~\eqref{eq:proof-fmax>f3}.
This completes the proof of the inclusion~$\mathcal{M}_\alpha\subseteq \mathcal{S}_{\sim \rho_A}(A)$.

We will now prove that $\mathcal{M}_\alpha$ cannot contain more than one element. 
Let $\sigma_A,\sigma_A'\in \mathcal{M}_\alpha$. 
Let $\lambda\coloneqq \frac{1}{2},\lambda'\coloneqq \frac{1}{2}$. 
Then, \eqref{eq:i-convexity} in Theorem~\ref{thm:convexity} implies that also 
$\lambda \sigma_A+\lambda'\sigma_A'\in \mathcal{M}_\alpha$. 
We have 
\begin{align}
I_\alpha^{\downarrow\downarrow}(A:B)_\rho 
&=I_\alpha^\downarrow(\rho_{AB}\| \lambda \sigma_A+\lambda'\sigma_A') 
=\lambda I_\alpha^\downarrow(\rho_{AB}\| \sigma_A)+\lambda' I_\alpha^\downarrow(\rho_{AB}\| \sigma_A').
\end{align}
Hence, the inequality in~\eqref{eq:i-convexity} is saturated. 
By Theorem~\ref{thm:convexity}, $\sigma_A=\sigma_A'$. 
This proves that $\mathcal{M}_\alpha$ contains at most one element. 
We conclude that there exists $\hat{\sigma}_A\in \mathcal{S}_{\sim\rho_A}(A)$ such that $\mathcal{M}_\alpha=\{\hat{\sigma}_A\}$.
The assertion now follows from Proposition~\ref{prop:gen-prmi}~(f).
\end{proof}
\begin{proof}[Proof of (k)]
Let $\alpha\in (\frac{1}{2},2]$.

Case 1: $\alpha=1$. 
Then $\mathcal{M}_\alpha=\{\rho_A\}$ due to~\eqref{eq:min-re-t}.
For any $\sigma_A\in \mathcal{S}_{\sim\rho_A}(A)$: $\sigma_A\in \mathcal{F}_\alpha$ iff 
$\sigma_A=\tr_C[\rho_{AC}]$.
Hence, $\mathcal{F}_\alpha=\{\rho_A\}$. 
Therefore, $\mathcal{M}_\alpha=\mathcal{F}_\alpha$.

Case 2: $\alpha\in (\frac{1}{2},1)$.
Let $\beta\coloneqq \frac{1}{\alpha}\in (1,2)$ and $\gamma\coloneqq 1-\alpha =\frac{\beta -1}{\beta}\in (0,\frac{1}{2})$. 
Let $X_{AC}\coloneqq \rho_C^{\frac{1-\beta}{2\beta}}\rho_{AC}^{\frac{1}{2}}$ and 
$X_A\coloneqq\tr_C[X_{AC}]$. 
Then $\supp(X_A)=\supp (\rho_A)$.
By~(j) and the duality of the minimized generalized PRMI, see Proposition~\ref{prop:gen-prmi}~(d), 
\begin{align}
\mathcal{M}_\alpha
=\argmax_{\sigma_A\in \mathcal{S}_{\sim\rho_A}(A)} 
\widetilde{Q}_\beta (\rho_{AC}\| \sigma_A^{-1}\otimes \rho_C)
=\argmax_{\sigma_A\in \mathcal{S}_{\sim X_A}(A)} 
\tr[(X_{AC}^\dagger \sigma_A^\gamma X_{AC} )^\beta ].
\end{align}
Lemma~\ref{lem:fixed_point}~(a) implies that $\mathcal{M}_\alpha=\mathcal{F}_\alpha$ since $\gamma\in (0,1)$ and $0<\beta\leq 2\leq \frac{1}{\gamma}$.

Case 3: $\alpha\in (1,2]$.
Let $\beta\coloneqq \frac{1}{\alpha}\in [\frac{1}{2},1)$ and $\gamma\coloneqq 1-\alpha =\frac{\beta -1}{\beta}\in [-1,0)$. 
Let $X_{AC}\coloneqq \rho_C^{\frac{1-\beta}{2\beta}}\rho_{AC}^{\frac{1}{2}}$ and 
$X_A\coloneqq\tr_C[X_{AC}]$. 
Then $\supp(X_A)=\supp (\rho_A)$.
By~(h) and the duality of the minimized generalized PRMI, see Proposition~\ref{prop:gen-prmi}~(d), 
\begin{align}
\mathcal{M}_\alpha
=\argmin_{\sigma_A\in \mathcal{S}_{\sim\rho_A}(A)} 
\widetilde{Q}_\beta (\rho_{AC}\| \sigma_A^{-1}\otimes \rho_C)
=\argmin_{\sigma_A\in \mathcal{S}_{\sim X_A}(A)} 
\tr[(X_{AC}^\dagger \sigma_A^\gamma X_{AC} )^\beta ].
\end{align}
Lemma~\ref{lem:fixed_point}~(b) implies that $\mathcal{M}_\alpha=\mathcal{F}_\alpha$ since $\gamma\in [-1,0)$ and $0<\beta$.
\end{proof}

\begin{proof}[Proof of (d)]
Let $\alpha\in [\frac{1}{2},2]$.

Case 1: $\alpha\in (\frac{1}{2},2]$. 
Let $\ket{\rho}_{ABC}\in ABC$ be such that $\tr_C[\proj{\rho}_{ABC}]=\rho_{AB}$, and let 
$\ket{\rho'}_{DEF}\in DEF$ be such that $\tr_F[\proj{\rho'}_{DEF}]=\rho'_{DE}$.

Let $\sigma_A\in \argmin_{\tilde{\sigma}_A\in \mathcal{S}(A)}I_\alpha^\downarrow (\rho_{AB}\| \tilde{\sigma}_A)$.
According to~(k), $\sigma_A\in \mathcal{S}_{\sim\rho_A}(A)$ and 
\begin{align}\label{eq:fix1}
\sigma_A=\frac{\tr_C[(\sigma_A^{\frac{1-\alpha}{2}} \otimes \rho_C^{\frac{\alpha-1}{2}}\rho_{AC} \sigma_A^{\frac{1-\alpha}{2}} \otimes \rho_C^{\frac{\alpha-1}{2}})^{\frac{1}{\alpha}}]}{\tr[ (\sigma_A^{\frac{1-\alpha}{2}} \otimes \rho_C^{\frac{\alpha-1}{2}}\rho_{AC} \sigma_A^{\frac{1-\alpha}{2}} \otimes \rho_C^{\frac{\alpha-1}{2}})^{\frac{1}{\alpha}} ]}.
\end{align}
Let $\sigma'_D\in \argmin_{\tilde{\sigma}'_D\in \mathcal{S}(D)}I_\alpha^\downarrow (\rho'_{DE}\| \tilde{\sigma}'_D)$. 
According to~(k), $\sigma'_{D}\in \mathcal{S}_{\sim\rho'_D}(D)$ and 
\begin{align}\label{eq:fix2}
{\sigma'_D}=\frac{\tr_F[({\sigma'_D}^{\frac{1-\alpha}{2}} \otimes {\rho'_F}^{\frac{\alpha-1}{2}}\rho'_{DF} {\sigma'_D}^{\frac{1-\alpha}{2}} \otimes {\rho'_F}^{\frac{\alpha-1}{2}})^{\frac{1}{\alpha}}]}{\tr[ ({\sigma'_D}^{\frac{1-\alpha}{2}} \otimes {\rho'_F}^{\frac{\alpha-1}{2}}\rho'_{DF} {\sigma'_D}^{\frac{1-\alpha}{2}} \otimes {\rho'_F}^{\frac{\alpha-1}{2}})^{\frac{1}{\alpha}} ]}.
\end{align}
Clearly, we have $\sigma_A\otimes\sigma'_{D}\in \mathcal{S}_{\sim\rho_A\otimes\rho'_D}(AD)$, 
and the combination of~\eqref{eq:fix1} and~\eqref{eq:fix2} implies that 
\begin{equation}\label{eq:fix3}
\sigma_A\otimes {\sigma'_D}
=\frac{\tr_{CF}[((\sigma_A\otimes {\sigma'_D})^{\frac{1-\alpha}{2}} \otimes (\rho_C\otimes {\rho'_F})^{\frac{\alpha-1}{2}} \rho_{AC}\otimes \rho'_{DF} 
(\sigma_A\otimes {\sigma'_D})^{\frac{1-\alpha}{2}} \otimes (\rho_C\otimes {\rho'_F})^{\frac{\alpha-1}{2}})^{\frac{1}{\alpha}}]}{\tr[ ((\sigma_A\otimes {\sigma'_D})^{\frac{1-\alpha}{2}} \otimes (\rho_C\otimes {\rho'_F})^{\frac{\alpha-1}{2}} \rho_{AC}\otimes \rho'_{DF} 
(\sigma_A\otimes {\sigma'_D})^{\frac{1-\alpha}{2}} \otimes (\rho_C\otimes {\rho'_F})^{\frac{\alpha-1}{2}})^{\frac{1}{\alpha}} ]}.
\end{equation}
We can deduce from~\eqref{eq:fix3} that
$\sigma_A\otimes \sigma'_D \in \argmin_{\tilde{\sigma}_{AD}\in \mathcal{S}(AD)} I_\alpha^{\downarrow}(\rho_{AB}\otimes \rho'_{DE}\| \tilde{\sigma}_{AD} )$ due to~(k). Hence,
\begin{align}
I_\alpha^{\downarrow\downarrow}(AD:BE)_{\rho_{AB}\otimes \rho'_{DE}}
&=I_\alpha^{\downarrow}(\rho_{AB}\otimes \rho'_{DE}\| \sigma_{A}\otimes \sigma'_D )
\\
&=I_\alpha^{\downarrow}(\rho_{AB}\| \sigma_{A})
+I_\alpha^{\downarrow}( \rho'_{DE}\| \sigma'_D )
\label{eq:proof-prmi2-add1}\\
&= I_\alpha^{\downarrow\downarrow}(A:B)_{\rho_{AB}}+I_\alpha^{\downarrow\downarrow}(D:E)_{\rho'_{DE}}.
\label{eq:proof-add}
\end{align}
\eqref{eq:proof-prmi2-add1} follows from the additivity of the minimized generalized PRMI, see Proposition~\ref{prop:gen-prmi}~(c).

Case 2: $\alpha=\frac{1}{2}$. 
Then the assertion follows from case 1 by taking the limit $\alpha\rightarrow 1/2^+$ due to the continuity in $\alpha$ of the Petz divergence.
\end{proof}

\begin{proof}[Proof of (l)]
Let $\alpha \in [0,2]$. Then, for any $n\in \mathbb{N}_{>0}$
\begin{subequations}\label{eq:reg-geq}
\begin{align}
I_\alpha^{\downarrow\downarrow}(A:B)_\rho
&=\inf_{\substack{\sigma_A\in \mathcal{S}(A),\\ \tau_B\in \mathcal{S}(B)}} 
\frac{1}{n}D_\alpha (\rho_{AB}^{\otimes n}\| \sigma_A^{\otimes n}\otimes \tau_B^{\otimes n})
\label{eq:r0}\\
&\geq \inf_{\substack{\sigma_{A^n}\in \mathcal{S}_{\sym}(A^{\otimes n}),\\ \tau_{B^n}\in \mathcal{S}_{\sym}(B^{\otimes n}) }} 
\frac{1}{n}D_\alpha (\rho_{AB}^{\otimes n}\| \sigma_{A^n}\otimes \tau_{B^n})
\label{eq:r1}\\
&\geq \frac{1}{n}D_\alpha (\rho_{AB}^{\otimes n}\| \omega_{A^n}^n \otimes \omega_{B^n}^n )
- \frac{\log g_{n,d_A}}{n}- \frac{\log g_{n,d_B}}{n}
\label{eq:r2}\\
&\geq \inf_{\substack{\sigma_{A^n}\in \mathcal{S}_{\sym}(A^{\otimes n}),\\ \tau_{B^n}\in \mathcal{S}_{\sym}(B^{\otimes n}) }} 
\frac{1}{n}D_\alpha (\rho_{AB}^{\otimes n}\| \sigma_{A^n}\otimes \tau_{B^n})
- \frac{\log g_{n,d_A}}{n}- \frac{\log g_{n,d_B}}{n}.
\label{eq:r21}
\end{align}
\end{subequations}
\eqref{eq:r0} follows from the additivity of the Petz divergence.
\eqref{eq:r2} follows from Remark~\ref{rem:universal-state}~(b). 
\eqref{eq:r21} follows from Remark~\ref{rem:universal-state}~(a).
In the limit $n\rightarrow\infty$, the two terms on the right-hand side of~\eqref{eq:r2} vanish due to Remark~\ref{rem:universal-state}~(b). 
Hence, $I_\alpha^{\downarrow\downarrow}(A:B)_\rho
\geq \limsup_{n\rightarrow\infty}\frac{1}{n}D_\alpha (\rho_{AB}^{\otimes n}\| \omega_{A^n}^n \otimes \omega_{B^n}^n )$.

Case 1: $\alpha\in [\frac{1}{2},2]$. Then, for any $n\in \mathbb{N}_{>0}$
\begin{align}
I_\alpha^{\downarrow\downarrow}(A:B)_\rho
&\geq \inf_{\substack{\sigma_{A^n}\in \mathcal{S}_{\sym}(A^{\otimes n}),\\ \tau_{B^n}\in \mathcal{S}_{\sym}(B^{\otimes n}) }} 
\frac{1}{n}D_\alpha (\rho_{AB}^{\otimes n}\| \sigma_{A^n}\otimes \tau_{B^n})
\label{eq:r00}\\
&\geq \inf_{\substack{\sigma_{A^n}\in \mathcal{S}(A^n),\\ \tau_{B^n}\in \mathcal{S}(B^n) }} 
\frac{1}{n}D_\alpha (\rho_{AB}^{\otimes n}\| \sigma_{A^n}\otimes \tau_{B^n})
=\frac{1}{n} I_\alpha^{\downarrow\downarrow}(A^n:B^n)_{\rho^{\otimes n}}
=I_\alpha^{\downarrow\downarrow}(A:B)_\rho.
\label{eq:r01}
\end{align}
\eqref{eq:r00} follows from~\eqref{eq:r1}. 
\eqref{eq:r01} follows from additivity~(d). 
This proves the assertion in~\eqref{eq:prmi2-omega2}. 
Combining this result with~\eqref{eq:reg-geq} and taking the limit $n\rightarrow\infty$ proves the assertion in~\eqref{eq:prmi2-omega1} for $\alpha\in [\frac{1}{2},2]$. 

Case 2: $\alpha \in (0,\frac{1}{2})$. 
Let $A'$ and $B'$ be Hilbert spaces isomorphic to $A$ and $B$, respectively. 
Then, for any $n\in \mathbb{N}_{>0}$
\begin{align}
&\exp((\alpha-1)D_\alpha(\rho_{AB}^{\otimes n}\| \omega_{A^n}^n\otimes \omega_{B^n}^n))
\label{eq:rc0}\\
&=\tr[(\rho_{AB}^{\otimes n})^\alpha (\omega^n_{A^n}\otimes \omega^n_{B^n})^{1-\alpha}]
\label{eq:rc1}\\
&=\tr[(\rho_{AB}^{\otimes n})^\alpha \omega^n_{A^n}\otimes \omega^n_{B^n} (\omega^n_{A^n}\otimes \omega^n_{B^n})^{-\alpha}]
\label{eq:rc2}\\
&=\smashoperator[r]{\int\limits_{\mathcal{U}(AA')}} \mathrm{d}\mu_H(U)
\smashoperator[r]{\int\limits_{\mathcal{U}(BB')}} \mathrm{d}\mu_H(V)
\tr[(\rho_{AB}^{\otimes n})^\alpha \sigma(U)_A^{\otimes n}\otimes \tau(V)_B^{\otimes n} (\omega^n_{A^n}\otimes \omega^n_{B^n})^{-\alpha}]
\label{eq:rc3}\\
&=\smashoperator[r]{\int\limits_{\mathcal{U}(AA')}} \mathrm{d}\mu_H(U)
\smashoperator[r]{\int\limits_{\mathcal{U}(BB')}} \mathrm{d}\mu_H(V)
\tr[(\sigma(U)_A^{\otimes n}\otimes \tau(V)_B^{\otimes n})^{\frac{1}{2}} (\rho_{AB}^{\otimes n})^\alpha (\sigma(U)_A^{\otimes n}\otimes \tau(V)_B^{\otimes n})^{\frac{1}{2}} (\omega^n_{A^n}\otimes \omega^n_{B^n})^{-\alpha}]
\label{eq:rc4}\\
&\leq g_{n,d_A}^\alpha g_{n,d_B}^\alpha
\smashoperator[r]{\int\limits_{\mathcal{U}(AA')}} \mathrm{d}\mu_H(U)
\smashoperator[r]{\int\limits_{\mathcal{U}(BB')}} \mathrm{d}\mu_H(V)
\tr\left[(\sigma(U)_A^{\otimes n}\otimes \tau(V)_B^{\otimes n})^{\frac{1}{2}} (\rho_{AB}^{\otimes n})^\alpha (\sigma(U)_A^{\otimes n}\otimes \tau(V)_B^{\otimes n})^{\frac{1}{2}-\alpha}\right]
\label{eq:rc5}\\
&=g_{n,d_A}^\alpha g_{n,d_B}^\alpha 
\smashoperator[r]{\int\limits_{\mathcal{U}(AA')}} \mathrm{d}\mu_H(U)
\smashoperator[r]{\int\limits_{\mathcal{U}(BB')}} \mathrm{d}\mu_H(V)
(\tr[ (\rho_{AB})^\alpha (\sigma(U)_A\otimes \tau(V)_B)^{1-\alpha} ])^n 
\label{eq:rc6}\\
&\leq g_{n,d_A}^\alpha g_{n,d_B}^\alpha 
\sup_{\sigma_{A}\in \mathcal{S}(A)}\sup_{\tau_{B}\in \mathcal{S}(B)} (\tr[ (\rho_{AB})^\alpha (\sigma_A\otimes \tau_B)^{1-\alpha} ])^n
\label{eq:rc7}\\
&=g_{n,d_A}^\alpha g_{n,d_B}^\alpha \exp((\alpha-1)nI_\alpha^{\downarrow\downarrow}(A:B)_\rho).
\end{align}
\eqref{eq:rc3} follows from Remark~\ref{rem:universal-state}~(d). 
\eqref{eq:rc4} follows from Remark~\ref{rem:universal-state}~(c). 
\eqref{eq:rc5} follows from Remark~\ref{rem:universal-state}~(b) and 
the operator anti-monotonicity of $X\mapsto X^{-\alpha}$ for $\alpha\in (0,1)$. 
Hence, for any $n\in \mathbb{N}_{>0}$ 
\begin{align}
\frac{1}{n}D_\alpha(\rho_{AB}^{\otimes n}\| \omega_{A^n}^n\otimes \omega_{B^n}^n)
&\geq I_\alpha^{\downarrow\downarrow}(A:B)_\rho - \frac{\alpha}{1-\alpha} \left(\frac{\log g_{n,d_A}}{n}+ \frac{\log g_{n,d_B}}{n}\right).
\label{eq:rc10}
\end{align}
By Remark~\ref{rem:universal-state}~(b), the second term on the right-hand side vanishes in the limit $n\rightarrow\infty$. Therefore,
\begin{equation}\label{eq:reg-liminf}
\liminf_{n\rightarrow\infty}\frac{1}{n}D_\alpha (\rho_{AB}^{\otimes n}\| \omega_{A^n}^n\otimes \omega_{B^n}^n)
\geq I_\alpha^{\downarrow\downarrow}(A:B)_\rho.
\end{equation}

Case 3: $\alpha=0$. 
By monotonicity in $\alpha$~(n), we have for any $n\in \mathbb{N}_{>0}$
\begin{equation}
I_0^{\downarrow\downarrow}(A:B)_\rho\leq \liminf_{\alpha\rightarrow 0^+}I_\alpha^{\downarrow\downarrow}(A:B)_\rho
\leq \liminf_{\alpha\rightarrow 0^+} \frac{1}{n}D_\alpha (\rho_{AB}^{\otimes n}\| \omega_{A^n}^n\otimes \omega_{B^n}^n)
= \frac{1}{n}D_0(\rho_{AB}^{\otimes n}\| \omega_{A^n}^n\otimes \omega_{B^n}^n),
\end{equation}
where we have used~\eqref{eq:rc10} and the continuity in $\alpha$ of the Petz divergence, see Remark~\ref{rem:petz-divergence}. 
Therefore,~\eqref{eq:reg-liminf} also holds for $\alpha=0$.
\end{proof}
\begin{proof}[Proof of (m)]
$I_1^{\downarrow\downarrow}(A:B)_\rho=I(A:B)_\rho$ follows from~\eqref{eq:i1}. 
Furthermore, we have 
\begin{align}
I_0^{\downarrow\downarrow}(A:B)_\rho
&=\inf_{\sigma_A\in \mathcal{S}(A)} I_0^{\downarrow}(\rho_{AB}\| \sigma_A)
=\inf_{\sigma_A\in \mathcal{S}(A)}\inf_{\substack{\ket{\tau}_B\in \supp(\rho_B):\\ \brak{\tau}_B=1 }} D_0(\rho_{AB}\| \sigma_A\otimes\proj{\tau}_B)
\label{eq:proof-prmi2-0}\\
&=\inf_{\substack{\ket{\tau}_B\in \supp(\rho_B):\\ \brak{\tau}_B=1 }}I_0^\downarrow (\rho_{AB}\| \proj{\tau}_B)
\\
&=\inf_{\substack{\ket{\tau}_B\in \supp(\rho_B):\\ \brak{\tau}_B=1 }}\inf_{\substack{\ket{\sigma}_A\in \supp(\rho_A):\\ \brak{\sigma}_A=1 }} D_0(\rho_{AB}\| \proj{\sigma}_A\otimes\proj{\tau}_B).
\label{eq:proof-prmi2-00}
\end{align}
Above, we have used Proposition~\ref{prop:gen-prmi}~(i) twice: for the second equality in~\eqref{eq:proof-prmi2-0}, and for~\eqref{eq:proof-prmi2-00}.
\end{proof}
\begin{proof}[Proof of (o)]
The continuity of $I_\alpha^{\downarrow\downarrow}(A:B)_\rho$ on $\alpha\in [0,1)$ and on $\alpha\in [1,\infty)$ follows from the continuity in $\alpha$ of the Petz divergence. 
It remains to prove left-continuity at $\alpha=1$. 
By \eqref{eq:reg-geq}, we have for any $n\in \mathbb{N}_{>0}$
\begin{align}\label{eq:proof-prmi2-i1}
\frac{1}{n}D_1(\rho_{AB}^{\otimes n}\| \omega_{A^n}^n\otimes \omega_{B^n}^n)
-\frac{\log g_{n,d_A}}{n}-\frac{\log g_{n,d_B}}{n}
\leq \lim_{\alpha\rightarrow 1^-}I_\alpha^{\downarrow\downarrow}(A:B)_\rho
\leq I_1^{\downarrow\downarrow}(A:B)_\rho,
\end{align}
where the last inequality follows from the monotonicity in $\alpha$~(n).
By Remark~\ref{rem:universal-state}~(b), the second term and the third term on the left-hand side of~\eqref{eq:proof-prmi2-i1} vanish in the limit  $n\rightarrow\infty$. Therefore,
\begin{align}\label{eq:prmi2-proof-left-c}
I_1^{\downarrow\downarrow}(A:B)_\rho
=\lim_{n\rightarrow\infty}\frac{1}{n}D_1(\rho_{AB}^{\otimes n}\| \omega_{A^n}^n\otimes \omega_{B^n}^n)
\leq \lim_{\alpha\rightarrow 1^-}I_\alpha^{\downarrow\downarrow}(A:B)_\rho
\leq I_1^{\downarrow\downarrow}(A:B)_\rho,
\end{align}
where the first equality in~\eqref{eq:prmi2-proof-left-c} follows from~(l). 
Hence, $\lim_{\alpha\rightarrow 1^-}I_\alpha^{\downarrow\downarrow}(A:B)_\rho=I_1^{\downarrow\downarrow}(A:B)_\rho$.
\end{proof}
\begin{proof}[Proof of (q)]
Convexity is inherited from the Petz divergence because, according to the first equality in~\eqref{eq:prmi2-omega1} in~(l), 
$(\alpha-1)I_\alpha^{\downarrow\downarrow}(A:B)_\rho$ is the pointwise limit of a sequence of functions 
that are convex in $\alpha$.
\end{proof}
\begin{proof}[Proof of (p)] 
Let us define the following two functions.
\begin{align}
f:\quad (1/2,2)\rightarrow \mathbb{R},\quad \alpha
&\mapsto I_\alpha^{\downarrow\downarrow}(A:B)_\rho
\\
g:\quad (1/2,2)\rightarrow \mathbb{R},\quad \alpha
&\mapsto (\alpha -1)I_\alpha^{\downarrow\downarrow}(A:B)_\rho
\end{align}
By~(o), $f$ is continuous. 
By~(o) and~(q), $g$ is convex and continuous. 
Due to the convexity of $g$, the left and right derivative of $g$ exist at all points within its domain. 
Since $f(\alpha)=\frac{1}{\alpha-1}g(\alpha)$, we have for any $\alpha\in (\frac{1}{2},1)\cup (1,2)$
\begin{subequations}\label{eq:f-left-right}
\begin{align}
\frac{\partial}{\partial \alpha^-}f(\alpha)
&=-\frac{1}{(\alpha-1)^2}g(\alpha)+\frac{1}{\alpha -1}\frac{\partial}{\partial \alpha^-}g(\alpha)
=-\frac{1}{\alpha-1}f(\alpha)+\frac{1}{\alpha -1}\frac{\partial}{\partial \alpha^-}g(\alpha),
\\
\frac{\partial}{\partial \alpha^+}f(\alpha)
&= -\frac{1}{(\alpha-1)^2}g(\alpha)+\frac{1}{\alpha -1}\frac{\partial}{\partial \alpha^+}g(\alpha)
=-\frac{1}{\alpha-1}f(\alpha)+\frac{1}{\alpha -1}\frac{\partial}{\partial \alpha^+}g(\alpha).
\end{align}
\end{subequations}

For any $\alpha\in (\frac{1}{2},1)\cup (1,2)$ and 
any fixed $(\sigma_A,\tau_B)\in \argmin_{(\sigma_A',\tau_B')\in \mathcal{S}(A)\times \mathcal{S}(B)}D_\alpha (\rho_{AB}\| \sigma_A'\otimes \tau_B')$
\begin{subequations}\label{eq:prmi2-diff0}
\begin{align}
\frac{\partial}{\partial \alpha^+}f(\alpha)
&=\lim_{\varepsilon\rightarrow 0^+} \frac{1}{\varepsilon} (I_{\alpha+\varepsilon}^{\downarrow\downarrow}(A:B)_\rho -I_{\alpha}^{\downarrow\downarrow}(A:B)_\rho )
\\
&\leq \lim_{\varepsilon\rightarrow 0^+} \frac{1}{\varepsilon} (D_{\alpha+\varepsilon}(\rho_{AB}\| \sigma_A\otimes \tau_B) -D_{\alpha}(\rho_{AB}\| \sigma_A\otimes \tau_B) )
\\
&=\frac{\partial}{\partial \alpha}D_{\alpha}(\rho_{AB}\| \sigma_A\otimes \tau_B)
\label{eq:prmi2-diff1}\\
&=\lim_{\varepsilon\rightarrow 0^-} \frac{1}{\varepsilon} (D_{\alpha+\varepsilon}(\rho_{AB}\| \sigma_A\otimes \tau_B) -D_{\alpha}(\rho_{AB}\| \sigma_A\otimes \tau_B) )
\label{eq:prmi2-diff2}\\
&\leq \lim_{\varepsilon\rightarrow 0^-} \frac{1}{\varepsilon} (I_{\alpha+\varepsilon}^{\downarrow\downarrow}(A:B)_\rho -I_{\alpha}^{\downarrow\downarrow}(A:B)_\rho )
=\frac{\partial}{\partial \alpha^-}f(\alpha).
\end{align}
\end{subequations}
\eqref{eq:prmi2-diff1} and~\eqref{eq:prmi2-diff2} follow from the differentiability in $\alpha$ of the Petz divergence, see Remark~\ref{rem:petz-divergence}. 

We will now prove the continuous differentiability of $f$ on $\alpha\in (1,2)$. 
Since $g$ is convex, 
$\frac{\partial}{\partial \alpha^-}g(\alpha)\leq \frac{\partial}{\partial \alpha^+}g(\alpha)$ for all $\alpha\in (1,2)$. 
Hence, $\frac{\partial}{\partial \alpha^-}f(\alpha)\leq \frac{\partial}{\partial \alpha^+}f(\alpha)$ for all $\alpha\in (1,2)$ due to~\eqref{eq:f-left-right}. 
By~\eqref{eq:prmi2-diff0}, it follows that the left and right derivative of $f$ coincide, so $f$ is differentiable on $\alpha\in (1,2)$ and~\eqref{eq:prmi2-diff} holds. 
Since $f$ is differentiable and $g(\alpha)=(\alpha-1)f(\alpha)$ for all $\alpha\in (1,2)$, also $g$ is differentiable on $\alpha\in (1,2)$. 
Since $g$ is convex, its differentiability on $\alpha\in (1,2)$ implies its \emph{continuous} differentiability on $\alpha\in (1,2)$. 
By the product rule, this implies that also $f$ is \emph{continuously} differentiable on $\alpha\in (1,2)$. 

We will now prove the continuous differentiability of $f$ on $\alpha\in (\frac{1}{2},1)$. 
For any $\alpha\in (\frac{1}{2},1)$, let 
\begin{equation}
(\sigma_A^{(\alpha)},\tau_B^{(\alpha)})\in \argmin_{(\sigma_A,\tau_B)\in \mathcal{S}(A)\times \mathcal{S}(B)}D_\alpha (\rho_{AB}\|\sigma_A\otimes \tau_B)
\end{equation}
denote the unique minimizer, see~(j). 
Let us define the function 
\begin{align}
(1/2,1)\rightarrow \mathbb{R}, \quad 
\alpha\mapsto 
h(\alpha)\coloneqq 
f(\alpha)+(\alpha -1) 
\frac{\partial}{\partial \alpha}D_\alpha (\rho_{AB}\| \sigma_A^{(\alpha)}\otimes \tau_B^{(\alpha)}),
\end{align}
where $\sigma_A^{(\alpha)}$ and $\tau_B^{(\alpha)}$ are held fixed.  
The map $\alpha\mapsto (\sigma_A^{(\alpha)},\tau_B^{(\alpha)})$ is continuous on $\alpha\in (\frac{1}{2},1)$
due to the uniqueness of $(\sigma_A^{(\alpha)},\tau_B^{(\alpha)})$. 
By the continuous differentiability of the Petz divergence, see Remark~\ref{rem:petz-divergence}, it follows that $h$ is continuous. 
For any $\alpha\in (\frac{1}{2},1)$,
\begin{align}
\frac{\partial}{\partial \alpha^+}f(\alpha)
&\leq \frac{\partial}{\partial \alpha}D_\alpha (\rho_{AB}\| \sigma_A^{(\alpha)}\otimes \tau_B^{(\alpha)})
\leq \frac{\partial}{\partial \alpha^-}f(\alpha),
\label{eq:diff-h0}\\ 
\frac{\partial}{\partial \alpha^-}g(\alpha)
&\leq h(\alpha)
\leq \frac{\partial}{\partial \alpha^+}g(\alpha).
\label{eq:diff-g0}
\end{align}
\eqref{eq:diff-h0} follows from~\eqref{eq:prmi2-diff0}, and it is understood that $\sigma_A^{(\alpha)}$ and $ \tau_B^{(\alpha)}$ are held fixed in~\eqref{eq:diff-h0}. 
\eqref{eq:diff-g0} follows from~\eqref{eq:f-left-right} and~\eqref{eq:diff-h0}. 
Therefore, for any $\alpha\in (\frac{1}{2},1)$, 
\begin{subequations}\label{eq:prmi2-g0}
\begin{align}
h(\alpha)
=\lim_{\varepsilon\rightarrow 0^+}h(\alpha-\varepsilon)
&\leq \lim_{\varepsilon\rightarrow 0^+}\frac{\partial}{\partial \alpha^+}g(\alpha-\varepsilon)
\label{eq:prmi2-g1}\\
&\leq \frac{\partial}{\partial \alpha^+}g(\alpha)
\leq \lim_{\varepsilon\rightarrow 0^+}\frac{\partial}{\partial \alpha^-}g(\alpha+\varepsilon)
\leq\lim_{\varepsilon\rightarrow 0^+}h(\alpha+\varepsilon)
=h(\alpha).
\label{eq:prmi2-g2}
\end{align}
\end{subequations}
The first two inequalities in~\eqref{eq:prmi2-g2} follow from the convexity of $g$. 
It follows that all inequalities in~\eqref{eq:prmi2-g0} must be saturated, so 
$\frac{\partial}{\partial \alpha^+}g(\alpha)=h(\alpha)$ for all $\alpha\in (\frac{1}{2},1)$. 
Since $h$ is continuous, also $\frac{\partial}{\partial \alpha^+}g(\alpha)$ is continuous on $\alpha\in (\frac{1}{2},1)$. 
Since $g$ is convex, the continuity of the right derivative of $g$ implies that $g$ is differentiable and $g'(\alpha)=\frac{\partial}{\partial \alpha^+}g(\alpha)=h(\alpha)$ for all $\alpha\in (\frac{1}{2},1)$. 
Since $h$ is continuous, this proves that 
$g$ is \emph{continuously} differentiable on $\alpha\in (\frac{1}{2},1)$. 
By the product rule, this implies that also 
$f$ is continuously differentiable on $\alpha\in (\frac{1}{2},1)$. 

Next, we will prove the continuous differentiability of $f$ at $\alpha=1$. 
The combination of~\eqref{eq:i1-argmin} and a quantum Sibson identity~\cite[Eq.~(B10)]{hayashi2016correlation} 
implies that 
$\lim_{\alpha\rightarrow 1^-}(\sigma_A^{(\alpha)},\tau_B^{(\alpha)})=(\rho_A,\rho_B)$
and that the limits
\begin{subequations}\label{eq:prmi2-diff-a1}
\begin{align}
\lim_{\beta\rightarrow 1^-}\frac{\mathrm{d}}{\mathrm{d} \alpha}I_\alpha^{\downarrow\downarrow}(A:B)_\rho\big|_{\alpha=\beta}
&=\frac{\mathrm{d}}{\mathrm{d} \alpha}D_\alpha (\rho_{AB}\| \rho_A\otimes \rho_B)\big|_{\alpha=1}, 
\\
\lim_{\beta\rightarrow 1^+}\frac{\mathrm{d}}{\mathrm{d} \alpha}I_\alpha^{\downarrow\downarrow}(A:B)_\rho\big|_{\alpha=\beta}
&=\frac{\mathrm{d}}{\mathrm{d} \alpha}D_\alpha (\rho_{AB}\| \rho_A\otimes \rho_B)\big|_{\alpha=1}
\end{align}
\end{subequations}
exist. 
Therefore, they are identical to the left and right derivative of $I_\alpha^{\downarrow\downarrow}(A:B)_\rho$ at $\alpha=1$, respectively. 
By~\eqref{eq:prmi2-diff-a1}, the left and right derivative of $I_\alpha^{\downarrow\downarrow}(A:B)_\rho$ at $\alpha=1$ coincide. 
This proves differentiability at $\alpha=1$. 
The \emph{continuous} differentiability of $I_\alpha^{\downarrow\downarrow}(A:B)_\rho$ at $\alpha=1$ follows from the quantum Sibson identity~\cite[Eq.~(B10)]{hayashi2016correlation}. 

It remains to prove that $I_\alpha^{\downarrow\downarrow}(A:B)_\rho$ is right-differentiable at $\alpha=1/2$.  
Let $(\sigma_A^{(1/2)},\tau_B^{(1/2)})\coloneqq \lim_{\alpha\rightarrow 1/2^+}(\sigma_A^{(\alpha)},\tau_B^{(\alpha)})$. 
Then, $\rho_{AB}\not\perp\sigma_A^{(1/2)}\otimes\tau_B^{(1/2)}$, and the limit
\begin{align}\label{eq:diff-at-12}
\lim_{\beta\rightarrow 1/2^+}\frac{\mathrm{d}}{\mathrm{d} \alpha}I_\alpha^{\downarrow\downarrow}(A:B)_\rho\big|_{\alpha=\beta}
&=\frac{\mathrm{d}}{\mathrm{d} \alpha}D_\alpha (\rho_{AB}\| \sigma_A^{(1/2)}\otimes \tau_B^{(1/2)})\big|_{\alpha=1/2}
\end{align}
exists. 
Therefore, this limit is identical to the right derivative of $I_\alpha^{\downarrow\downarrow}(A:B)_\rho$ at $\alpha=1/2$. 
Note that~\eqref{eq:diff-at-12} lies in $[0,\infty)$ due to the differentiability and monotonicity in $\alpha$ of the Petz divergence, see Remark~\ref{rem:petz-divergence}.
\end{proof}
\begin{proof}[Proof of (s)]
Let $\alpha\in [0,\infty)$.

Case 1: $\alpha\in (0,\frac{1}{2}]$. Let $\beta\coloneqq \frac{1}{\alpha}\in [2,\infty)$. 
Let $\ket{\hat{\sigma}}_A\in A$ be a unit eigenvector of $\rho_A$ corresponding to its largest eigenvalue.
By duality~(e),
\begin{align}
\exp((1-\beta )I_\alpha^{\downarrow\downarrow}(A:B)_\rho )
&= \sup_{\sigma_A\in \mathcal{S}(A)} \widetilde{Q}_\beta(\rho_A\| \sigma_A^{-1})
\leq \sup_{\sigma_A\in \mathcal{S}(A)} Q_\beta(\rho_A\| \sigma_A^{-1})
\\
&=\sup_{\sigma_A\in \mathcal{S}(A)} \tr[\rho_A^\beta \sigma_A^{\beta -1}]
=\lVert \rho_A\rVert_\infty^\beta
\\
&= \widetilde{Q}_\beta(\rho_A\| \proj{\hat{\sigma}}_A^{-1})
\leq \sup_{\sigma_A\in \mathcal{S}(A)} \widetilde{Q}_\beta(\rho_A\| \sigma_A^{-1}),
\end{align}
so all inequalities must be saturated. 
Therefore, 
$I_\alpha^{\downarrow\downarrow}(A:B)_\rho=\frac{\beta}{\beta -1} H_{\infty}(A)_\rho=\frac{1}{1-\alpha}H_\infty(A)_\rho$.

Case 2: $\alpha\in (\frac{1}{2},1)$. Let $\beta\coloneqq \frac{1}{\alpha}\in (1,2)$. 
Let $\hat{\sigma}_A\coloneqq \rho_A^{\frac{1}{2\alpha -1}} / \tr[ \rho_A^{\frac{1}{2\alpha -1}}]
=\rho_A^{\frac{\beta}{2-\beta}}/\tr[\rho_A^{\frac{\beta}{2-\beta}}]$.
By duality~(e),
\begin{align}
\exp((1-\beta )I_\alpha^{\downarrow\downarrow}(A:B)_\rho )
&= \sup_{\sigma_A\in \mathcal{S}(A)} \widetilde{Q}_\beta(\rho_A\| \sigma_A^{-1})
\leq \sup_{\sigma_A\in \mathcal{S}(A)} Q_\beta(\rho_A\| \sigma_A^{-1})
\\
&= \sup_{\sigma_A\in \mathcal{S}(A)} \tr[\rho_A^\beta\sigma_A^{\beta-1}]
=\lVert \rho_A^\beta \rVert_{\frac{1}{2-\beta}}
=\tr[\rho_A^{\frac{\beta}{2-\beta}}]^{2-\beta}
\label{eq:prmi2-ac-2}\\
&=\widetilde{Q}_\beta(\rho_A\| \hat{\sigma}_A^{-1}) 
\leq \sup_{\sigma_A\in \mathcal{S}(A)} \widetilde{Q}_\beta(\rho_A\| \sigma_A^{-1}),
\end{align}
so all inequalities must be saturated. 
In~\eqref{eq:prmi2-ac-2}, we have used the variational characterization of the Schatten norms~\cite[Lemma~3.2]{tomamichel2016quantum}.
Therefore, 
$I_\alpha^{\downarrow\downarrow}(A:B)_\rho=2H_{\frac{\beta}{2-\beta}}(A)_\rho=2H_{\frac{1}{2\alpha -1}}(A)_\rho$.

Case 3: $\alpha\in (1,\infty)$. Let $\beta\coloneqq \frac{1}{\alpha}\in (0,1)$. 
Let $\hat{\sigma}_A\coloneqq \rho_A^{\frac{1}{2\alpha -1}} / \tr[ \rho_A^{\frac{1}{2\alpha -1}}]
=\rho_A^{\frac{\beta}{2-\beta}}/\tr[\rho_A^{\frac{\beta}{2-\beta}}]$.
By duality~(e),
\begin{align}
\exp((1-\beta )I_\alpha^{\downarrow\downarrow}(A:B)_\rho )
&= \inf_{\substack{\sigma_A\in \mathcal{S}(A):\\ \rho_A\ll \sigma_A}} \widetilde{Q}_\beta(\rho_A\| \sigma_A^{-1})
\geq \inf_{\substack{\sigma_A\in \mathcal{S}(A):\\ \rho_A\ll \sigma_A}}Q_\beta(\rho_A\| \sigma_A^{-1})
\\
&= \inf_{\substack{\sigma_A\in \mathcal{S}(A):\\ \rho_A\ll \sigma_A}}\tr[\rho_A^\beta\sigma_A^{\beta-1}]
=\lVert \rho_A^\beta \rVert_{\frac{1}{2-\beta}}
=\tr[\rho_A^{\frac{\beta}{2-\beta}}]^{2-\beta}
\label{eq:prmi2-ac-3}\\
&=\widetilde{Q}_\beta(\rho_A\| \hat{\sigma}_A^{-1}) 
\geq \inf_{\substack{\sigma_A\in \mathcal{S}(A):\\ \rho_A\ll \sigma_A}} \widetilde{Q}_\beta(\rho_A\| \sigma_A^{-1}),
\end{align}
so all inequalities must be saturated. 
In~\eqref{eq:prmi2-ac-3}, we have used the variational characterization of the Schatten quasi-norms~\cite[Lemma~3.2]{tomamichel2016quantum}.
Therefore, 
$I_\alpha^{\downarrow\downarrow}(A:B)_\rho=2H_{\frac{\beta}{2-\beta}}(A)_\rho=2H_{\frac{1}{2\alpha -1}}(A)_\rho$.

Case 4: $\alpha\in \{0,1\}$. Then the assertion follows from the previous cases due to the continuity in $\alpha$~(o).
\end{proof}
\begin{proof}[Proof of (t)]
The assertion regarding $\sigma_A$ and $\tau_B$ can be verified by inserting $\sigma_A$ and $\tau_B$  into~\eqref{eq:prmi2-pure}. 
The rest of the assertion follows from~(s).
\end{proof}
\begin{proof}[Proof of (g)]
Let $\ket{\rho}_{ABC}\in ABC$ be such that $\tr_C[\proj{\rho}_{ABC}]=\rho_{AB}$.

Let $\alpha\in [0,\infty)$. Then $\frac{1}{\alpha}\in (0,\infty]$. 
By duality of the minimized generalized PRMI, see Proposition~\ref{prop:gen-prmi}~(d), 
\begin{subequations}\label{eq:prop-bounds-2ra}
\begin{align}
I^{\downarrow\downarrow}_\alpha (A:B)_\rho
\leq I_\alpha^{\downarrow} (\rho_{AB}\| \rho_A^0/r_A )
&=-\widetilde{D}_{\frac{1}{\alpha}}(\rho_{AC}\| (\rho_A^0/r_A)^{-1}\otimes \rho_C)
\\
&=2\log r_A - \widetilde{D}_{\frac{1}{\alpha}}(\rho_{AC}\|\rho_A^0/r_A \otimes \rho_C)
\leq 2\log r_A.
\end{align}
\end{subequations}
The last inequality follows from the non-negativity of the sandwiched divergence.

Let now $\alpha\in [\frac{1}{2},2]$. Let $\gamma\coloneqq \frac{1}{2\alpha-1}\in [\frac{1}{3},\infty]$. 

First, suppose $\spec(\rho_A)\subseteq \{0,1/r_A\}$ and $H(A|B)_\rho=-\log r_A$. 
Then $\rho_A=\rho_A^0/r_A$ and $\rho_{AC}=\rho_A\otimes \rho_C$.
By (s), $I_\alpha^{\downarrow\downarrow}(A:B)_\rho 
= 2H_{\gamma}(A)_\rho 
= 2\log r_A$.

Now, suppose $I_\alpha^{\downarrow\downarrow}(A:B)_\rho=2\log r_A$ instead. 
Then the inequalities in~\eqref{eq:prop-bounds-2ra} must be saturated, so 
$\widetilde{D}_{\frac{1}{\alpha}}(\rho_{AC}\|\rho_A^0/r_A\otimes \rho_C)=0$.
Since $\frac{1}{\alpha}\geq \frac{1}{2}$, it follows from the positive definiteness of the sandwiched divergence that $\rho_{AC}=\rho_A^0/r_A\otimes \rho_C$.
Hence, $\rho_A=\rho_A^0/r_A$, which implies that $\spec(\rho_A)\subseteq\{0,1/r_A\}$.
Therefore, $H(A|B)_\rho=-H(A|C)_\rho=-H(A)_\rho=-\log r_A$.

Let now $\alpha\in [0,\frac{1}{2})$.
By~(b) and the expression for pure states in~(t), we have 
\begin{align}
I_\alpha^{\downarrow\downarrow}(A:B)_\rho 
&\leq I_\alpha^{\downarrow\downarrow}(A:BC)_{\proj{\rho}}
=\frac{1}{1-\alpha}H_\infty(A)_\rho 
\leq \frac{1}{1-\alpha}\log r_A <2\log r_A.
\end{align}
\end{proof}
\begin{proof}[Proof of (u)]
Let $\alpha\in [0,\infty)$.
\begin{align}
I_\alpha^{\downarrow\downarrow}(A:B)_\rho
&=\inf_{\sigma_A\in \mathcal{S}(A)}
I_\alpha^{\downarrow}(\rho_{AB}\| \sigma_A)
\\
&=\inf_{\sigma_A\in \mathcal{S}(A)}
\inf_{\substack{\tau_B\in \mathcal{S}(B):\\ \exists (t_y)_{y\in \mathcal{Y}}\in [0,1]^{\times |\mathcal{Y}|}:\\ \tau_B=\sum\limits_{y\in \mathcal{Y}}t_y \proj{b_y}_B }} 
D_\alpha(\rho_{AB}\| \sigma_A\otimes \tau_B)
\label{eq:doubly-cc1}\\
&=\inf_{\substack{\tau_B\in \mathcal{S}(B):\\ \exists (t_y)_{y\in \mathcal{Y}}\in [0,1]^{\times |\mathcal{Y}|}:\\ \tau_B=\sum\limits_{y\in \mathcal{Y}}t_y \proj{b_y}_B }} 
I_\alpha^{\downarrow}(\rho_{AB}\| \tau_B)
\\
&=\inf_{\substack{\tau_B\in \mathcal{S}(B):\\ \exists (t_y)_{y\in \mathcal{Y}}\in [0,1]^{\times |\mathcal{Y}|}:\\ \tau_B=\sum\limits_{y\in \mathcal{Y}}t_y \proj{b_y}_B }} 
\inf_{\substack{\sigma_A\in \mathcal{S}(A):\\ \exists (s_x)_{x\in \mathcal{X}}\in [0,1]^{\times |\mathcal{X}|}:\\ \sigma_A=\sum\limits_{x\in \mathcal{X}}s_x \proj{a_x}_A }} 
D_\alpha(\rho_{AB}\| \sigma_A\otimes \tau_B)
=I_\alpha^{\downarrow\downarrow}(X:Y)_P
\label{eq:doubly-cc2}
\end{align}
Above, we have used Proposition~\ref{prop:gen-prmi}~(q) twice: for~\eqref{eq:doubly-cc1}, and for the first equality in~\eqref{eq:doubly-cc2}.
\end{proof}
\begin{proof}[Proof of (v)]
Let $\alpha\in [0,\infty)$. 
By (u), $I_\alpha^{\downarrow\downarrow}(A:B)_\rho=I_\alpha^{\downarrow\downarrow}(X:Y)_P$. 
As shown in~\cite[Lemma~11]{lapidoth2019two},
$I_\alpha^{\downarrow\downarrow}(X:Y)_P=\frac{\alpha}{1-\alpha}H_{\infty}(A)_\rho$ if $\alpha\in [0,\frac{1}{2}]$, and 
$I_\alpha^{\downarrow\downarrow}(X:Y)_P=H_{\frac{\alpha}{2\alpha -1}}(A)_\rho $ if $\alpha\in (\frac{1}{2},\infty)$.
The assertion regarding $\sigma_A$ and $\tau_B$ can be verified by inserting $\sigma_A$ and $\tau_B$ into~\eqref{eq:prmi2-copycc}. 
\end{proof}

\section{Proof of Theorem~\ref{thm:srmi2}}\label{proof:srmi2}

For the proof of additivity, Theorem~\ref{thm:srmi2}~(d), we make use of a lemma that follows from the equivalence of optimizers and fixed-points established above in Lemma~\ref{lem:fixed_point}. 
We first state and prove this lemma, after which we present the proof of Theorem~\ref{thm:srmi2}.

\subsection{Lemma for Theorem~\ref{thm:srmi2}~(d)}\label{app:lemma-fixed}
\begin{lem}[Multiplicativity from fixed-point property]\label{lem:fixedpoint}
Let $\rho_{AC}\in \mathcal{S}(AC),\rho'_{DF}\in \mathcal{S}(DF),\mu_C\in \mathcal{S}(C),\mu'_F\in \mathcal{S}(F)$. Then all of the following hold.
\begin{enumerate}[label=(\alph*)]
\item For any $\beta\in [\frac{1}{2},1)$
\begin{align}
&\inf\limits_{\sigma_{AD}\in \mathcal{S}_{\sim \rho_{A}\otimes \rho'_{D}}(AD)}
\widetilde{Q}_\beta(\rho_{AC}\otimes \rho'_{DF}\| \sigma_{AD}^{-1}\otimes \mu_{C}\otimes \mu'_{F})
\\
&=\inf\limits_{\sigma_{A}\in \mathcal{S}_{\sim\rho_A}(A)}
\widetilde{Q}_\beta(\rho_{AC}\| \sigma_{A}^{-1}\otimes \mu_{C})
\cdot\inf\limits_{\sigma'_{D}\in \mathcal{S}_{\sim \rho'_D}(D)}
\widetilde{Q}_\beta(\rho'_{DF}\| {\sigma'_{D}}^{-1}\otimes \mu'_{F}).
\end{align}
\item For any $\beta\in (1,2]$
\begin{align}
&\sup\limits_{\sigma_{AD}\in \mathcal{S}(AD)}
\widetilde{Q}_\beta(\rho_{AC}\otimes \rho'_{DF}\| \sigma_{AD}^{-1}\otimes \mu_{C}\otimes \mu'_{F})
\\
&=\sup\limits_{\sigma_{A}\in \mathcal{S}(A)}
\widetilde{Q}_\beta(\rho_{AC}\| \sigma_{A}^{-1}\otimes \mu_{C})
\cdot\sup\limits_{\sigma'_{D}\in \mathcal{S}(D)}
\widetilde{Q}_\beta(\rho'_{DF}\| {\sigma'_{D}}^{-1}\otimes \mu'_{F}).
\end{align}
\end{enumerate}
\end{lem}
\begin{proof}[Proof of (a)]
Let $\beta\in [\frac{1}{2},1)$.

Case 1: $\rho_{C}\perp\mu_C\lor \rho'_{F}\perp \mu_F'$. 
Then both sides of the equality are zero, so the assertion is trivially true.

Case 2: $\rho_{C}\not\perp\mu_C\land \rho'_{F}\not\perp \mu_F'$. 
Let $\gamma\coloneqq \frac{\beta -1}{\beta}\in [-1,0)$.

Let $X_{AC}\coloneqq \mu_C^{\frac{1-\beta}{2\beta}} \rho_{AC}^{\frac{1}{2}}$ and $X_A\coloneqq \tr_C[X_{AC}]$.
Furthermore, let $\hat{\sigma}_A\in \mathcal{S}_{\sim X_A}(A)$ be such that
\begin{equation}
\hat{\sigma}_A
=\frac{\tr_C[({\hat{\sigma}_A}^{\frac{\gamma}{2}}X_{AC}X_{AC}^\dagger {\hat{\sigma}_A}^{\frac{\gamma}{2}})^\beta ]}{\tr[({\hat{\sigma}_A}^{\frac{\gamma}{2}}X_{AC}X_{AC}^\dagger {\hat{\sigma}_A}^{\frac{\gamma}{2}})^\beta ]}.
\label{eq:proof-fixed-s1}
\end{equation}
Note that such a quantum state exists due to Lemma~\ref{lem:fixed_point}~(b).

Similarly, let $\tilde{X}_{DF}\coloneqq {\mu'_F}^{\frac{1-\beta}{2\beta}} {\rho'_{DF}}^{\frac{1}{2}}$ and 
$\tilde{X}_D\coloneqq \tr_F[\tilde{X}_{DF}]$, 
and let $\tilde{\sigma}_D\in \mathcal{S}_{\sim \tilde{X}_D}(D)$ be such that 
\begin{equation}
\tilde{\sigma}_D
=\frac{\tr_F[({\tilde{\sigma}_D}^{\frac{\gamma}{2}}{\tilde{X}_{DF}}\tilde{X}_{DF}^\dagger {\tilde{\sigma}_D}^{\frac{\gamma}{2}})^\beta ]}{\tr[({\tilde{\sigma}_D}^{\frac{\gamma}{2}}{\tilde{X}_{DF}}\tilde{X}_{DF}^\dagger {\tilde{\sigma}_D}^{\frac{\gamma}{2}})^\beta ]}.
\label{eq:proof-fixed-s2}
\end{equation}
\eqref{eq:proof-fixed-s1} and~\eqref{eq:proof-fixed-s2} imply that 
$\hat{\sigma}_A\otimes \tilde{\sigma}_D\in \mathcal{S}_{\sim X_A\otimes \tilde{X}_D}(AD)$ and
\begin{align} \label{eq:proof-fixed-s3}
\frac{\tr_{CF}[((\hat{\sigma}_A\otimes \tilde{\sigma}_D)^{\frac{\gamma}{2}}{X_{AC}\otimes \tilde{X}_{DF}} X_{AC}^\dagger\otimes \tilde{X}_{DF}^\dagger (\hat{\sigma}_A\otimes \tilde{\sigma}_D)^{\frac{\gamma}{2}})^\beta ]}{\tr[((\hat{\sigma}_A\otimes \tilde{\sigma}_D)^{\frac{\gamma}{2}}{X_{AC}\otimes \tilde{X}_{DF}} X_{AC}^\dagger\otimes \tilde{X}_{DF}^\dagger (\hat{\sigma}_A\otimes \tilde{\sigma}_D)^{\frac{\gamma}{2}})^\beta ]}
&=\hat{\sigma}_A\otimes \tilde{\sigma}_D.
\end{align}

We have 
\begin{align}
&\inf\limits_{\sigma_{AD}\in \mathcal{S}_{\sim \rho_A\otimes \rho_D'}(AD)}
\widetilde{Q}_\beta(\rho_{AC}\otimes \rho'_{DF}\| \sigma_{AD}^{-1}\otimes \mu_{C}\otimes \mu'_{F})
\\
&=\inf\limits_{\sigma_{AD}\in \mathcal{S}_{\sim X_A\otimes \tilde{X}_D}(AD)}
\tr[( X_{AC}^\dagger\otimes \tilde{X}_{DF}^\dagger \sigma_{AD}^\gamma X_{AC}\otimes \tilde{X}_{DF} )^\beta]
\\
&=\tr[( X_{AC}^\dagger\otimes \tilde{X}_{DF}^\dagger (\hat{\sigma}_A\otimes \tilde{\sigma}_D)^\gamma X_{AC}\otimes \tilde{X}_{DF} )^\beta]
\label{eq:proof-fixed-1}\\
&=\tr[(X_{AC}^\dagger \hat{\sigma}_A^\gamma X_{AC})^\beta]
\cdot 
\tr[(\tilde{X}_{DF}^\dagger \tilde{\sigma}_D^\gamma \tilde{X}_{DF})^\beta]
\\
&=\inf\limits_{\sigma_{A}\in \mathcal{S}_{\sim X_A}(A)}
\tr[(X_{AC}^\dagger \sigma_A^\gamma X_{AC})^\beta]
\cdot \inf\limits_{\sigma'_{D}\in \mathcal{S}_{\sim \tilde{X}_{D}}(D) }
\tr[(\tilde{X}_{DF}^\dagger {\sigma'_D}^\gamma \tilde{X}_{DF})^\beta]
\label{eq:proof-fixed-2}
\\
&=\inf\limits_{\sigma_{A}\in \mathcal{S}_{\sim \rho_A}(A)}
\widetilde{Q}_\beta(\rho_{AC}\| \sigma_{A}^{-1}\otimes \mu_{C})
\cdot \inf\limits_{\sigma'_{D}\in \mathcal{S}_{\sim \rho'_D}(D)}
\widetilde{Q}_\beta(\rho'_{DF}\| {\sigma'_{D}}^{-1}\otimes \mu'_{F}).
\end{align}
\eqref{eq:proof-fixed-1} holds because~\eqref{eq:proof-fixed-s3} allows us to employ Lemma~\ref{lem:fixed_point}~(b).
\eqref{eq:proof-fixed-2} holds because~\eqref{eq:proof-fixed-s1} and~\eqref{eq:proof-fixed-s2} allow us to employ Lemma~\ref{lem:fixed_point}~(b).
\end{proof}
\begin{proof}[Proof of (b)]
Let $\beta\in (1,2]$. 
Let $\gamma\coloneqq \frac{\beta -1}{\beta}\in (0,\frac{1}{2}]\subseteq (0,1)$. 
Then $0<\beta\leq \frac{\beta}{\beta-1}=\frac{1}{\gamma}$. 
One can then prove the assertion in a way analogous to~(a) 
by replacing the infima by suprema and 
employing Lemma~\ref{lem:fixed_point}~(a) instead of Lemma~\ref{lem:fixed_point}~(b).
\end{proof}

\subsection{Proof of Theorem~\ref{thm:srmi2}}\label{app:srmi2}

\begin{proof}[Proof of (a)]
This assertion follows from the symmetry of the definition of the doubly minimized SRMI in~\eqref{eq:srmi2} with respect to $A$ and $B$.
\end{proof}
\begin{proof}[Proof of (b), (f), (i), (k), (l), (p)] 
Since $\widetilde{I}_\alpha^{\downarrow\downarrow}(A:B)_\rho=\inf_{\sigma_A\in \mathcal{S}(A)}\widetilde{I}_\alpha^{\downarrow}(\rho_{AB}\| \sigma_A)$, these properties follow from the corresponding properties of the minimized generalized SRMI, see Proposition~\ref{prop:gen-srmi}.
\end{proof}
\begin{proof}[Proof of (c)]
\begin{align}
\widetilde{I}_\alpha^{\downarrow\downarrow}(A':B')_{V\otimes W\rho_{AB}V^\dagger\otimes W^\dagger}
&=\inf_{\sigma_{A'}\in \mathcal{S}(A')}\widetilde{I}_\alpha^\downarrow (V\otimes W\rho_{AB}V^\dagger\otimes W^\dagger\| \sigma_{A'})
\\
&=\inf_{\sigma_{A'}\in \mathcal{S}(A')}\widetilde{I}_\alpha^\downarrow (V\rho_{AB}V^\dagger \| \sigma_{A'})
=\widetilde{I}_\alpha^{\downarrow\downarrow}(A':B)_{V\rho_{AB}V^\dagger}
\label{eq:iso-dd1s}\\
&=\inf_{\tau_{B}\in \mathcal{S}(B)}\widetilde{I}_\alpha^\downarrow (V\rho_{AB}V^\dagger \| \tau_{B})
\\
&=\inf_{\tau_{B}\in \mathcal{S}(B)}\widetilde{I}_\alpha^\downarrow (\rho_{AB}\| \tau_{B})
=\widetilde{I}_\alpha^{\downarrow\downarrow}(A:B)_{\rho_{AB}}
\label{eq:iso-dd2s}
\end{align}
Above, we have used the invariance of the minimized generalized SRMI under local isometries, see Proposition~\ref{prop:gen-srmi}~(b), twice:
for the first equality in~\eqref{eq:iso-dd1s},
and for the first equality in~\eqref{eq:iso-dd2s}.
\end{proof}
\begin{proof}[Proof of (e)]
The assertions in~\eqref{eq:i-duality-1} and~\eqref{eq:i-duality1} follow from~(i) and the duality of the minimized generalized SRMI, see Proposition~\ref{prop:gen-srmi}~(d). 
It remains to prove the other two equalities. 

Let $\alpha\in [\frac{2}{3},1)\cup (1,\infty]$ and $\beta\coloneqq \frac{\alpha}{2\alpha -1}\in [\frac{1}{2},1)\cup (1,2]$.
Let us define the following function. 
\begin{align}
f:\mathcal{S}(A)\times \mathcal{S}(C)\rightarrow [0,\infty), \quad
(\sigma_A,\mu_C)\mapsto 
&\, \widetilde{Q}_\beta(\rho_{AC}\| \sigma_A^{-1}\otimes \mu_C)
\\
&=\tr[( (\rho_{AC}^{\frac{1}{2}}\mu_C^{\frac{1-\beta}{2\beta}}) \sigma_A^{\frac{\beta-1}{\beta}}  (\mu_C^{\frac{1-\beta}{2\beta}}\rho_{AC}^{\frac{1}{2}}) )^\beta]
\label{eq:sion-case1}\\
&=\tr[( (\rho_{AC}^{\frac{1}{2}}\sigma_A^{\frac{\beta -1}{2\beta}}) \mu_C^{\frac{1-\beta}{\beta}}  (\sigma_A^{\frac{\beta -1}{2\beta}}\rho_{AC}^{\frac{1}{2}}) )^\beta]
\label{eq:sion-case2}
\end{align} 

Case 1: $\alpha\in [\frac{2}{3},1)$. Then $\beta\in (1,2]$.
By Sion's minimax theorem~\cite{sion1958general}, 
\begin{equation}
\sup_{\sigma_A\in\mathcal{S}(A)} 
\inf_{\mu_C\in \mathcal{S}_{>0}(C)}
\widetilde{Q}_\beta (\rho_{AC}\| \sigma_A^{-1}\otimes\mu_C )
=\inf_{\mu_C\in \mathcal{S}_{>0}(C)} 
\sup_{\sigma_A\in\mathcal{S}(A)}
\widetilde{Q}_\beta (\rho_{AC}\| \sigma_A^{-1}\otimes\mu_C ).
\end{equation}
The conditions for applying Sion's minimax theorem are fulfilled: 
The set $\mathcal{S}(A)$ is compact and convex, and 
$\mathcal{S}_{>0}(C)$ is convex. 
For any fixed $\mu_C\in \mathcal{S}_{>0}(C)$, the function 
$\mathcal{S}(A)\rightarrow [0,\infty),\sigma_A\mapsto f(\sigma_A,\mu_C)$
is continuous~\cite{mueller2013quantum} and concave~\cite[Theorem~2.1(a)]{evert2022convexity} 
(see also~\cite{epstein1973remarks,carlen2008minkowski,hiai2013concavity})
since $\frac{\beta-1}{\beta}\in (0,\frac{1}{2}]\subseteq [0,1]$ and 
$0<\beta\leq\frac{\beta}{\beta -1}$, see~\eqref{eq:sion-case1}.
For any fixed $\sigma_A\in \mathcal{S}(A)$, the function 
$\mathcal{S}_{>0}(C)\rightarrow [0,\infty),\mu_C\mapsto f(\sigma_A,\mu_C)$ is continuous~\cite{mueller2013quantum}
and convex~\cite[Theorem~2.1(b)]{evert2022convexity} 
(see also~\cite{hiai2013concavity}) 
since $\frac{1-\beta}{\beta}\in [-\frac{1}{2},0)\subseteq [-1,0]$ and $\beta>0$, see~\eqref{eq:sion-case2}.
Therefore, Sion's minimax theorem can be applied. 
This proves~\eqref{eq:i-duality-2}.

Case 2: $\alpha\in (1,\infty]$. Then $\beta\in [\frac{1}{2},1)$.
By Sion's minimax theorem~\cite{sion1958general}, 
\begin{equation}
\inf_{\sigma_A\in \mathcal{S}_{\sim\rho_A}(A)}
\sup_{\mu_C\in \mathcal{S}(C)}
\widetilde{Q}_\beta (\rho_{AC}\| \sigma_A^{-1}\otimes\mu_C )
=\sup_{\mu_C\in \mathcal{S}(C)}
\inf_{\sigma_A\in \mathcal{S}_{\sim\rho_A}(A)}
\widetilde{Q}_\beta (\rho_{AC}\| \sigma_A^{-1}\otimes\mu_C ).
\end{equation}
The conditions for applying Sion's minimax theorem are fulfilled: 
The set $\mathcal{S}(C)$ is compact and convex, and 
$\mathcal{S}_{\sim\rho_A}(A)$ is convex. 
For any fixed $\mu_C\in \mathcal{S}(C)$, the function 
$\mathcal{S}_{\sim\rho_A}(A)\rightarrow [0,\infty),\sigma_A\mapsto f(\sigma_A,\mu_C)$
is continuous~\cite{mueller2013quantum} and convex~\cite[Theorem~2.1(b)]{evert2022convexity} since 
$\frac{\beta-1}{\beta}\in [-1,0)\subseteq [-1,0]$ and $\beta>0$, see~\eqref{eq:sion-case1}.
For any fixed $\sigma_A\in \mathcal{S}_{\sim\rho_A}(A)$, 
the function 
$\mathcal{S}(C)\rightarrow [0,\infty),\mu_C\mapsto f(\sigma_A,\mu_C)$ is continuous~\cite{mueller2013quantum}
and concave~\cite[Theorem~2.1(a)]{evert2022convexity} since 
$\frac{1-\beta}{\beta}\in (0,1]\subseteq [0,1]$ and $0<\beta\leq \frac{\beta}{1-\beta}$, see~\eqref{eq:sion-case2}. 
Therefore, Sion's minimax theorem can be applied.
This proves~\eqref{eq:i-duality2}.
\end{proof}
\begin{proof}[Proof of (d)]
By the definition of the doubly minimized SRMI in~\eqref{eq:srmi2}, it is evident that 
\begin{equation}
\widetilde{I}_\alpha^{\downarrow\downarrow}(AD:BE)_{\rho_{AB}\otimes \rho'_{DE}}
\leq \widetilde{I}_\alpha^{\downarrow\downarrow}(A:B)_{\rho_{AB}}+\widetilde{I}_\alpha^{\downarrow\downarrow}(D:E)_{\rho'_{DE}}
\end{equation}
for all $\alpha\in (0,\infty]$. 
It remains to prove that the opposite inequality holds for $\alpha\in [\frac{2}{3},\infty]$. 

Let $\alpha\in [\frac{2}{3},\infty]$.
Let $\ket{\rho}_{ABC}\in ABC$ be such that $\tr_C[\proj{\rho}_{ABC}]=\rho_{AB}$, and let 
$\ket{\rho'}_{DEF}\in DEF$ be such that $\tr_F[\proj{\rho'}_{DEF}]=\rho'_{DE}$.
Then $\tr_{CF}[\proj{\rho}_{ABC}\otimes \proj{\rho'}_{DEF}]=\rho_{AB}\otimes \rho'_{DE}$.

Case 1: $\alpha\in [\frac{2}{3},1)$. Let $\beta\coloneqq \frac{\alpha}{2\alpha -1}\in (1,2]$. 
Then
\begin{align}
&\widetilde{I}_\alpha^{\downarrow\downarrow}(AD:BE)_{\rho_{AB}\otimes \rho'_{DE}}
\\
&=-\frac{1}{\beta-1}\log 
\inf\limits_{\mu_{CF}\in \mathcal{S}_{>0}(CF)}
\sup\limits_{\sigma_{AD}\in \mathcal{S}(AD)}
\widetilde{Q}_\beta(\rho_{AC}\otimes \rho'_{DF}\| \sigma_{AD}^{-1}\otimes \mu_{CF})
\label{eq:add-0}
\\
&\geq -\frac{1}{\beta-1}\log 
\inf\limits_{\substack{\mu_{C}\in \mathcal{S}_{>0}(C), \\ \mu'_{F}\in \mathcal{S}_{>0}(F)}}
\sup\limits_{\sigma_{AD}\in \mathcal{S}(AD)}
\widetilde{Q}_\beta(\rho_{AC}\otimes \rho'_{DF}\| \sigma_{AD}^{-1}\otimes \mu_{C}\otimes \mu'_{F})
\label{eq:add-1}
\\
&=-\frac{1}{\beta-1}\log 
\inf\limits_{\substack{\mu_{C}\in \mathcal{S}_{>0}(C),\\ \mu'_{F}\in \mathcal{S}_{>0}(F)}}
\sup\limits_{\substack{\sigma_{A}\in \mathcal{S}(A),\\ \sigma'_{D}\in \mathcal{S}(D) }}
\widetilde{Q}_\beta(\rho_{AC}\| \sigma_{A}^{-1}\otimes \mu_{C})
\cdot
\widetilde{Q}_\beta(\rho'_{DF}\| {\sigma'_{D}}^{-1}\otimes \mu'_{F})
\label{eq:add-2}
\\
&=\widetilde{I}_\alpha^{\downarrow\downarrow}(A:B)_{\rho_{AB}}
+\widetilde{I}_\alpha^{\downarrow\downarrow}(D:E)_{\rho'_{DE}}.
\label{eq:add-4}
\end{align}
\eqref{eq:add-0} and~\eqref{eq:add-4} follow from~\eqref{eq:i-duality-2} in duality~(e). 
\eqref{eq:add-2} follows from Lemma~\ref{lem:fixedpoint}~(b).

Case 2: $\alpha\in (1,\infty]$. 
Let $\beta\coloneqq \frac{\alpha}{2\alpha -1}\in [\frac{1}{2},1)$. 
Then 
\begin{align}
&\widetilde{I}_\alpha^{\downarrow\downarrow}(AD:BE)_{\rho_{AB}\otimes \rho'_{DE}}
\\
&=-\frac{1}{\beta-1}\log 
\sup\limits_{\mu_{CF}\in \mathcal{S}(CF)}
\inf\limits_{\sigma_{AD}\in \mathcal{S}_{\sim \rho_A\otimes \rho'_D}(AD)}
\widetilde{Q}_\beta(\rho_{AC}\otimes \rho'_{DF}\| \sigma_{AD}^{-1}\otimes \mu_{CF})
\label{eq:add0}
\\
&\geq -\frac{1}{\beta-1}\log 
\sup\limits_{\substack{\mu_{C}\in \mathcal{S}(C), \\ \mu'_{F}\in \mathcal{S}(F)}}
\inf\limits_{\sigma_{AD}\in \mathcal{S}_{\sim \rho_{A}\otimes \rho'_{D}}(AD)}
\widetilde{Q}_\beta(\rho_{AC}\otimes \rho'_{DF}\| \sigma_{AD}^{-1}\otimes \mu_{C}\otimes \mu'_{F})
\label{eq:add1}
\\
&=-\frac{1}{\beta-1}\log 
\sup\limits_{\substack{\mu_{C}\in \mathcal{S}(C), \\ \mu'_{F}\in \mathcal{S}(F)}}
\inf\limits_{\substack{\sigma_{A}\in \mathcal{S}_{\sim\rho_A}(A),\\ \sigma'_{D}\in \mathcal{S}_{\sim\rho_D'}(D)}}
\widetilde{Q}_\beta(\rho_{AC}\| \sigma_{A}^{-1}\otimes \mu_{C})
\cdot 
\widetilde{Q}_\beta(\rho'_{DF}\| {\sigma'_{D}}^{-1}\otimes \mu'_{F})
\label{eq:add2}
\\
&=\widetilde{I}_\alpha^{\downarrow\downarrow}(A:B)_{\rho_{AB}}
+\widetilde{I}_\alpha^{\downarrow\downarrow}(D:E)_{\rho'_{DE}}.
\label{eq:add4}
\end{align}
\eqref{eq:add0} and~\eqref{eq:add4} follow from~\eqref{eq:i-duality2} in duality~(e). 
\eqref{eq:add2} follows from Lemma~\ref{lem:fixedpoint}~(a).

Case 3: $\alpha =1$. Then additivity follows from~(k).
\end{proof}
\begin{proof}[Proof of (j):~\eqref{eq:srmi2-omega1}, \eqref{eq:srmi2-omega2}] 
Let $\alpha \in [\frac{2}{3},\infty]$. 

We will now prove the first equality in~\eqref{eq:srmi2-omega1}. 
For any $n\in \mathbb{N}_{>0}$ 
\begin{subequations}\label{eq:proof-reg-r}
\begin{align}
\widetilde{I}_\alpha^{\downarrow\downarrow}(A:B)_\rho
&=\inf_{\substack{\sigma_A\in \mathcal{S}(A),\\ \tau_B\in \mathcal{S}(B)}} 
\frac{1}{n}\widetilde{D}_\alpha (\rho_{AB}^{\otimes n}\| \sigma_A^{\otimes n}\otimes \tau_B^{\otimes n})
\label{eq:r0s}\\
&\geq \inf_{\substack{\sigma_{A^n}\in \mathcal{S}_{\sym}(A^{\otimes n}),\\ \tau_{B^n}\in \mathcal{S}_{\sym}(B^{\otimes n}) }}
\frac{1}{n}\widetilde{D}_\alpha (\rho_{AB}^{\otimes n}\| \sigma_{A^n}\otimes \tau_{B^n})
\label{eq:r1s}\\
&\geq \frac{1}{n}\widetilde{D}_\alpha (\rho_{AB}^{\otimes n}\| \omega_{A^n}^n\otimes \omega_{B^n}^n)- \frac{\log g_{n,d_A}}{n}- \frac{\log g_{n,d_B}}{n}.
\label{eq:r2s}
\end{align}
\end{subequations}
\eqref{eq:r0s} follows from the additivity of the sandwiched divergence. 
\eqref{eq:r2s} follows from Remark~\ref{rem:universal-state}~(b). 
In the limit $n\rightarrow\infty$, the second and third term in~\eqref{eq:r2s} vanish due to Remark~\ref{rem:universal-state}~(b). 
Hence, 
$\widetilde{I}_\alpha^{\downarrow\downarrow}(A:B)_\rho
\geq \limsup_{n\rightarrow\infty}\frac{1}{n}\widetilde{D}_\alpha (\rho_{AB}^{\otimes n}\| \omega_{A^n}^n \otimes \omega_{B^n}^n )$.

On the other hand, for any $n\in \mathbb{N}_{>0}$
\begin{subequations}\label{eq:proof-reg-omega}
\begin{align}
\frac{1}{n}\widetilde{D}_\alpha (\rho_{AB}^{\otimes n}\| \omega_{A^n}^n\otimes \omega_{B^n}^n)
&\geq \inf_{\substack{\sigma_{A^n}\in \mathcal{S}_{\sym}(A^{\otimes n}),\\ \tau_{B^n}\in \mathcal{S}_{\sym}(B^{\otimes n}) }} 
\frac{1}{n}\widetilde{D}_\alpha (\rho_{AB}^{\otimes n}\| \sigma_{A^n}\otimes \tau_{B^n})
\label{eq:r4s}\\
&\geq \inf_{\substack{\sigma_{A^n}\in \mathcal{S}(A^n),\\ \tau_{B^n}\in\mathcal{S}(B^n)}} 
\frac{1}{n}\widetilde{D}_\alpha (\rho_{AB}^{\otimes n}\| \sigma_{A^n}\otimes \tau_{B^n})
= \frac{1}{n}\widetilde{I}_\alpha^{\downarrow\downarrow}(A^n:B^n)_{\rho^{\otimes n}}
=\widetilde{I}_\alpha^{\downarrow\downarrow}(A:B)_{\rho}.
\label{eq:r5s}
\end{align}
\end{subequations}
\eqref{eq:r4s} follows from Remark~\ref{rem:universal-state}~(a). 
\eqref{eq:r5s} follows from additivity~(d). 
It follows that 
$\liminf_{n\rightarrow\infty}\frac{1}{n}\widetilde{D}_\alpha (\rho_{AB}^{\otimes n}\| \omega_{A^n}^n \otimes \omega_{B^n}^n )\geq \widetilde{I}_\alpha^{\downarrow\downarrow}(A:B)_{\rho}$. 
This completes the proof of the first equality in~\eqref{eq:srmi2-omega1}. 
\sloppy

We will now prove the second equality in~\eqref{eq:srmi2-omega1}. 
For any $n\in \mathbb{N}_{>0}$
\begin{subequations}\label{eq:proof-regularization-p}
\begin{align}
\frac{1}{n}\widetilde{D}_\alpha(\rho_{AB}^{\otimes n}\| \omega_{A^n}^n\otimes \omega_{B^n}^n)
&\geq \frac{1}{n}\widetilde{D}_\alpha(\mathcal{P}_{\omega_{A^n}^n\otimes \omega_{B^n}^n}(\rho_{AB}^{\otimes n})\| \omega_{A^n}^n\otimes \omega_{B^n}^n)
\label{eq:r-1}\\
&\geq \frac{1}{n}\widetilde{D}_\alpha(\rho_{AB}^{\otimes n}\| \omega_{A^n}^n\otimes \omega_{B^n}^n) 
-\frac{2}{n}\log \lvert\spec (\omega_{A^n}^n\otimes \omega_{B^n}^n)\rvert
\label{eq:r-2}\\
&\geq \frac{1}{n}\widetilde{D}_\alpha(\rho_{AB}^{\otimes n}\| \omega_{A^n}^n\otimes \omega_{B^n}^n) 
-2(d_A-1)\frac{\log (n+1)}{n}-2(d_B-1)\frac{\log (n+1)}{n}.
\label{eq:r-3}
\end{align}
\end{subequations}
\eqref{eq:r-1} follows from the data-processing inequality for the sandwiched divergence.
\eqref{eq:r-2} follows from~\cite[Lemma~3]{hayashi2016correlation}.
\eqref{eq:r-3} holds because 
\begin{align}
\lvert\spec (\omega_{A^n}^n\otimes \omega_{B^n}^n)\rvert
&\leq \lvert\spec (\omega_{A^n}^n)\rvert\cdot \lvert\spec (\omega_{B^n}^n)\rvert
\leq (n+1)^{d_A-1}(n+1)^{d_B-1},
\end{align}
where the last inequality follows from Remark~\ref{rem:universal-state}~(e). 
Taking the limit $n\rightarrow\infty$ of~\eqref{eq:proof-regularization-p} implies the second equality in~\eqref{eq:srmi2-omega1}.

We will now prove~\eqref{eq:srmi2-omega2}. For any $n\in \mathbb{N}_{>0}$
\begin{align}
\widetilde{I}_\alpha^{\downarrow\downarrow}(A:B)_\rho
&\geq \inf_{\substack{\sigma_{A^n}\in \mathcal{S}_{\sym}(A^{\otimes n}),\\ \tau_B\in \mathcal{S}_{\sym}(B^{\otimes n})}}
\frac{1}{n}\widetilde{D}_\alpha(\rho_{AB}^{\otimes n}\| \sigma_{A^n}\otimes\tau_{B^n})
\label{eq:rr-1}\\
&\geq \inf_{\substack{\sigma_{A^n}\in \mathcal{S}(A^n),\\ \tau_B\in \mathcal{S}(B^n)}}
\frac{1}{n}\widetilde{D}_\alpha(\rho_{AB}^{\otimes n}\| \sigma_{A^n}\otimes\tau_{B^n})
=\frac{1}{n}\widetilde{I}_\alpha^{\downarrow\downarrow}(A^n:B^n)_{\rho^{\otimes n}}
=\widetilde{I}_\alpha^{\downarrow\downarrow}(A:B)_\rho.
\label{eq:rr-2}
\end{align}
\eqref{eq:rr-1} follows from~\eqref{eq:r1s}. 
\eqref{eq:rr-2} follows from additivity~(d). 
This proves the assertion in~\eqref{eq:srmi2-omega2}.
\end{proof}
\begin{proof}[Proof of (m)] 
The continuity of $\widetilde{I}_\alpha^{\downarrow\downarrow} (A:B)_\rho$ on $\alpha\in (0,1)$ and on $\alpha\in [1,\infty]$ follows from the continuity in $\alpha$ of the sandwiched divergence.  
It remains to prove left-continuity at $\alpha=1$. 
We have for any $n\in \mathbb{N}_{>0}$
\begin{align}
\widetilde{D}_1 (\rho_{AB}^{\otimes n}\| \omega_{A^n}^n\otimes\omega_{B^n}^n)
-\frac{\log g_{n,d_A}}{n}-\frac{\log g_{n,d_B}}{n}
\leq \lim_{\alpha\rightarrow 1^-}\widetilde{I}_\alpha^{\downarrow\downarrow}(A:B)_\rho
\leq \widetilde{I}_1^{\downarrow\downarrow}(A:B)_\rho.
\end{align}
The first inequality follows from~\eqref{eq:proof-reg-r}, 
and the second inequality follows from monotonicity in $\alpha$~(l). 
By taking the limit $n\rightarrow\infty$, it follows that 
$\lim_{\alpha\rightarrow 1^-}\widetilde{I}_\alpha^{\downarrow\downarrow}(A:B)_\rho= \widetilde{I}_1^{\downarrow\downarrow}(A:B)_\rho$ 
due to~\eqref{eq:srmi2-omega1} in~(j) and Remark~\ref{rem:universal-state}~(b). 
\end{proof}
\begin{proof}[Proof of (h)] 
$I_\alpha^{\downarrow\downarrow}(A:B)_\rho\geq \widetilde{I}_\alpha^{\downarrow\downarrow}(A:B)_\rho$ for all $\alpha\in (0,\infty)$ 
follows from~\eqref{eq:alt}.

Let $\ket{\rho}_{ABC}$ be such that $\tr_C[\proj{\rho}_{ABC}]=\rho_{AB}$. 
Let $\alpha\in (\frac{1}{2},\infty)$. 

Case 1: $\alpha\in (\frac{1}{2},1)$. Then $\frac{\alpha}{2\alpha-1}\in (1,\infty)$. 
By duality~(e), 
\begin{align}
\widetilde{I}_\alpha^{\downarrow\downarrow}(A:B)_\rho
\geq\inf_{\sigma_A\in \mathcal{S}_{\not\perp \rho_A}(A)} 
-\frac{1}{\frac{\alpha}{2\alpha -1}-1}\log \widetilde{Q}_{\frac{\alpha}{2\alpha -1}}(\rho_{AC}\| \sigma_A^{-1}\otimes \rho_C)
=I_{2-\frac{1}{\alpha}}^{\downarrow\downarrow}(A:B)_\rho.
\end{align}
The last equality follows from the duality of the doubly minimized PRMI, see Theorem~\ref{thm:prmi2}. 

Case 2: $\alpha\in (1,\infty)$. Then $\frac{\alpha}{2\alpha-1}\in (\frac{1}{2},1)$. 
By duality~(e), 
\begin{align}
\widetilde{I}_\alpha^{\downarrow\downarrow}(A:B)_\rho
\geq\inf_{\substack{\sigma_A\in \mathcal{S}(A):\\ \rho_A\ll\sigma_A }} 
-\frac{1}{\frac{\alpha}{2\alpha -1}-1}\log \widetilde{Q}_{\frac{\alpha}{2\alpha -1}}(\rho_{AC}\| \sigma_A^{-1}\otimes \rho_C)
=I_{2-\frac{1}{\alpha}}^{\downarrow\downarrow}(A:B)_\rho.
\end{align}
The last equality follows from the duality of the doubly minimized PRMI, see Theorem~\ref{thm:prmi2}. 

Case 3: $\alpha\in \{\frac{1}{2},1\}$. Then the assertion follows from the previous cases by continuity in $\alpha$~(m).
\end{proof}
\begin{proof}[Proof of (o)] 
Convexity is inherited from the sandwiched divergence because, according to the first equality in~\eqref{eq:srmi2-omega1} in~(j), $(\alpha-1)\widetilde{I}_\alpha^{\downarrow\downarrow}(A:B)_\rho$ 
is the pointwise limit of a sequence of functions that are convex in $\alpha$.
\end{proof}
\begin{proof}[Proof of (n)] 
Let us define the following two functions. 
\begin{align}
f:(1,\infty)\rightarrow \mathbb{R},\quad \alpha
&\mapsto \widetilde{I}_\alpha^{\downarrow\downarrow}(A:B)_\rho
\\
g:(1,\infty)\rightarrow \mathbb{R},\quad \alpha
&\mapsto (\alpha -1)\widetilde{I}_\alpha^{\downarrow\downarrow}(A:B)_\rho
\end{align}
By (o), $g$ is convex. Thus, the left and right derivative of $g$ exist at all points of its domain and 
$\frac{\partial}{\partial \alpha^-}g(\alpha)\leq \frac{\partial}{\partial \alpha^+}g(\alpha)$. 
We have $f(\alpha)=\frac{1}{\alpha-1}g(\alpha)$ for all $\alpha\in(1,\infty)$. 
Hence, for any $\alpha\in (1,\infty)$
\begin{subequations}\label{eq:proof-srmi2-diff0}
\begin{align}
\frac{\partial}{\partial \alpha^-}\widetilde{I}_\alpha^{\downarrow\downarrow}(A:B)_\rho
=\frac{\partial}{\partial \alpha^-}f(\alpha)
&=-\frac{1}{(\alpha-1)^2}g(\alpha)+\frac{1}{\alpha -1}\frac{\partial}{\partial \alpha^-}g(\alpha),
\\
&\leq  -\frac{1}{(\alpha-1)^2}g(\alpha)+\frac{1}{\alpha -1}\frac{\partial}{\partial \alpha^+}g(\alpha)
=\frac{\partial}{\partial \alpha^+}f(\alpha)
=\frac{\partial}{\partial \alpha^+}\widetilde{I}_\alpha^{\downarrow\downarrow}(A:B)_\rho.
\end{align}
\end{subequations}
Let $\alpha\in (1,\infty)$ and let 
$(\sigma_A,\tau_B)\in \argmin_{(\sigma_A',\tau_B')\in \mathcal{S}(A)\times\mathcal{S}(B)} 
\widetilde{D}_\alpha(\rho_{AB}\| \sigma_A'\otimes \tau_B')$ be arbitrary but fixed. 
Then 
\begin{subequations}\label{eq:proof-srmi2-diff}
\begin{align}
\frac{\partial}{\partial \alpha^+}\widetilde{I}_\alpha^{\downarrow\downarrow}(A:B)_\rho
&=\lim_{\varepsilon\rightarrow 0^+} \frac{1}{\varepsilon} (\widetilde{I}_{\alpha+\varepsilon}^{\downarrow\downarrow}(A:B)_\rho -\widetilde{I}_{\alpha}^{\downarrow\downarrow}(A:B)_\rho )
\\
&\leq \lim_{\varepsilon\rightarrow 0^+} \frac{1}{\varepsilon} (\widetilde{D}_{\alpha+\varepsilon}(\rho_{AB}\| \sigma_A\otimes \tau_B) -\widetilde{D}_{\alpha}(\rho_{AB}\| \sigma_A\otimes \tau_B) )
\\
&=\frac{\partial}{\partial \alpha}\widetilde{D}_{\alpha}(\rho_{AB}\| \sigma_A\otimes \tau_B)
\label{eq:proof-srmi2-d1}\\
&=\lim_{\varepsilon\rightarrow 0^-} \frac{1}{\varepsilon} (\widetilde{D}_{\alpha+\varepsilon}(\rho_{AB}\| \sigma_A\otimes \tau_B) -\widetilde{D}_{\alpha}(\rho_{AB}\| \sigma_A\otimes \tau_B) )
\label{eq:proof-srmi2-d2}\\
&\leq \lim_{\varepsilon\rightarrow 0^-} \frac{1}{\varepsilon} (\widetilde{I}_{\alpha+\varepsilon}^{\downarrow\downarrow}(A:B)_\rho -\widetilde{I}_{\alpha}^{\downarrow\downarrow}(A:B)_\rho )
=\frac{\partial}{\partial \alpha^-}\widetilde{I}_\alpha^{\downarrow\downarrow}(A:B)_\rho.
\end{align}
\end{subequations}
\eqref{eq:proof-srmi2-d1} and~\eqref{eq:proof-srmi2-d2} follow from the differentiability of the sandwiched divergence. 
\eqref{eq:proof-srmi2-diff0} and \eqref{eq:proof-srmi2-diff} imply that 
the left and right derivative of $f$ coincide, so $f$ is differentiable and~\eqref{eq:srmi2-diff} holds for any $\alpha \in (1,\infty)$.

Next, we will show that $f$ is \emph{continuously} differentiable. 
Since $g(\alpha)=(\alpha-1)f(\alpha)$, $g$ is the product of two differentiable functions, so $g$ is also differentiable.
By convexity of $g$, this implies that $g$ is continuously differentiable.
By the product rule, it follows that $f$ is continuously differentiable. 

It remains to prove the assertion regarding the right derivative of $\widetilde{I}_\alpha^{\downarrow\downarrow}(A:B)_\rho$ at $\alpha=1$. 
We have 
\begin{align}
\frac{1}{2}V(A:B)_\rho
&=\frac{\partial}{\partial \alpha^+}I_{\alpha}^{\downarrow\downarrow}(A:B)_\rho|_{\alpha=1}
\label{eq:srmi2-diff0}\\
&=\lim_{\varepsilon\rightarrow 0^+}\frac{1}{\varepsilon} (I_{2-\frac{1}{1+\varepsilon}}^{\downarrow\downarrow}(A:B)_\rho-I_{1}^{\downarrow\downarrow}(A:B)_\rho)
\label{eq:srmi2-diff1}\\
&\leq \liminf_{\varepsilon\rightarrow 0^+}\frac{1}{\varepsilon} (\widetilde{I}_{1+\varepsilon}^{\downarrow\downarrow}(A:B)_\rho-\widetilde{I}_{1}^{\downarrow\downarrow}(A:B)_\rho)
\label{eq:srmi2-diff2}\\
&\leq \limsup_{\varepsilon\rightarrow 0^+}\frac{1}{\varepsilon} (\widetilde{I}_{1+\varepsilon}^{\downarrow\downarrow}(A:B)_\rho-\widetilde{I}_{1}^{\downarrow\downarrow}(A:B)_\rho)
\label{eq:srmi2-diff3}\\
&\leq \limsup_{\varepsilon\rightarrow 0^+}\frac{1}{\varepsilon} (\widetilde{D}_{1+\varepsilon}(\rho_{AB}\|\rho_A\otimes \rho_B)-\widetilde{D}_{1}(\rho_{AB}\|\rho_A\otimes \rho_B) )
\label{eq:srmi2-diff4}\\
&=\frac{\mathrm{d}}{\mathrm{d} \alpha}\widetilde{D}_{\alpha}(\rho_{AB}\|\rho_A\otimes \rho_B)|_{\alpha=1}
=\frac{1}{2}V(A:B)_\rho.
\label{eq:srmi2-diff5}
\end{align}
\eqref{eq:srmi2-diff0} has been proved in Theorem~\ref{thm:prmi2}. 
\eqref{eq:srmi2-diff1} holds due to the chain rule; note that the function 
$(-1,\infty)\rightarrow\mathbb{R},\varepsilon\mapsto h(\varepsilon)\coloneqq 2-\frac{1}{1+\varepsilon}$ is such that $h(0)=1$, $h'(\varepsilon)=\frac{1}{(1+\varepsilon)^2}$ and $h'(0)=1$. 
\eqref{eq:srmi2-diff2} follows from~(h). 
\eqref{eq:srmi2-diff4} follows from~(k).
\eqref{eq:srmi2-diff5} follows from the differentiability of the sandwiched divergence, see Remark~\ref{rem:sandwiched-divergence}. 
\end{proof}
\begin{proof}[Proof of (j):~\eqref{eq:srmi2-omega3}] 
Let $t\in [0,\infty)$. 
By Taylor expansion of $\widetilde{I}_\alpha^{\downarrow\downarrow}(A:B)_\rho$ about $\alpha=1$, 
\begin{equation}\label{eq:proof-reg2-taylor}
\widetilde{I}_{1+\frac{t}{\sqrt{n}}}^{\downarrow\downarrow}(A:B)_\rho 
=I(A:B)_\rho+\frac{1}{2}V(A:B)_\rho \frac{t}{\sqrt{n}} +o\left(\frac{t}{\sqrt{n}}\right)
\end{equation}
in the limit where $n\rightarrow\infty$.
For the first term, we used~(k), and for the second term, we used~(n).

The combination of the Taylor expansion in~\eqref{eq:proof-reg2-taylor} with~\eqref{eq:proof-reg-omega} and~\eqref{eq:proof-regularization-p} implies that 
\begin{align}
&t\sqrt{n} \left(\frac{1}{n} D_{1+\frac{t}{\sqrt{n}}}(\mathcal{P}_{\omega_{A^n}^n\otimes \omega_{B^n}^n}(\rho_{AB}^{\otimes n})\| \omega_{A^n}^n\otimes \omega_{B^n}^n) 
- I(A:B)_\rho \right)
\\
&\geq \frac{t^2}{2}V(A:B)_\rho
+t\sqrt{n}\, o\left(\frac{t}{\sqrt{n}}\right)-\frac{2t}{\sqrt{n}}\log( (n+1)^{d_A-1}(n+1)^{d_B-1}).
\end{align}
Hence,  
$\liminf_{n\rightarrow\infty}t\sqrt{n}\left(\frac{1}{n} D_{1+\frac{t}{\sqrt{n}}}(\mathcal{P}_{\omega_{A^n}^n\otimes \omega_{B^n}^n}(\rho_{AB}^{\otimes n})\| \omega_{A^n}^n\otimes \omega_{B^n}^n) 
- I(A:B)_\rho \right)
\geq \frac{t^2}{2}V(A:B)_\rho$.

The combination of the Taylor expansion in~\eqref{eq:proof-reg2-taylor} with~\eqref{eq:proof-reg-r} and~\eqref{eq:proof-regularization-p} implies that 
\begin{align}
&t\sqrt{n} \left(\frac{1}{n} D_{1+\frac{t}{\sqrt{n}}}(\mathcal{P}_{\omega_{A^n}^n\otimes \omega_{B^n}^n}(\rho_{AB}^{\otimes n})\| \omega_{A^n}^n\otimes \omega_{B^n}^n)
- I(A:B)_\rho \right)
\\
&\leq \frac{t^2}{2}V(A:B)_\rho
+t\sqrt{n}\, o\left(\frac{t}{\sqrt{n}}\right)+\frac{t}{\sqrt{n}}\log (g_{n,d_A} g_{n,d_B}).
\end{align}
Hence,  
$\limsup_{n\rightarrow\infty}t\sqrt{n}\left(\frac{1}{n} D_{1+\frac{t}{\sqrt{n}}}(\mathcal{P}_{\omega_{A^n}^n\otimes \omega_{B^n}^n}(\rho_{AB}^{\otimes n})\| \omega_{A^n}^n\otimes \omega_{B^n}^n) 
- I(A:B)_\rho \right)
\leq \frac{t^2}{2}V(A:B)_\rho$ due to Remark~\ref{rem:universal-state}~(b).
\end{proof}
\begin{proof}[Proof of (q)]
Let $\alpha\in [\frac{1}{2},\infty)$.

Case 1: $\alpha\in (1,\infty)$. 
Let $\beta\coloneqq \frac{\alpha}{2\alpha -1}\in (\frac{1}{2},1)$. 
Let $\hat{\sigma}_A\coloneqq \rho_A^{\frac{\alpha}{3\alpha-2}}/\tr[\rho_A^{\frac{\alpha}{3\alpha-2}}]
=\rho_A^{\frac{\beta}{2-\beta}}/\tr[\rho_A^{\frac{\beta}{2-\beta}}]$. 
Then, 
\begin{align}
\exp((1-\beta)\widetilde{I}_\alpha^{\downarrow\downarrow}(A:B)_\rho )
&=\inf_{\sigma_A\in \mathcal{S}_{\sim \rho_A}(A)} 
\widetilde{Q}_\beta (\rho_{A}\| \sigma_A^{-1})
\label{eq:proof-ac1}\\
&\geq \inf_{\sigma_A\in \mathcal{S}_{\sim \rho_A}(A)} 
Q_\beta (\rho_{A}\| \sigma_A^{-1})
=\inf_{\substack{\sigma \in \mathcal{S}(A):\\ \rho_A\ll \sigma_A}}
\tr[\rho_A^\beta\sigma_A^{\beta-1}]
\label{eq:proof-ac2}\\
&=\big\lVert \rho_A^\beta\big\rVert_{\frac{1}{2-\beta}} 
=\exp((1-\beta )2H_{\frac{\beta}{2-\beta}}(A)_\rho)
=\exp((1-\beta )2H_{\frac{\alpha}{3\alpha -2}}(A)_\rho)
\label{eq:proof-ac3}\\
&=\widetilde{Q}_\beta (\rho_{A}\| \hat{\sigma}_A^{-1})
\geq \inf_{\sigma_A\in \mathcal{S}_{\sim\rho_A}(A)} 
\widetilde{Q}_\beta (\rho_{A}\| \sigma_A^{-1}).
\end{align}
\eqref{eq:proof-ac1} follows from duality~(e). 
\eqref{eq:proof-ac2} follows from~\eqref{eq:alt}.
\eqref{eq:proof-ac3} follows from the variational characterization of the Schatten quasi-norms~\cite[Lemma~3.2]{tomamichel2016quantum}.

Case 2: $\alpha\in (\frac{2}{3},1)$. 
Let $\beta\coloneqq \frac{\alpha}{2\alpha -1}\in (1,2)$. 
Let $\hat{\sigma}_A\coloneqq \rho_A^{\frac{\alpha}{3\alpha-2}}/\tr[\rho_A^{\frac{\alpha}{3\alpha-2}}]
=\rho_A^{\frac{\beta}{2-\beta}}/\tr[\rho_A^{\frac{\beta}{2-\beta}}]$. 
Then,
\begin{align}
\exp((1-\beta)\widetilde{I}_\alpha^{\downarrow\downarrow}(A:B)_\rho )
&=\sup_{\sigma_A\in \mathcal{S}(A)} \widetilde{Q}_\beta (\rho_{A}\| \sigma_A^{-1})
\label{eq:proof-ac-1}\\
&\leq \sup_{\sigma_A\in \mathcal{S}(A)} Q_\beta (\rho_{A}\| \sigma_A^{-1})
=\sup_{\sigma_A \in \mathcal{S}(A)}\tr[\rho_A^\beta\sigma_A^{\beta-1}]
\label{eq:proof-ac-2}
\\
&=\big\lVert \rho_A^\beta\big\rVert_{\frac{1}{2-\beta}}
=\exp((1-\beta )2H_{\frac{\beta}{2-\beta}}(A)_\rho)
=\exp((1-\beta )2H_{\frac{\alpha}{3\alpha -2}}(A)_\rho)
\label{eq:proof-ac-3}
\\
&= \widetilde{Q}_\beta (\rho_{A}\| \hat{\sigma}_A^{-1})
\leq \sup_{\sigma_A\in \mathcal{S}(A)} \widetilde{Q}_\beta (\rho_{A}\| \sigma_A^{-1}).
\end{align}
\eqref{eq:proof-ac-1} follows from duality~(e). 
\eqref{eq:proof-ac-2} follows from~\eqref{eq:alt}.
\eqref{eq:proof-ac-3} follows from the variational characterization of the Schatten norms~\cite[Lemma~3.2]{tomamichel2016quantum}.

Case 3: $\alpha\in (\frac{1}{2},\frac{2}{3}]$. 
Let $\beta\coloneqq \frac{\alpha}{2\alpha -1}\in [2,\infty)$. 
Let $\ket{\hat{\sigma}}_A$ be a unit eigenvector of $\rho_A$ corresponding to its largest eigenvalue.
Then, 
\begin{align}
\exp((1-\beta)\widetilde{I}_\alpha^{\downarrow\downarrow}(A:B)_\rho )
&=\sup_{\sigma_A\in \mathcal{S}(A)} \widetilde{Q}_\beta (\rho_{A}\| \sigma_A^{-1})
\label{eq:proof-ac_1}\\
&\leq \sup_{\sigma_A\in \mathcal{S}(A)} Q_\beta (\rho_{A}\| \sigma_A^{-1})
=\sup_{\sigma_A \in \mathcal{S}(A)}\tr[\rho_A^\beta\sigma_A^{\beta-1}]
\label{eq:proof-ac_2}\\
&=\sup_{\substack{\ket{\sigma}_A\in A:\\ \brak{\sigma}_A=1}}
\tr[\rho_A^\beta\proj{\sigma}_A]
= \lVert \rho_A^{\beta} \rVert_\infty
= \lVert \rho_A\rVert_\infty^{\beta}
=\exp(-\beta H_\infty(A)_\rho)
\label{eq:proof-ac_3}\\
&= \widetilde{Q}_\beta (\rho_{A}\| \proj{\hat{\sigma}}_A^{-1})
\leq \sup_{\sigma_A\in \mathcal{S}(A)}
\widetilde{Q}_\beta (\rho_{A}\| \sigma_A^{-1}).
\label{eq:proof-ac_4}
\end{align}
\eqref{eq:proof-ac_1} follows from duality~(e). 
\eqref{eq:proof-ac_2} follows from~\eqref{eq:alt}.
\eqref{eq:proof-ac_3} holds because $\beta-1\in [1,\infty)$.

Case 4: $\alpha\in \{\frac{1}{2},1\}$. 
The assertion follows from the other cases by continuity in $\alpha$~(m).
\end{proof}
\begin{proof}[Proof of (r)] 
Let $\alpha\in (0,\infty)$.

Case 1: $\alpha\in [\frac{1}{2},\infty)$. By (q), 
$\widetilde{I}_\alpha^{\downarrow\downarrow}(A:B)_{\proj{\rho}}=2H_{\frac{\alpha}{3\alpha -2}}(A)_\rho$
if $\alpha\in (\frac{2}{3},\infty)$, and 
$\widetilde{I}_\alpha^{\downarrow\downarrow}(A:B)_{\proj{\rho}}=\frac{\alpha}{1-\alpha} H_{\infty}(A)_\rho$
if $\alpha\in [\frac{1}{2},\frac{2}{3}]$. 
The assertion regarding $\sigma_A$ and $\tau_B$ can be verified by inserting $\sigma_A$ and $\tau_B$ into~\eqref{eq:srmi2-pure}.

Case 2: $\alpha\in (0,\frac{1}{2})$. Then $\frac{1-\alpha}{\alpha}\in (1,\infty)$. 
By the expression of the minimized generalized SRMI for pure states, see Proposition~\ref{prop:gen-srmi}~(q), 
\begin{align}
\exp\left( \frac{\alpha-1}{\alpha} \widetilde{I}_\alpha^{\downarrow\downarrow}(A:B)_{\proj{\rho}} \right)
=\sup_{\sigma_A\in \mathcal{S}(A)} \lVert \rho_A^{\frac{1}{2}} \sigma_A^{\frac{1-\alpha}{\alpha}}\rho_A^{\frac{1}{2}} \rVert_\infty
= \lVert \rho_A \rVert_\infty
=\exp(-H_\infty(A)_\rho).
\end{align}
The assertion regarding $\sigma_A$ and $\tau_B$ can be verified by inserting $\sigma_A$ and $\tau_B$ into~\eqref{eq:srmi2-pure}.
\end{proof}
\begin{proof}[Proof of (g)] 
Let $\ket{\rho}_{ABC}\in ABC$ be such that $\tr_C[\proj{\rho}_{ABC}]=\rho_{AB}$. 

Let $\alpha\in [\frac{2}{3},\infty]$. Then $\frac{\alpha}{3\alpha-2}\in [\frac{1}{3},\infty]$ and $\frac{\alpha}{2\alpha-1}\in [\frac{1}{2},2]$.
By (b) and the expression for pure states~(r), 
\begin{align}\label{eq:proof-srmi2-bounds}
\widetilde{I}_\alpha^{\downarrow\downarrow}(A:B)_\rho
\leq \widetilde{I}_\alpha^{\downarrow\downarrow}(A:BC)_{\proj{\rho}}
&=2H_{\frac{\alpha}{3\alpha -2}}(A)_\rho
\leq 2H_{1/3}(A)_\rho \leq 2\log r_A.
\end{align}

First, suppose $\spec(\rho_A)\subseteq \{0,1/r_A\}$ and $H(A|B)_\rho=-\log r_A$. 
Then $\rho_A=\rho_A^0/r_A$ and $\rho_{AC}=\rho_A\otimes \rho_C$. 
By (q), $\widetilde{I}_\alpha^{\downarrow\downarrow}(A:B)_\rho =2H_{\frac{\alpha}{3\alpha -2}}(A)_\rho=2\log r_A$.

Now, suppose $\widetilde{I}_\alpha^{\downarrow\downarrow}(A:B)_\rho=2\log r_A$ instead. 
Then the inequalities in~\eqref{eq:proof-srmi2-bounds} must be saturated. 
Hence, $H_{1/3}(A)_\rho=\log r_A$, which implies that $\spec(\rho_A)\subseteq\{0,1/r_A\}$.
By duality~(e), 
\begin{align}
\widetilde{I}_\alpha^{\downarrow\downarrow}(A:B)_\rho 
\leq -\widetilde{I}_{\frac{\alpha}{2\alpha -1}}^{\downarrow}(\rho_{AC}\| \rho_A^{-1})
&=-\min_{\mu_C\in \mathcal{S}(C)} \widetilde{D}_{\frac{\alpha}{2\alpha -1}} (\rho_{AC}\| \rho_A^{-1}\otimes \mu_C)
\\
&=2\log r_A-\min_{\mu_C\in \mathcal{S}(C)} \widetilde{D}_{\frac{\alpha}{2\alpha -1}} (\rho_{AC}\| \rho_A\otimes \mu_C) 
\leq 2\log r_A,
\label{eq:proof-srmi2-bounds-min}
\end{align}
where the last inequality follows from the non-negativity of the sandwiched divergence. 
Since  $\widetilde{I}_\alpha^{\downarrow\downarrow}(A:B)_\rho=2\log r_A$, the inequality in~\eqref{eq:proof-srmi2-bounds-min} must be saturated, so 
$\widetilde{D}_{\frac{\alpha}{2\alpha -1}} (\rho_{AC}\| \rho_A\otimes \mu_C) =0$ for some $\mu_C\in \mathcal{S}(C)$.
By positive definiteness of the sandwiched divergence, we have $\rho_{AC}=\rho_A\otimes \mu_C$.
Therefore, $H(A|B)_\rho=-H(A|C)_\rho=-H(A)_\rho=-\log r_A$.

Let now $\alpha\in [\frac{1}{2},\frac{2}{3})$. 
By~(b) and the expression for pure states~(r), 
\begin{align}\label{eq:proof-bounds-a}
\widetilde{I}_\alpha^{\downarrow\downarrow}(A:B)_\rho
\leq \widetilde{I}_\alpha^{\downarrow\downarrow}(A:BC)_{\proj{\rho}}
&=\frac{\alpha}{1-\alpha}H_\infty(A)_\rho
\leq\frac{\alpha}{1-\alpha}\log r_A
<2\log r_A .
\end{align}

Let now $\alpha\in (0,\frac{1}{2})$. By monotonicity in $\alpha$~(l) and~\eqref{eq:proof-bounds-a}, 
\begin{align}
\widetilde{I}_\alpha^{\downarrow\downarrow}(A:B)_\rho
\leq \widetilde{I}_{1/2}^{\downarrow\downarrow}(A:B)_\rho\leq H_\infty(A)_\rho\leq 2H_{1/3}(A)_\rho.
\end{align}
\end{proof}
\begin{proof}[Proof of (s)] 
Let $\alpha\in [\frac{1}{2},\infty]$. 
Let $(\sigma_A,\tau_B)\in\mathcal{S}(A) \times \mathcal{S}(B)$ be such that 
$\widetilde{I}_\alpha^{\downarrow\downarrow}(A:B)_\rho = \widetilde{D}_\alpha(\rho_{AB}\| \sigma_A\otimes \tau_B)$. 
Let 
\begin{equation}
\sigma_A'\coloneqq \sum_{x\in \mathcal{X}}\proj{a_x}_{A}\,\sigma_A\,\proj{a_x}_{A}\in \mathcal{S}(A), 
\qquad 
\tau_B'\coloneqq \sum_{y\in \mathcal{Y}}\proj{b_y}_{B}\,\tau_B\,\proj{b_y}_{B}\in \mathcal{S}(B).
\end{equation}
By the data-processing inequality for the sandwiched divergence, see Remark~\ref{rem:sandwiched-divergence},
\begin{align}
\widetilde{D}_\alpha(\rho_{AB}\| \sigma_A\otimes \tau_B)
&\geq 
\widetilde{D}_\alpha (\sum_{\substack{x\in \mathcal{X}, \\ y\in \mathcal{Y}}}
\proj{a_x,b_y}_{AB}\, \rho_{AB}\, \proj{a_x,b_y}_{AB}
\|  \sigma_A'\otimes\tau_B')
\\
&=\widetilde{D}_\alpha(\rho_{AB}\| \sigma_A'\otimes \tau_B')
=D_\alpha (\rho_{AB}\| \sigma_A'\otimes \tau_B')
\geq \widetilde{I}_\alpha^{\downarrow\downarrow}(A:B)_\rho.
\end{align}
Therefore,
\begin{align}
\widetilde{I}_\alpha^{\downarrow\downarrow}(A:B)_\rho 
=\min_{\substack{\sigma_A\in \mathcal{S}(A):\\ \exists (s_x)_{x\in \mathcal{X}}\in [0,1]^{\times |\mathcal{X}|}:\\ \sigma_A=\sum\limits_{x\in \mathcal{X}}s_x \proj{a_x}_A }}
\min_{\substack{\tau_B\in \mathcal{S}(B):\\ \exists (t_y)_{y\in \mathcal{Y}}\in [0,1]^{\times |\mathcal{Y}|}:\\ \tau_B=\sum\limits_{y\in \mathcal{Y}}t_y \proj{b_y}_B }} 
D_\alpha(\rho_{AB}\| \sigma_A\otimes \tau_B)
=I_\alpha^{\downarrow\downarrow}(X:Y)_P .
\end{align}
\end{proof}

\bibliographystyle{arxiv_fullname}
\bibliography{bibfile}

\end{document}